\newtheorem{theorem}{Theorem}
\newtheorem{defn}[theorem]{Definition}
\DeclareMathOperator{\argmin}{arg min}
\DeclareMathOperator{\mass}{mass}
\newcommand{\grad}{\nabla}
\newcommand{\RR}{\mathbb{R}}
\newcommand{\NN}{\mathbb{N}}
\newcommand{\ZZ}{\mathbb{Z}}
\newcommand{\TT}{\mathrm{T}}
\newcommand{\CC}{\mathbb{C}}
\newcommand{\cals}{\mathbf{\rho}}
\newcommand{\ud}{\,\mathrm{d}}
\newcommand{\mc}[1]{\mathcal{#1}}
\newcommand{\eps}{\epsilon}
\newcommand{\abs}[1]{\lvert#1\rvert}
\renewcommand{\Re}{\mathfrak{Re}}
\DeclarePairedDelimiter\floor{\lfloor}{\rfloor}
\title[Phase Space Sketching for Crystal Image Analysis]
{Phase Space Sketching for Crystal Image Analysis\\ based
    on Synchrosqueezed Transforms}
\author{Jianfeng Lu} \address{Department of Mathematics, Department of
  Physics, and Department of Chemistry, Duke University, Box 90320, Durham NC 27708, USA}
\email{jianfeng@math.duke.edu}
\author{Haizhao Yang}
\address{Department of Mathematics, National University of Singapore, 10 Lower Kent Ridge Road, 119076, Singapore}
\email{matyh@nus.edu.sg}
\date{\today} \thanks{J.L. was supported in part by the National Science Foundation under awards DMS-1454939 and ACI-1450372. H.Y. thanks the support of the startup grant from National University of Singapore. We would like to thank Yilong Han and Yang Xiang for helpful discussions.}
\begin{document}

\begin{abstract}
    Recent developments of imaging techniques enable researchers to
  visualize materials at the atomic resolution to better understand
  the microscopic structures of materials.  This paper aims at
  automatic and quantitative characterization of potentially
  complicated microscopic crystal images, providing feedback to tweak
  theories and improve synthesis in materials science. As such, an
  efficient phase-space sketching method is proposed to encode
  microscopic crystal images in a translation, rotation, illumination,
  and scale invariant representation, which is also stable with
  respect to small deformations.  Based on the phase-space sketching,
  we generalize our previous analysis framework for crystal images
  with simple structures to those with complicated geometry.
\end{abstract}

\keywords{  Atomic resolution crystal image, phase-space sketching, $2$D
  synchrosqueezed transform, transformation invariance, texture
  classification, image segmentation.}

\subjclass[2000]{65T99,74B20,74E15,74E25}

\maketitle

\section{Introduction}
\label{sec:intro}
% importance and goals of crystal image analysis

Crystal image analysis at the atomic resolution has become an
important research direction in materials science recently
\cite{Berkels:08, Strekalovskiy:11, ElseyWirth:MMS, YangLuYing:2015,
  LuWirthYang:2016}. The advancement of image acquisition techniques
enable researchers to visualize materials at atomic resolution, with
images of clearly visible individual atoms and their types (see Figure
\ref{fig:ex0} (a)) and defects such as dislocations and grain
boundaries (see Figure \ref{fig:ex0} (f)). These high-resolution
images provide unprecedented opportunities to characterize and study
the structure of materials at the microscopic level, which is crucial
for designing new materials with functional properties.

The recognition of important quantities and active mechanisms (e.g.,
dislocations, grain boundaries, grain orientation, deformation,
cracks) in a material requires the use of automatic and quantitative
analysis by computers. Due to the extraordinarily large volume of
measurements and simulations in daily research activities (see Figure
\ref{fig:ex0}), in particular, in the case of analyzing a time series
of crystal images during the dynamic evolution of crystallization
\cite{crystalization1}, crystal melting \cite{melting2,melting5},
solid-solid phase transition \cite{Peng:2014}, and self-assembly
\cite{assembly1} etc., it is impractical to analyze these images
manually. In the case of crystal images with complicated geometry (see
Figure \ref{fig:ex0} (b) for an example), it becomes difficult to
recognize and parametrize the image patterns by visual
inspection. Moreover, it is also difficult, if not impossible, to
measure crystal deformation directly from crystal images by
hand. Therefore, there is a dire need for efficient tools to classify
and analyze atomic resolution crystal images automatically and
quantitatively, with minimal human intervention.

 \begin{figure}
  \begin{center}
  \begin{tabular}{ccc}
   \includegraphics[height=0.65in]{./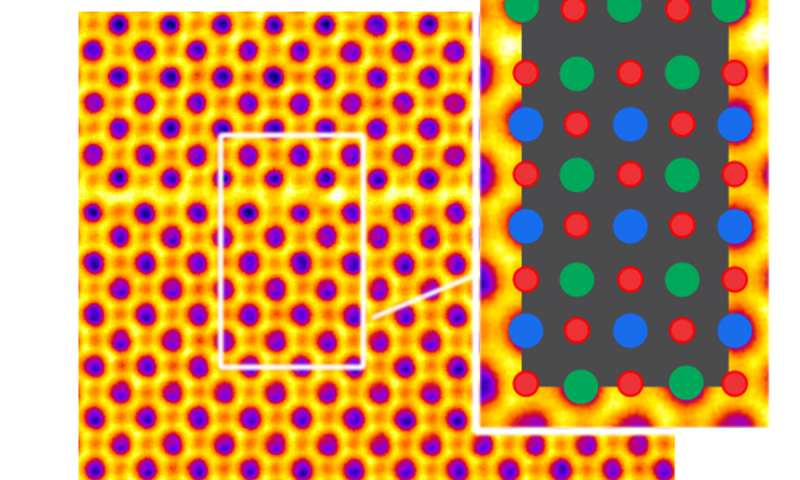}  &   \includegraphics[height=0.65in]{./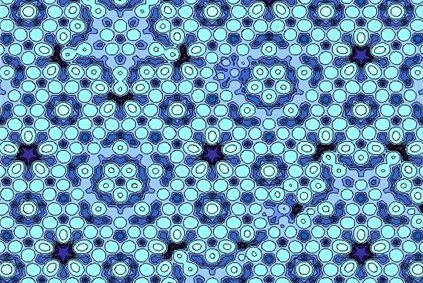} &   \includegraphics[height=0.65in]{./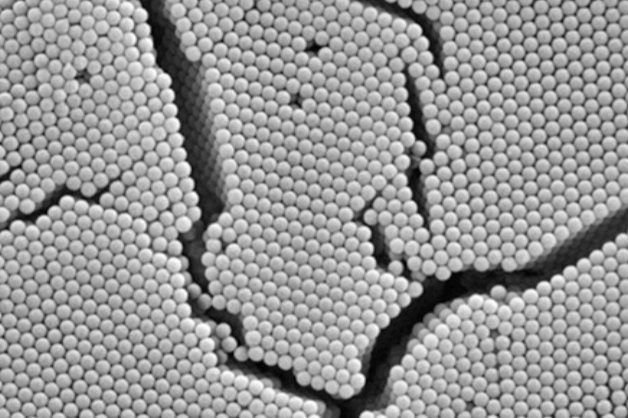}\\
   (a) &(b)&(c)\\
    \includegraphics[height=0.65in]{./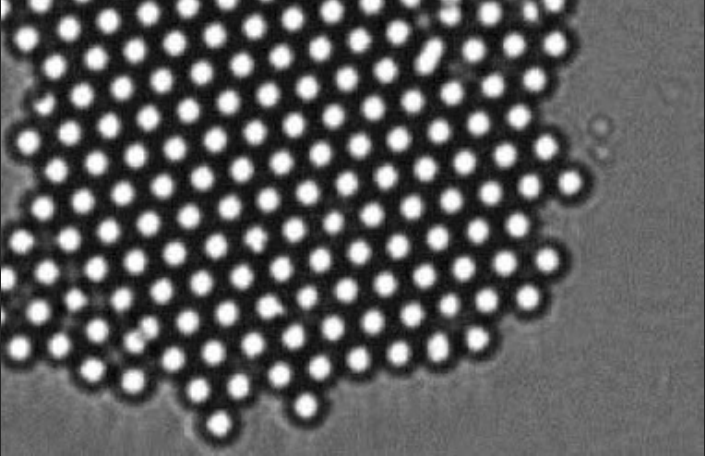} &\includegraphics[height=0.65in]{./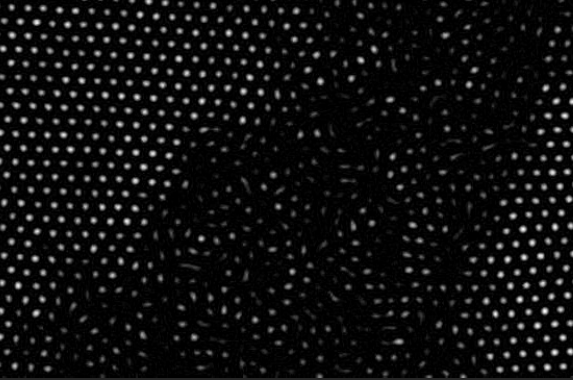}& \includegraphics[height=0.65in]{./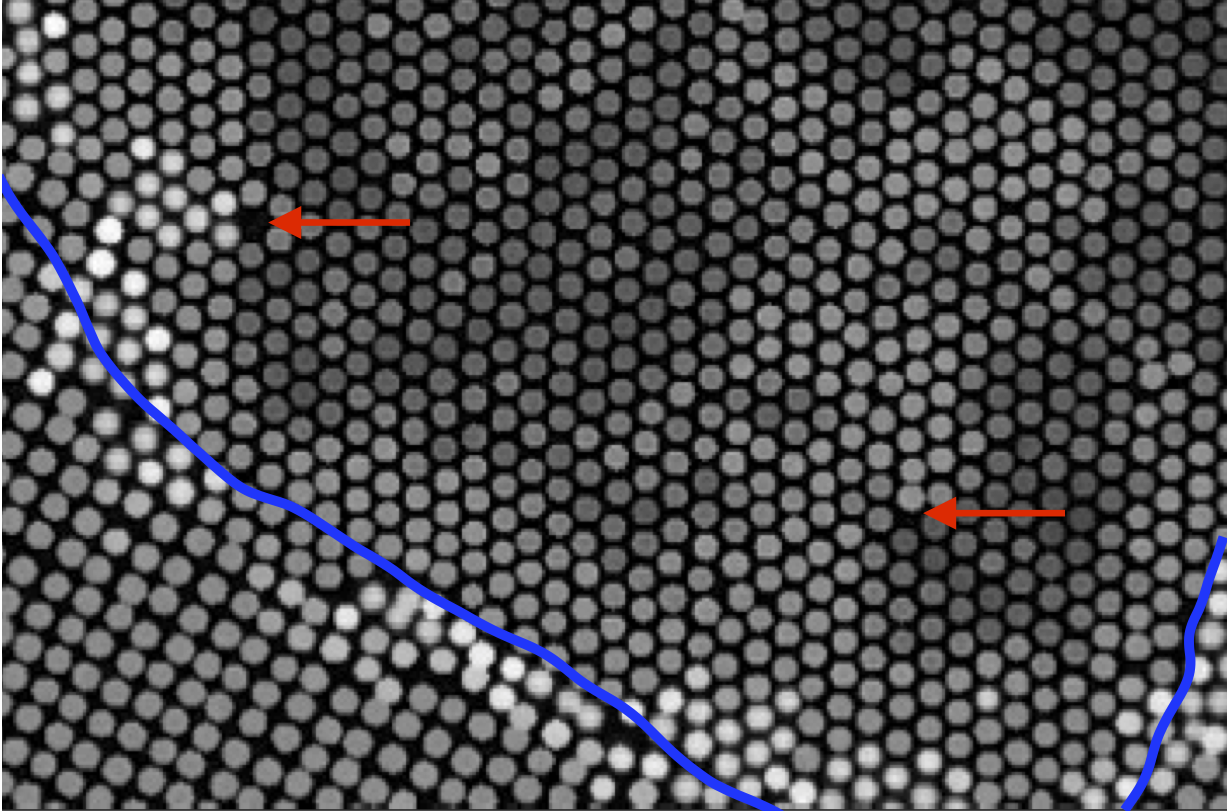}\\
    (d)&(e)&(f)
     \end{tabular}
  \end{center}
  \caption{(a) A colorized sub-Angstrom scanning transmission electron microscope image clearly shows individual atomic columns of strontium (green), titanium (blue), and oxygen (red) \cite{subatom}.  Courtesy of Greg Stone of Pennsylvania State University. (b) Potential energy surface for silver depositing on an aluminium-palladium-manganese (Al-Pd-Mn) quasicrystal surface. Similar to Figure 6 in \cite{PhysRevB.75.064205}. 
  %An atomic resolution image of a non-defective region of Cr$_2$Hf. The hafnium atoms appear yellow and the chromium atoms are red. Courtesy of Sharvan Kumar and Brown University. 
    (c) Colloidal crystal with a crack in self-assembly
    \cite{assembly1}. Courtesy of Aizenberg et al. at Harvard
    University. (d) An optical microscopy image of electric field
    mediated colloidal crystallization in fluid
    \cite{Juarez:2012}. Courtesy of Edwards et al.. (e) Melting
    behaviors of thin crystalline films \cite{melting5}. Courtesy of
    Yilong Han of Hong Kong University of Science and Technology, and
    Arjun Yodh of Pennsylvania State University. (f) A microscopic
    image showing the simulation of nucleation mechanism in
    solid-solid phase transitions \cite{Peng:2014}. Courtesy of Yilong
    Han of Hong Kong University of Science and Technology, and Arjun
    Yodh of Pennsylvania State University. Examples of dislocations
    are pointed out by red arrows and grain boundaries are indicated
    in blue.}
  \label{fig:ex0}
\end{figure}

There have been several types of methods for atomic resolution crystal
image analysis, assuming simple crystal patterns are known a priori,
typically a hexagonal or a square reference lattice in two dimensions
(see Figure \ref{fig:Bravais}). One class of methods tries to estimate
atom positions first (or assume knowledge of the atom positions), and
then compute the local lattice orientation and deformation as well as
defects via identifying the nearest neighbors of each atom
\cite{StukowskiAlbe2:10}. Other methods are based on a local,
direction sensitive frequency analysis, e.g. wavelets
\cite{SingerSinger:06} to segment the crystal image into several
crystal grains and identify their orientations. Another more advanced
class of methods formulates the crystal analysis problem (such as
segmentation) as an optimization problem with fidelity term specially
designed for the local periodic structure of the crystal image
\cite{Berkels:08,Strekalovskiy:11,ElseyWirth:MMS,7477729}.

\begin{figure}[ht!]
  \begin{center}
    \begin{tabular}{c}
      \includegraphics[height=2in]{./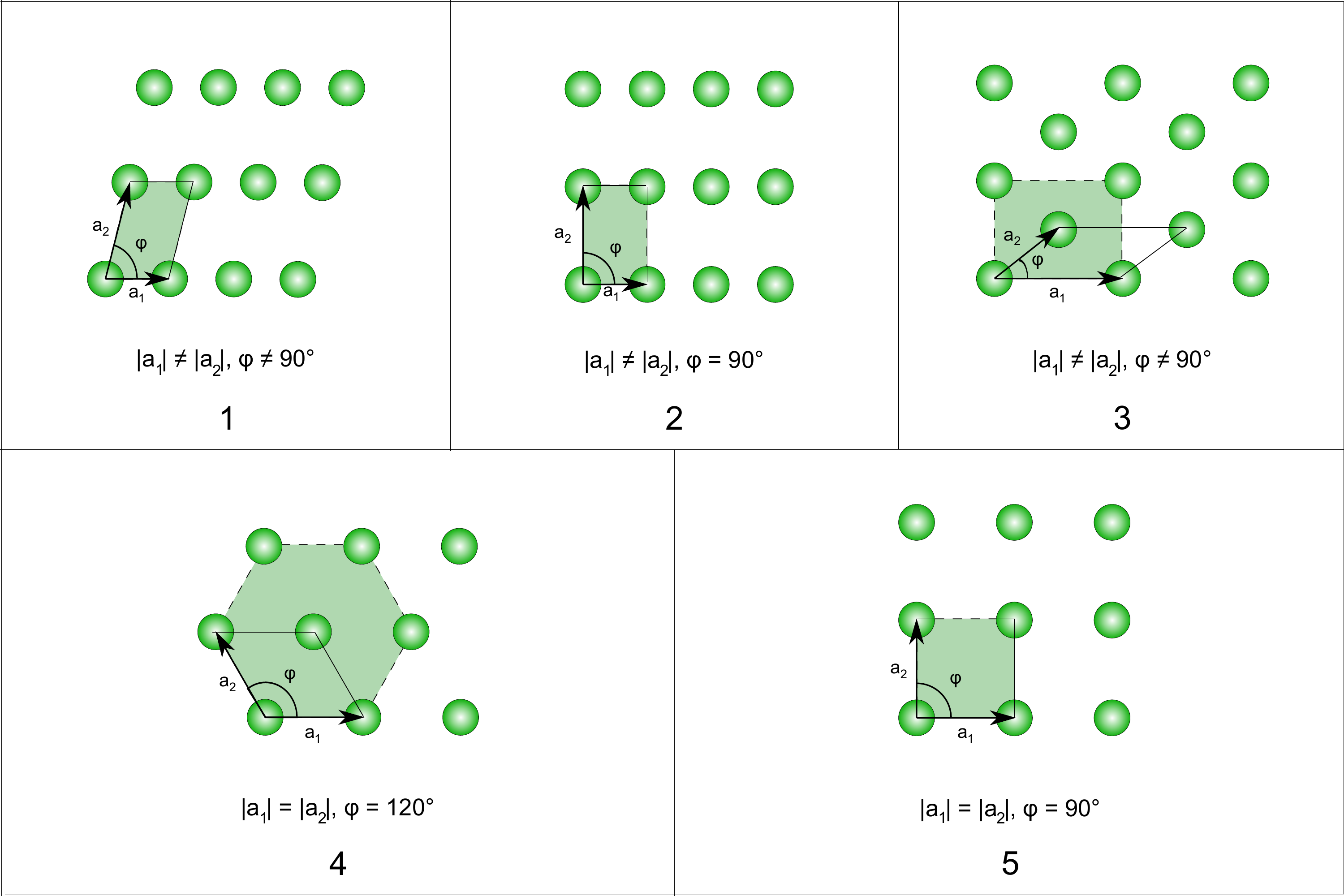}
    \end{tabular}
  \end{center}
  \caption{Five fundamental $2D$ Bravais lattices: 1 oblique, 2 rectangular, 3 centered rectangular (rhombic), 4 hexagonal, and 5 square. Black arrows represent lattice vectors in the Bravais lattices. Courtesy of Wikipedia.}
  \label{fig:Bravais}
\end{figure}

Existing methods however often fall short for crystal image analysis
with a variety of crystal patterns, fine features, and complicated geometry
(see Figure \ref{fig:ex0} for examples). For instance, when the
out-of-focus problem occurs, usually in optical microscopies and
bright-field microscopies, the image intensity at the centers of
atoms might vary a lot over the imaging domain, making it difficult to
determine the reference configuration for atoms (see Figure
\ref{fig:it} (left) for an example).

In this paper, we propose a two step procedure for crystal image
analysis in these challenging scenarios. In the first step, an
efficient phase-space sketching method is used to classify complicated
crystal configurations and determine reference crystal patterns. In
the second step, once the reference crystal patterns are learned from
the first step, a recently developed crystal image analysis method
based on two-dimensional synchrosqueezed transforms
\cite{YangLuYing:2015,LuWirthYang:2016} is applied to identify
dislocations, cracks, grain boundaries, crystal orientation,
deformation, and possibly other useful information.

\begin{figure}
  \begin{center}
  \begin{tabular}{cc}
      \raisebox{0.1in}{    \includegraphics[height=1in]{./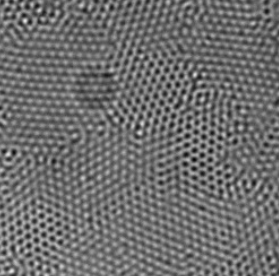}}& \includegraphics[height=1.3in]{./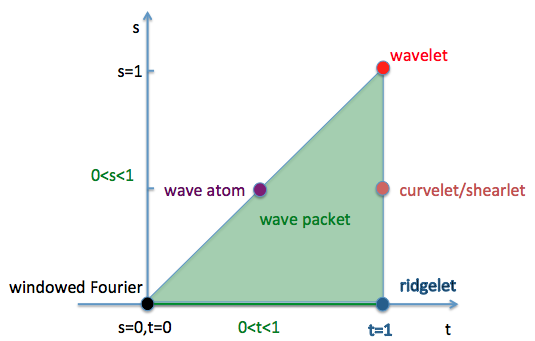}
     \end{tabular}
  \end{center}
  \caption{Left: An example of crystal images with the out-of-focus problem,
    in which an atom might be a black or a white dot in the crystal
    image. It is difficult to determine the position of atoms in this
    case. Courtesy of Yilong Han of Hong Kong University of Science
    and Technology. Right: The relation of windowed Fourier transform, wavelet transform, curvelet/shearlet transform, ridgelet transform, wave atom transform, and the wave packet transform in \cite{YangYing:2013,YangYing:2014}.}
  \label{fig:it}
\end{figure}

A major difficulty of crystal pattern classification comes from the
considerable variability within object classes and the inability of
existing distances to measure image similarities. Part of this
variability is due to grain orientation, crystal deformation, and
defects; another part results from the imaging variation, e.g., the
difference of image illumination, light reflection, and out-of-focus
issues. An ideal transformation for classification should provide a
representation invariant to such uninformative variability. Note that
a representation that is completely invariant with respect to
deformation would not distinguish different Bravais lattices (see
Figure \ref{fig:Bravais}), since the lattices are equivalent up to
affine transforms. The crystal image representation must therefore be
capable of distinguishing different lattices, while being invariant to
small elastic deformation due to external forces on grains.

Texture classification has been extensively studied in the
literature. Translation and rotation invariant representations have
been standard tools: these representations can be constructed with
autoregression methods \cite{4767811}, hidden Markov models
\cite{536891}, local binary patterns \cite{Guo2010706}, Gabor or
wavelet transform modulus with rotation invariant features
\cite{743859}. Scale and affine invariance has been also studied
recently using affine adaption
\cite{Lazebnik:2005:STR:1070616.1070813}, fractal analysis
\cite{Xu:2006:PIT:1153171.1153679}, advanced learning
\cite{Varma:2002:CIM:645317.649321}, and combination of filters
\cite{Zhang2003657}. More recently, deep convolution networks
\cite{CPA:CPA21413} together with advanced learning techniques
\cite{6522407} have been applied to design deformation invariant
representations \cite{Sifre:2013}. Combined with advanced learning techniques, the convolution neural network (CNN) and the scattering transform can provide deformation invariance and could be used for the problem discussed in this paper. To achieve transformation invariance using deep learning, most current methods usually make use of dataset augmentation; but this requires larger number of model parameters and training data, because the learned model needs to capture enough features for all the possible transformations of the input. Hence, methods in this direction is very expensive. A recent work \cite{Pooling} combines the pooling idea and deep CNN to achieve transformation invariance while keeping the computational expense relatively low. However, all crystal lattices are equivalent up to an affine transformation (see Figure 2 for examples) and we only need the invariance to crystal elastic deformation that is relatively much smaller than the affine transformation. How to tune the extend of deformation invariance via the method in \cite{Pooling} to match the demand in crystal image analysis is still worth to explore. On the other hand, since our problem is more specific than general image classification, we aim to design a more efficient and specific method, taking advantage of the periodic structure of crystal images. The proposed phase-space sketching is a nonlinear operator that rescales, shifts, and coarsens (i.e., pooling) the synchrosqueezed (SS) energy distribution. Since the SS energy distribution has already been computed for the analysis of crystal orientation, defects, and deformation, the computational cost of which scales nearly linearly in terms of the image size (see \cite{YangLuYing:2015}), the additional computational cost for the phase-space sketch is almost negligible, since we just need to sum up the synchrosqueezed representation using a coarser grid, which scales linearly in the size of the phase-space representation by the synchrosqueezed transform. 

The phase-space sketching proposed in this paper is an alternative
method for the invariant texture classification based on the
two-dimensional synchrosqueezed transform
\cite{YangYing:2013,YangYing:2014}. The phase-space sketching encodes
microscopic crystal images in a translation, rotation, illumination,
and scale invariant representation. This new representation is stable
to deformation and invariant to a class of elastic deformation in
materials. The extent of the elastic deformation invariance is
specified by user-defined parameters. As we shall see later, the
results of the two-dimensional synchrosqueezed transforms can be
applied to identify dislocations, cracks, grain boundaries, crystal
orientation and deformation following the algorithms in
\cite{YangLuYing:2015,LuWirthYang:2016}. The proposed phase-space
sketching only adds minimal computational cost to our existing
methodologies for atomic resolution crystal image analysis.  We
  remark that it is also possible to extend existing approaches of
  texture segmentation methods combined with classification procedure
  to achieve our goal of crystal image analysis of complex geometry in
  this paper, the advantage of the proposed method is to put all these
  image analysis steps in a single framework based on synchrosqueezed
  transforms, which have been shown to be suitable for crystal image
  analysis.

The rest of this paper is organized as follows. We start by
introducing the mathematical model of atomic resolution crystal images
and our previous analysis framework on crystal image analysis in
Section~\ref{sec:model}. In Section~\ref{sec:sk}, we propose the phase
space sketching based on the two-dimensional synchrosqueezed transform
to identify reference crystal patterns. With the reference crystal
patterns ready, a complete algorithm for invariant texture
classification and segmentation is introduced. We apply the whole
algorithm to several real examples in materials science in
Section~\ref{sec:analysis}. Finally, we present a brief summary of the
proposed methodology in Section~\ref{sec:conclusion}.

\section{Crystal image modeling and synchrosqueezed transform (SST)}
\label{sec:model}

\subsection{Mathematical models for atomic resolution crystal image}

Consider an atomic resolution $2D$ image (experimentally, this is
often done for a thin slice of a $3D$ polycrystalline material) that
may consist of multiple grains with the same Braivas lattice.  Denote
the reference Bravais lattice as
\begin{equation*}
\mathcal L=\{av_1+bv_2\,:\,a,b\text{ integers}\}\,,
\end{equation*}
where $v_1,v_2\in\RR^2$ represent two fixed linearly independent lattice vectors (see Figure \ref{fig:Bravais} for examples).
% Let $s(2\pi x)$ be the image of a single unit cell, scaled to have lattice parameter $1$ (the choice of $2\pi$ is for later convenience).
Let $s(2\pi x)$ be the shape function describing a single perfect unit
cell in the image, extended periodically in $x$ with respect to the
reference crystal lattice.  We denote by an open set $\Omega_k$,
$k = 1, \ldots, M$, the grains the system consists of, and by
$\Omega=\cup \Omega_k$ the domain occupied by all grains. We only consider disjoint sets $\{\Omega_k\}_k$ in this paper. In more complicated cases when there are overlapping grains, it would be interesting to extend the mode decomposition techniques for one-dimensional oscillatory signals in \cite{RDBR,MMD} to two-dimensional so as to decompose overlapping grains. Hence, not considering grain boundaries, 
the polycrystal image $f:\Omega\to\RR$ can be modeled as
\begin{equation}
f(x)= \sum_{k=1}^M  \chi_{\Omega_k} (x) (\alpha_k(x) s(2\pi N \phi_k(x))+c_k(x)),
\label{eqn:crystal}
\end{equation}
where $N$ is the reciprocal lattice parameter (or rather the relative
reciprocal lattice parameter since the dimension of the image is
normalized in our analysis). The crystal image model above works for
both simple lattices (Braivais lattices listed in Figure
\ref{fig:Bravais}) and complex lattices (such that the unit cell
consists more than one atoms); extensions to more complicated images
will be considered below. Here $\chi_{\Omega_k}$ is the indicator
function of each grain $\Omega_k$; $\phi_k:\Omega_k \to \RR^2$ maps
the atoms of grain $\Omega_k$ back to the configuration of a perfect
crystal, i.e., it can be understood as the inverse of the elastic
deformation. The local inverse deformation gradient is then given by
$ \nabla \phi_k$ in each $\Omega_k$.  The smooth amplitude envelope
$\alpha_k(x)$ and the smooth trend function $c_k(x)$ in
\eqref{eqn:crystal} model possible variation of intensity,
illumination, etc.~during the imaging process.  By the $2D$ Fourier
series $\widehat s$ of $s$, we can rewrite \eqref{eqn:crystal} as
\begin{equation}
\begin{aligned}
  f(x) =\sum_{k=1}^M \chi_{\Omega_k} (x)\biggl( \sum_{\xi \in
      \mc{L}^{\ast}}\alpha_k(x) \widehat{s}(\xi)e^{2\pi  iN \xi \cdot
      \phi_k(x)}+c_k(x)\biggr),
\label{eqn:imagefunction}
\end{aligned}
\end{equation}
where $\mc{L}^{\ast}$ is the reciprocal lattice of $\mc{L}$ (recall
that the shape function $s$ is periodic with respect to the lattice
$\mc{L}$).

\begin{figure}[ht!]
  \begin{center}
    \begin{tabular}{cc}
     \includegraphics[height=2in]{./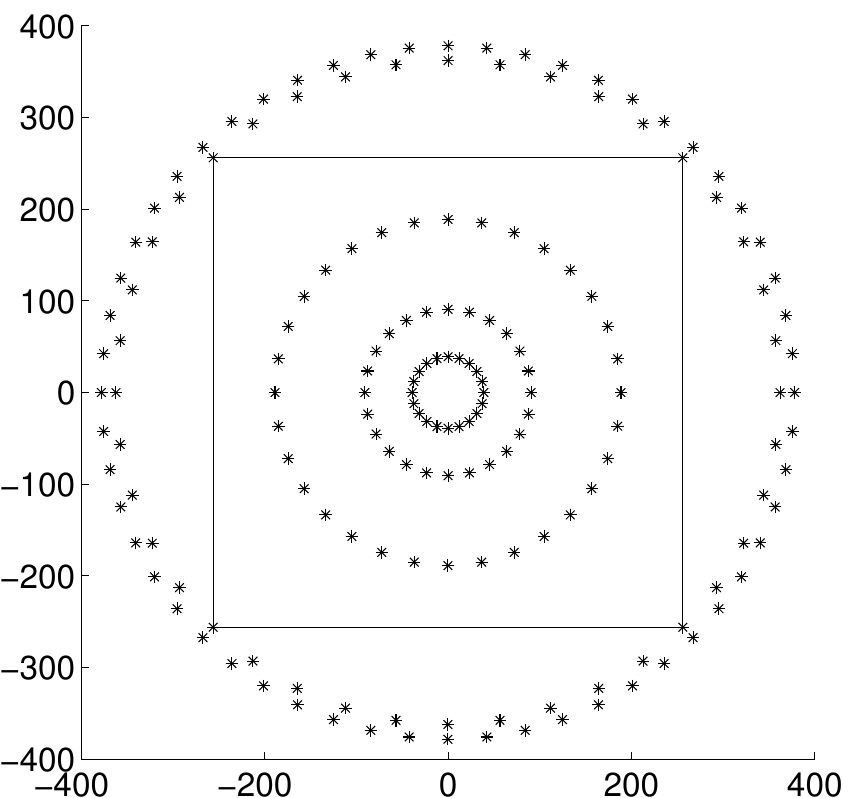} & \includegraphics[height=2in]{./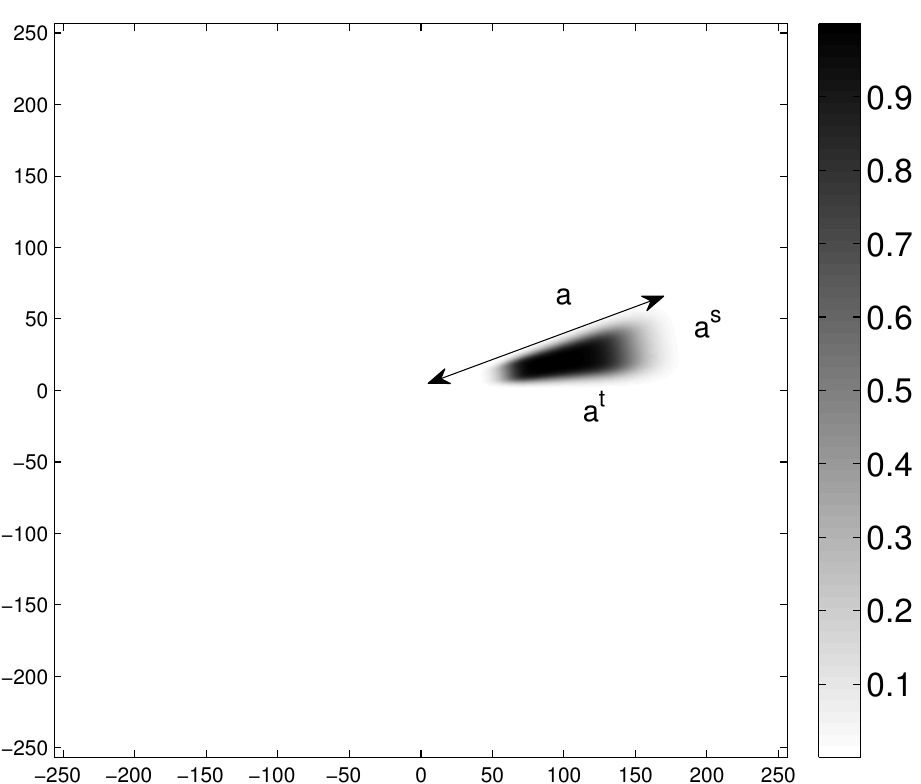}\\
    \end{tabular}
  \end{center}
  \caption{Left: Partition of unity in the Fourier domain for an image of size $512\times512$. Each point represents the center of the support of a bump function. The bump function centered at the origin is supported on a disk and is not indicated in this picture. Right: An example of a fan-shaped bump function centered at a point with radius $a$. The shape of bump functions is controled by two scaling parameters $(t,s)$.}
  \label{fig:tiling}
\end{figure}

\begin{figure}[ht!]
  \begin{center}
    \begin{tabular}{ccc}
      \includegraphics[height=1in]{./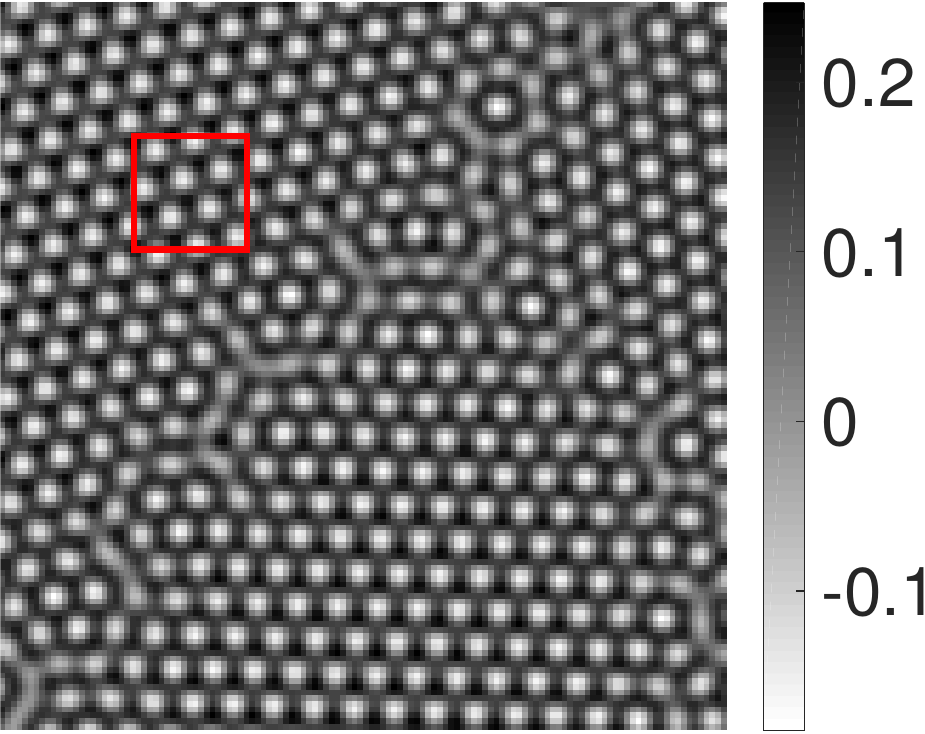} &  \includegraphics[height=1in]{./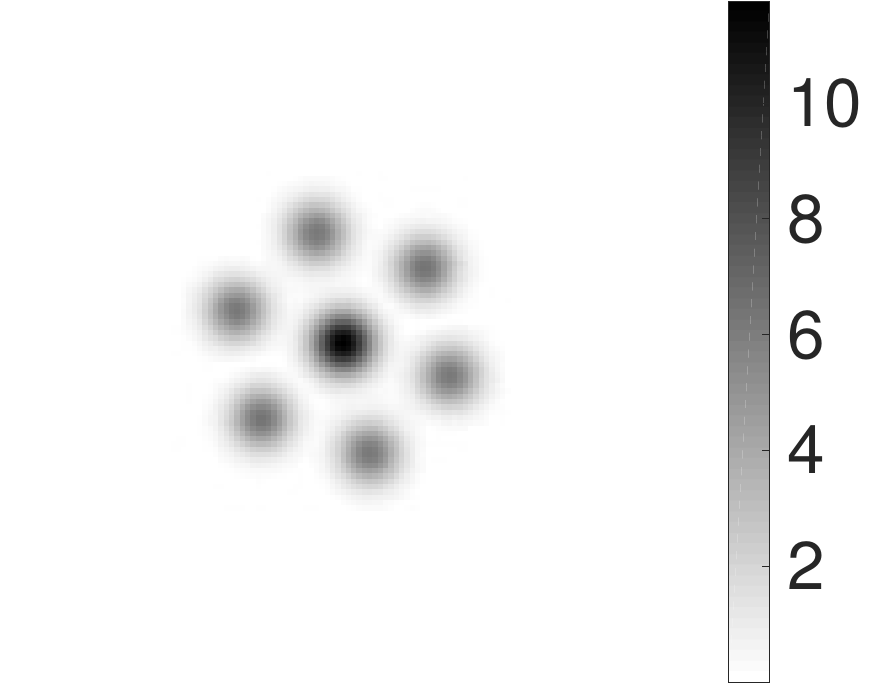}  &   \raisebox{-0.1in}{     \includegraphics[height=1.15in]{./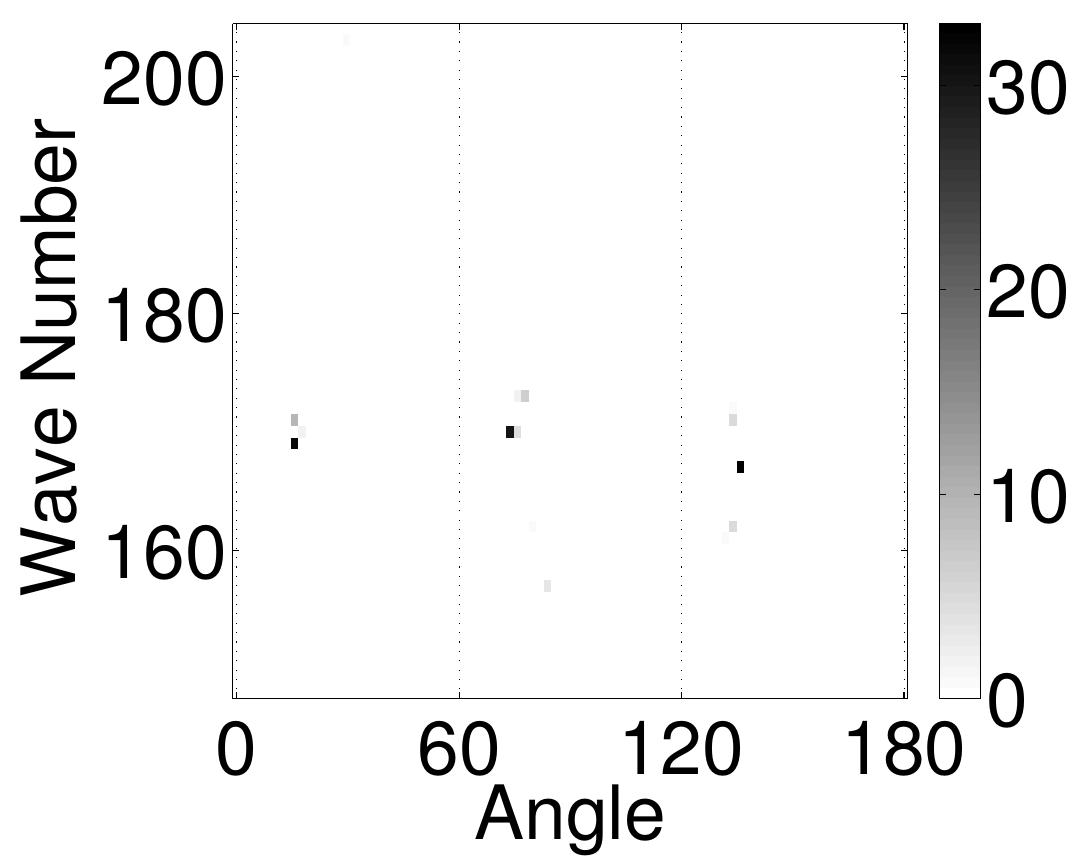} }  \\
       (a)& (b) &  (c) \\ 
   \raisebox{-0.1in}{ \includegraphics[height=1.15in]{./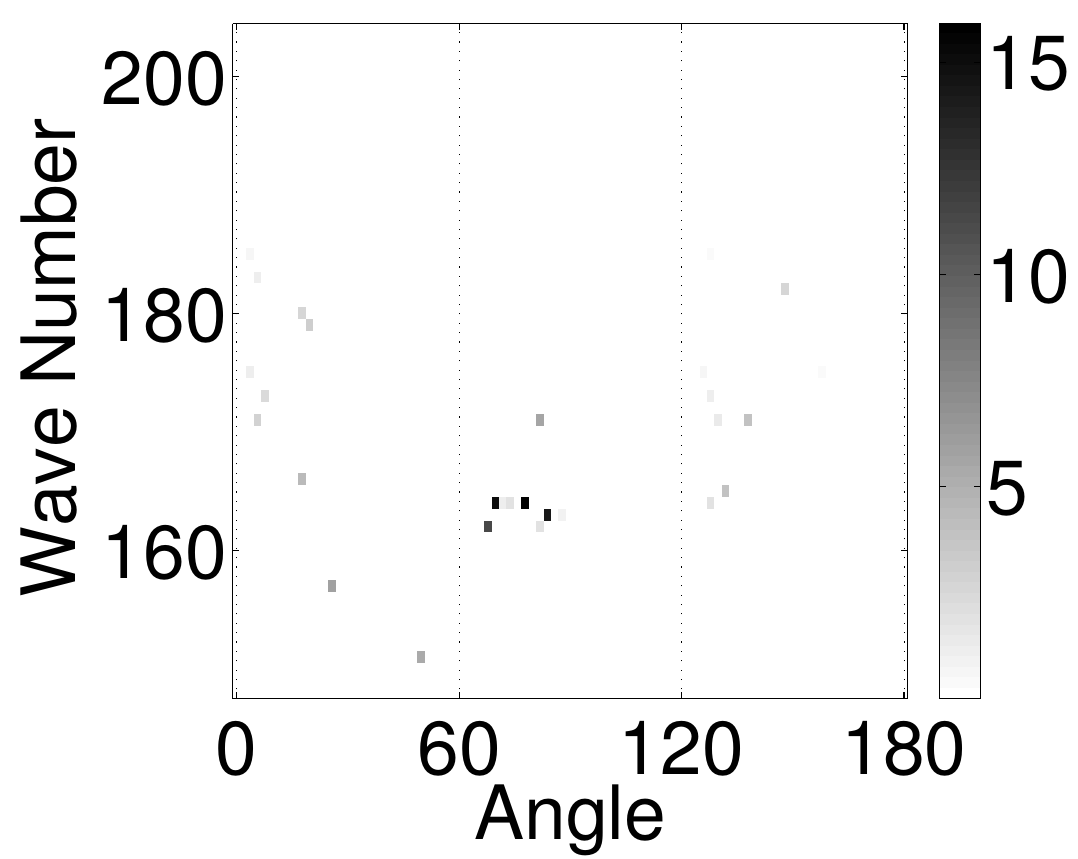} } &  \includegraphics[height=1in]{./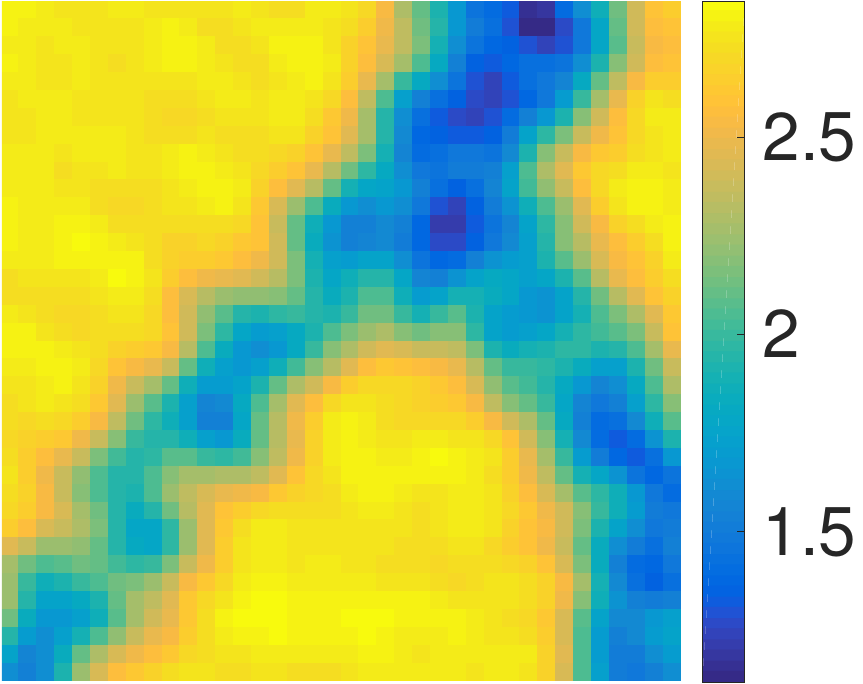}&  \includegraphics[height=1in]{./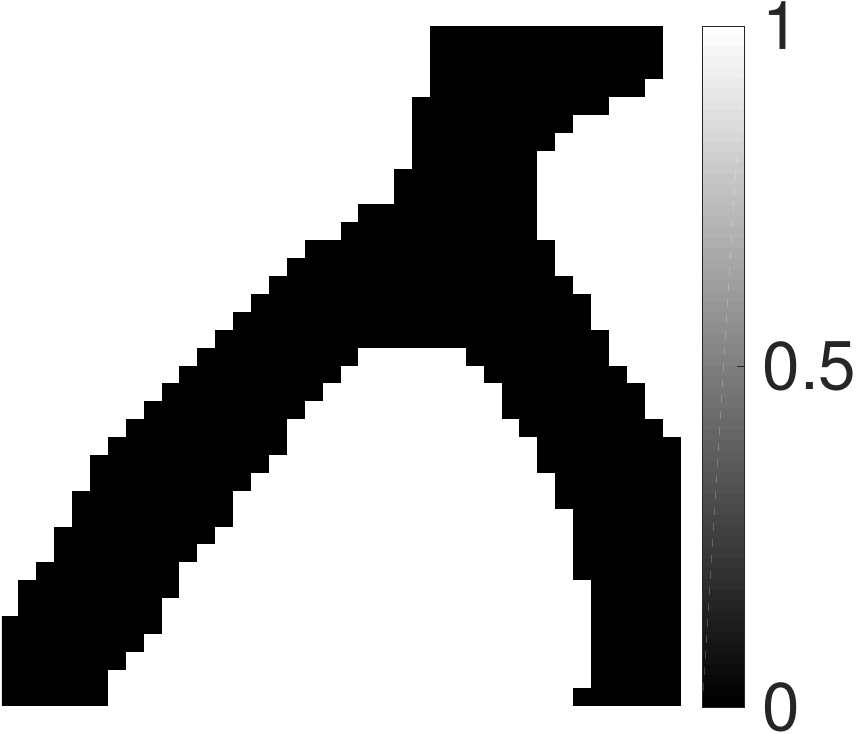}\\
    (d) &   (e)&(f)
    \end{tabular}
  \end{center}
  \caption{(a) An example of a crystal image. (b) Windowed Fourier
    transform at a local patch indicated by a rectangle. (c) The SS energy
    distribution in polar coordinates at a point outside the defect
    region. (d) The SS energy distribution at a point in the defect region.  (e) The defect
    indicator $\mass(x)$. (f) Identified defect region by thresholding $\mass(x)$. }
  \label{fig:patch}
\end{figure}

\begin{figure}[ht!]
  \begin{center}
    \begin{tabular}{cccc}
      \includegraphics[height=1in]{./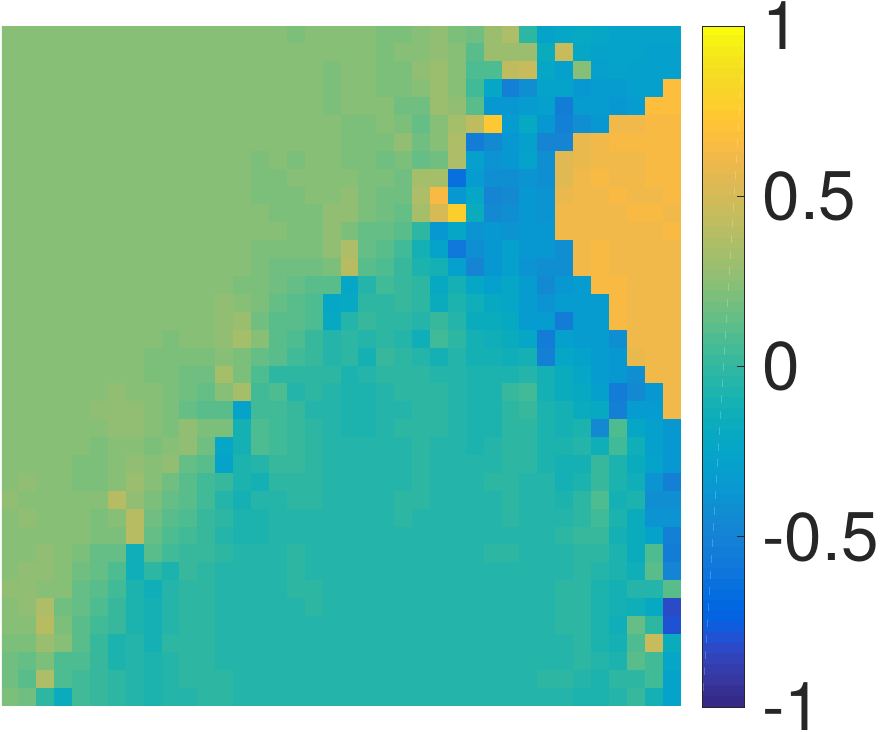} &  \includegraphics[height=1in]{./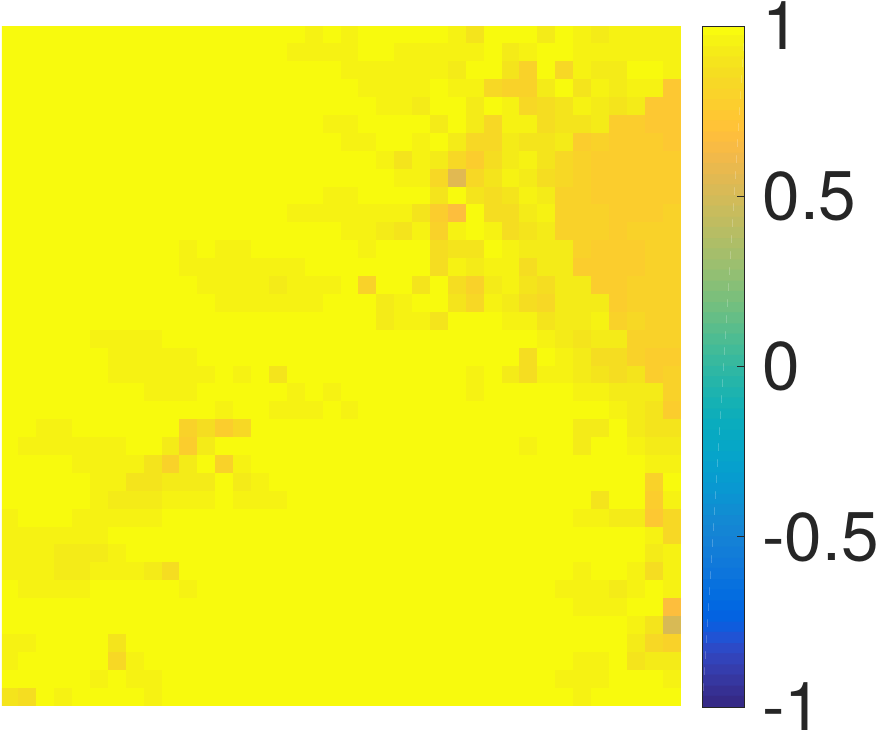} &   \includegraphics[height=1in]{./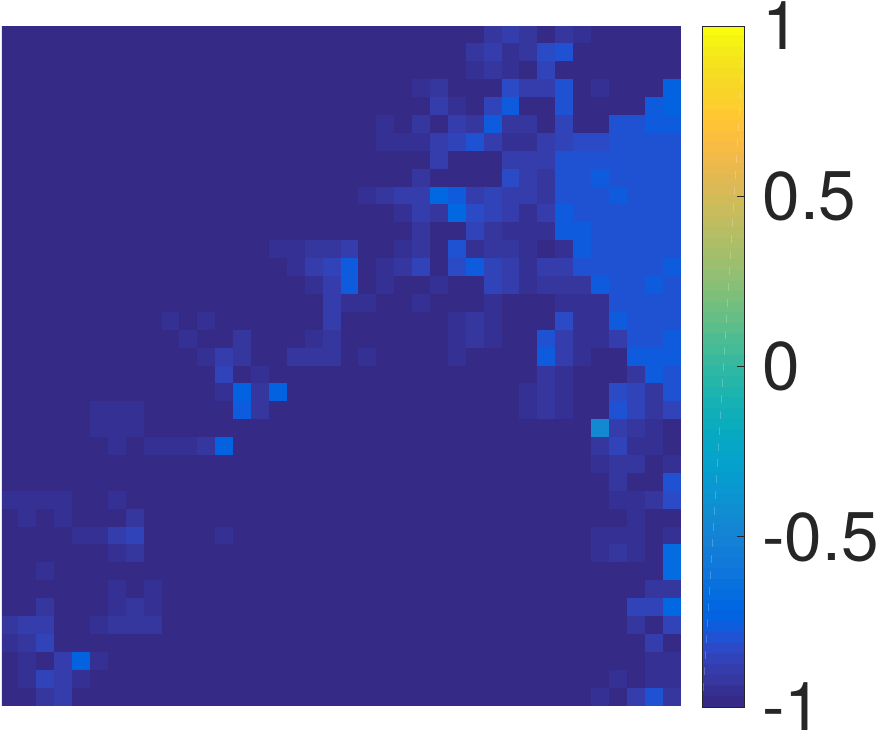} &  \includegraphics[height=1in]{./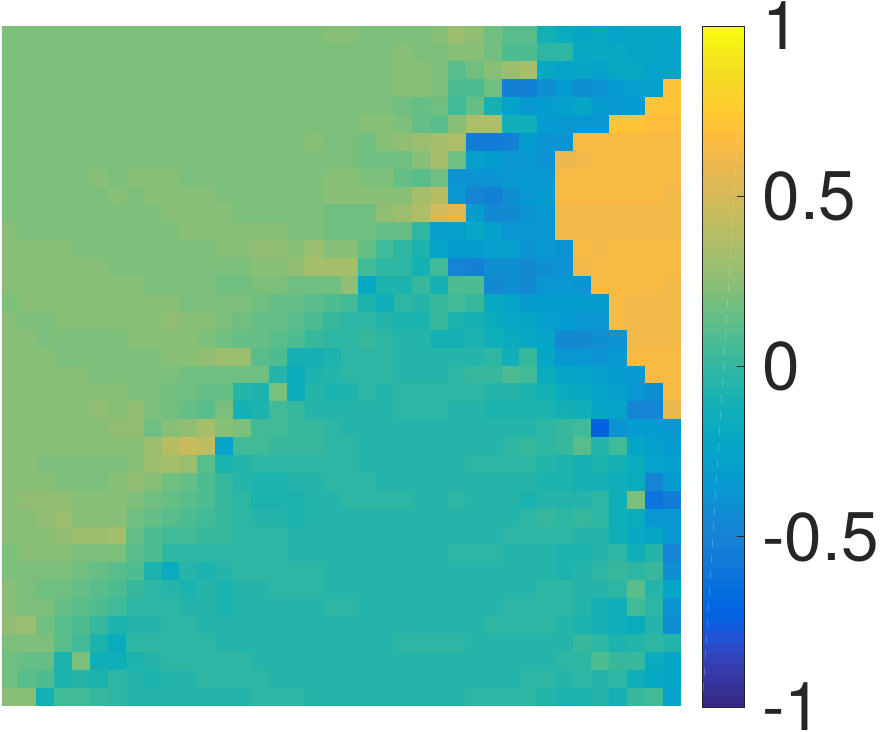}\\
      $G_0^{11}$& $G_0^{21}$ &   $G_0^{12}$ & $G_0^{22}$\\
    \end{tabular}
    \\
        \begin{tabular}{ccc}
      \includegraphics[height=1in]{./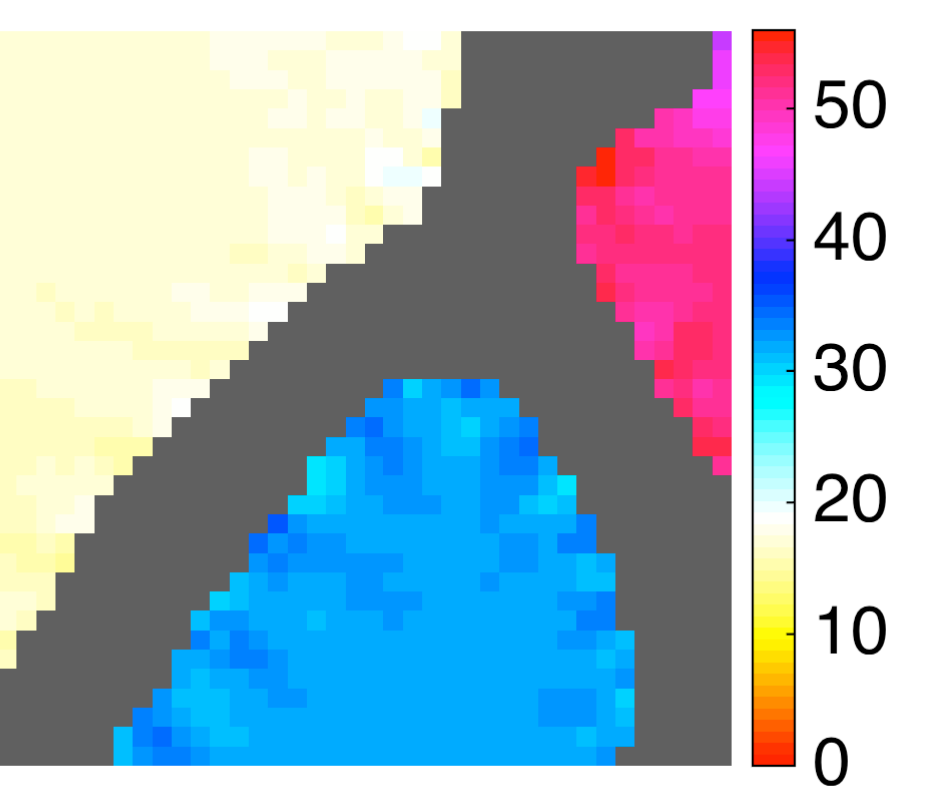} &  \includegraphics[height=1in]{./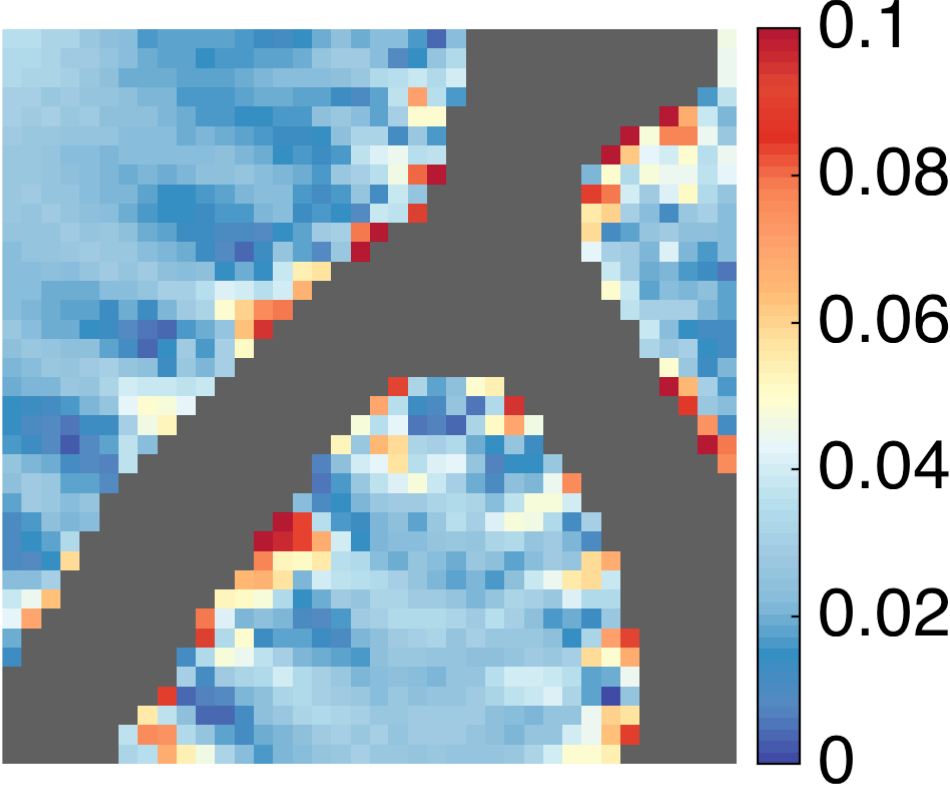} &      \includegraphics[height=1in]{./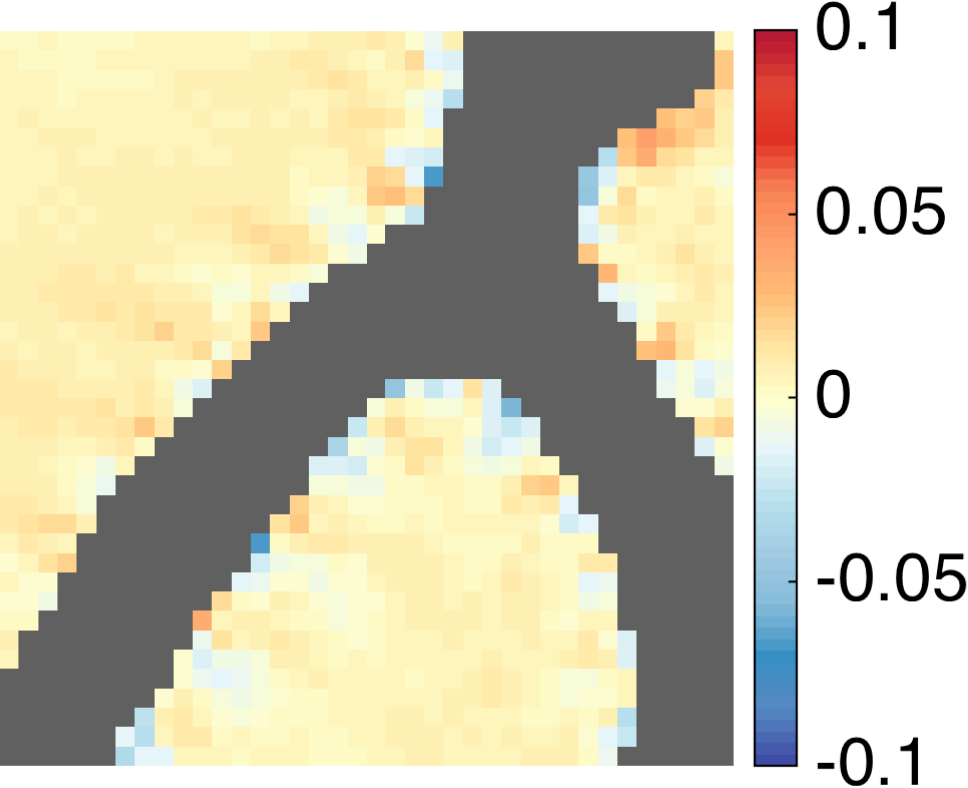}
    \end{tabular}
  \end{center}
  \caption{Top panel: Estimated inverse deformation gradient $G_0\in \RR^{2\times 2}$ of the atomic
  resolution crystal image in Figure \ref{fig:patch} (a). Bottom panel: The crystal orientation, the difference in principal stretches, and the volume distortion of $G_0$. The grey mask in these figures is the defect region identified in Figure~\ref{fig:patch} (f).}
  \label{fig:patchG}
\end{figure}

\subsection{Synchrosqueezed transform (SST)}

As shown in \cite{YangLuYing:2015,LuWirthYang:2016}, the 2D SST is an
efficient tool to estimate the defect region and also the local
inverse deformation gradient 
$G_0(x)=\sum_{k}\chi_{\Omega_k}(x) N \nabla \phi_k(x)$ in the interior of each
grain $\Omega_k$. The main observation is that, in each grain
$\Omega_k$, the image is a superposition of planewave\footnote{Planewave is a wave with constant frequency or wave vector.}-like components
$\alpha_k(x) \widehat{s}(\xi) e^{2\pi i N \xi \cdot \phi_k(x)}$ with local
wave vectors $N \nabla(\xi \cdot \phi_k(x))$. It has been shown in
\cite{YangYing:2013,YangYing:2014} that the SST is able to estimate
the local wave vectors accurately. Based on the local wave vector
estimation, we can compute the inverse deformation gradient $G_0$ via
a least square method, see \S\ref{sec:SSTCA} below.

The starting point of $2$D SST is a wave packet $w_{a\theta x}$,
which is, roughly speaking, constructed by translating, rotating, and rescaling or modulating a mother wave packet $w:\RR^2\to\CC$ according to the spatial center parameter $x\in\RR^2$, the angular parameter $\theta\in[0,2\pi)$ in the frequency domain, and the radial parameter $a\in\RR$
in the frequency domain \cite{YangYing:2013,YangYing:2014,YangLuYing:2015,LuWirthYang:2016}. To introduce the definition of wave packets, let us define the following notations:
\begin{enumerate}
\item The scaling matrix
\[A_a=\left( \begin{array}{cc}
a^t&0\\
0&a^s 
\end{array}\right),\]
where $a$ is the distance from the center of one wave packet to the origin in the Fourier domain.
\item A unit vector $e_\theta=(\cos \theta,\sin \theta)^T$ with a rotation angle $\theta$.
\item $\theta_\alpha$ represents the argument of a given vector $\alpha$.
\item $w(x)$ of $x\in \RR^2$ denotes a mother wave packet, which is in the Schwartz class and has a non-negative, radial, real-valued, smooth Fourier transform
$\widehat{w}(\xi)$ with a support equal to a ball $B_d(0)$ centered at the origin with a radius $d\leq 1$ in the Fourier domain. The mother wave packet is required to obey the admissibility condition: $\exists 0<c_1(t,s)<c_2(t,s)<\infty$ such that
\[
c_1(t,s)\leq \int_0^{2\pi} \int_1^\infty a^{-(t+s)} |\widehat{w}(A^{-1}_a R^{-1}_\theta (\xi-a\cdot e_\theta))|^2 a\ud a\ud\theta \leq c_2(t,s)
\]
for any $|\xi|\geq 1$.
\end{enumerate}
\begin{defn}
  \label{def:GC2D}
For $\frac{1}{2}<s\leq t<1$, define $\widehat{w_{a\theta b}}(\xi)=\widehat{w}(A^{-1}_a R^{-1}_\theta (\xi-a\cdot e_\theta))e^{-2\pi ib\cdot\xi}a^{-\frac{t+s}{2}}$ as a wave packet in the Fourier domain. Equivalently, in the space domain, the corresponding wave packet is
\begin{align*}
w_{a\theta b}(x)=a^{\frac{t+s}{2}}e^{2\pi ia(x-b)\cdot e_\theta}w(A_a R^{-1}_\theta(x-b)).
\end{align*}
In such a way, a family of wave packets $\{w_{a\theta b}(x),a\in [1,\infty),\theta\in[0,2\pi),b\in \RR^2\}$ is constructed. 
\end{defn}

\begin{defn}
  \label{def:GCT}
The wave packet transform\footnote{Throughout this paper all the numerical implementation of integration follows the rectangle rule. For the purpose of simplicity, we don't specify the discretization of all variables in this paper. In a general setting, we assume the given image of size $N_0\times N_0$ is defined on $[0,1]^2$ with grid points $\{(\frac{i}{N_0},\frac{j}{N_0}) \}$ for $0\leq i, j\leq N_0-1$; the frequency domain in a Cartesian coordinate is discretized with grid points $\{(i,j) \}$ for $-\frac{N_0}{2}\leq i, j\leq \frac{N_0}{2}-1$; unless specified, the frequency domain in a polar coordinate is restricted in the range $[0,\frac{\sqrt{2}N_0}{2}]$ or $[1,\frac{\sqrt{2}N_0}{2}]$  for radius and $[0,\pi]$ for angle, and they are discretized uniformly with step size $1$ in radius and $\pi/90$ in angle. The code for phase-space sketching and the test data are available online in the SynCrystal package for reference (https://github.com/SynCrystal/SynCrystal). }
 of a function $f(x)$ is a function 
\begin{align*}
  W_f(a,\theta,b)&= \int_{\RR^2}\overline{w_{a\theta b}(x)}f(x)\ud x
\end{align*}
for $a\in [1,\infty)$, $\theta\in[0,2\pi)$, $b\in \RR^2$. For convenience, we also set $W_f(a,\theta,b) = 0$ for $a\in[0,1)$.
\end{defn}

More specifically, the discrete wave packets are constructed via the partition of unity in the frequency domain (see Figure \ref{fig:tiling}) so that they have compact supports in the frequency domain. The Fourier transform of a wave packet is essentially the product of a bump function and a plane wave. Figure \ref{fig:tiling} (left) illustrates the positions of the centers of bump functions and Figure \ref{fig:tiling} (right) shows an example of a bump function when its center has a radius $O(a)$. The reader is refered to Section $3$ of \cite{YangYing:2014} for detailed implementation of the discret wave packet transform. As shown in Figure \ref{fig:tiling} (right), there are two scaling parameters $(t,s)$ to control the geometry of the supports of wave packets. If $(t,s)=(1,1)$, the wave packet transform is essentially the wavelet transform; if $(t,s)=(1,1/2)$, the wave packet transform becomes the curvelet transform (see Figure \ref{fig:it} (right) for an illustration). The wave packet transform is a generalization of curvelet and
wavelet transforms with better flexibility in frequency scaling and
consequently is better suited to analyze crystal images with complicated 
geometry. As a convolution with smooth wave packets, $W_f$ is well-defined and smooth even under very low regularity requirements for $f$, e.\,g.\ $f\in L^\infty(\RR^2)$. 

In contrast to the windowed Fourier transform of a given crystal
image, whose spectrum spreads out in the phase space as illustrated in
Figure~\ref{fig:patch}(b), the $2$D synchrosqueezed transform (SST)
based on wave packets aims at a sharpened phase-space
representation. In the SST, for each $(a, \theta, x)$, we define the
corresponding local wave vector estimation
\begin{equation}\label{eqn:lwv}
  v_f(a, \theta,x) = \Re \frac{\nabla_x
    W_f(a,\theta,x)}{2\pi i W_f(a,\theta,x)}
\end{equation}
for $|W_f(a,\theta,x)|>\gamma$, where $\gamma$ is a threshold
parameter to ensure stability of the estimation. Here, $\nabla_x W_f$
denotes the gradient of $W_f$ with respect to its third argument $x$.

If $f(x)$ is a plane wave with a wave vector
$(v_1,v_2)\in \mathbb{R}^2$, then simple
algebraic calculation shows that the local wave vector estimation
$v_f(a,\theta,x)$ is exactly $(v_1,v_2)$ whenever $W_f(a,\theta,x)$ is
not zero. When $f(x)$ is a superposition of planewaves with
well-separated local wave vectors $\{(v^{(k)}_1,v^{(k)}_2)\}_k$, by
the stationary phase approximation,
$v_f(a,\theta,x)\approx (v^{(k)}_1,v^{(k)}_2)$ when
$(a\cos\theta,a\sin\theta)$ is close to some local wave vector
$(v^{(k)}_1,v^{(k)}_2)$. In the case of crystal images, locally $f(x)$
is a superposition of deformed planewaves. By applying Taylor
expansion to make approximations, one can also show that the local
wave vector estimation $v_f(a,\theta,x)$ can still approximate the
local wave vectors of $f(x)$. Even in the presence of heavy noise, this
approximation is still reasonably good by applying the SST based on
highly redundant wave packet frames \cite{Robustness,Canvas}. More precisely, let us revisit the theory developed in \cite{YangLuYing:2015} to support the argument just above.

\begin{defn}[$2D$ general shape function]
\label{def:GSF}

The $2D$ general shape function class ${S}_M$ consists of periodic functions $s(x)$ with a periodicity $(2\pi,2\pi)$, a unit $L^2([-\pi,\pi]^2)$-norm, and an $L^\infty$-norm bounded by $M$ satisfying the following conditions:
\begin{enumerate}
\item The $2D$ Fourier series of $s(x)$ is uniformly convergent;
\item $\sum_{n \in \ZZ^2} |\widehat{s}(n)|\leq M$ and $\widehat{s}(0,0)=0$;
\item Let $\Lambda$ be the set of integers $\{|n_1|\in \NN: \widehat{s}(n_1,n_2)\neq 0\text{ or }\widehat{s}(n_2,n_1)\neq 0\text{ for some }n_2\in\ZZ\}$. 
The greatest common divisor of all the elements in $\Lambda$ is $1$.
\end{enumerate}
\end{defn}

\begin{defn}[$2D$ general intrinsic mode type function (GIMT)]
  \label{def:GIMTF}
  A function $f(x)=\alpha(x)s(2\pi N  \phi(x))$ is a $2D$ GIMT of type $(M,N)$, if $s(x)\in {S}_M$, $\alpha(x)$ and $\phi(x)$ satisfy the conditions below.
\begin{align*} 
&\alpha(x)\in C^\infty, \quad |\grad\alpha|\leq M, \quad 1/M \leq \alpha\leq M, \\
 &\phi(x)\in C^\infty, \quad 1/M \leq \left|  \nabla (n^{\TT}  \phi) / \abs{n^{\TT} } \right| \leq M,\quad \text{and}\\
  &\left| \nabla^2 (n^{\TT}  \phi) / \abs{n^{\TT}} \right|\leq M, \quad \forall n \in \ZZ^2 \quad \text{s.t.}\quad \widehat{s}(n)\neq 0.
   \end{align*}
\end{defn}

The theorem below shows that the local wave vector estimation precisely estimates the local wave vectors of the crystal image at the points away from boundaries (see \cite{YangLuYing:2015} for the proof).

\begin{theorem}
  \label{thm:main}
  For a $2D$ GIMT $f(x)$ of type $(M,N)$ with any $\eps>0$ and any $r>1$, we let $\gamma = \sqrt{\epsilon}$ and define
  \begin{equation*}
  R_{\eps} = \left\{(a,\theta,b): |W_f(a,\theta,b)|\geq \gamma,\quad a\leq 2MNr \right\}
  \end{equation*}
  and 
  \[
  Z_{n} = \left\{(a,\theta,b):\left|A^{-1}_a R^{-1}_\theta\left(a\cdot e_\theta-N \nabla(n^{\TT}   \phi(b)) \right)\right|\leq d,\quad a\leq 2MNr \right\}
  \]
For fixed $M$, $r$, $s$, $t$, $d$, and $\epsilon$, there exists $N_0(M,r,s,t,d,\eps)>0$ such that for any $N>N_0(M,r,s,t,d,\eps)$ and a $2D$ GIMT $f(x)$ of type $(M,N)$, the following statements
  hold.
  \begin{enumerate}
  \item $\left\{Z_n: \widehat{S}(n)\neq 0 \right\}$ are disjoint and $R_{\eps}
    \subset \bigcup_{\widehat{S}(n)\neq 0} Z_{n}$;
  \item For any $(a,\theta,b) \in R_{\eps} \cap Z_{n}$, 
    \[
    \frac{\left|v_f(a,\theta,b)-N \nabla(n^{\TT}   \phi(b))  \right|}{ \left|N \nabla(n^{\TT}  \phi(b))  \right|}\lesssim\sqrt \epsilon.
    \]
  \end{enumerate}
\end{theorem}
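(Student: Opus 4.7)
The plan is to reduce the analysis of $W_f$ to that of a sum of plane wave contributions by expanding $f$ via the Fourier series of $s$ and then invoking stationary phase / Taylor expansion arguments for each mode. Concretely, write
\begin{equation*}
f(x)=\sum_{n\in\ZZ^2}\widehat{s}(n)\,\alpha(x)\,e^{2\pi i N n\cdot\phi(x)},
\end{equation*}
so that $W_f(a,\theta,b)=\sum_n \widehat{s}(n)\int \overline{w_{a\theta b}(x)}\alpha(x)e^{2\pi i N n\cdot\phi(x)}\ud x$. Around the center $b$, Taylor expand the amplitude $\alpha(x)\approx\alpha(b)$ and the phase $N n\cdot\phi(x)\approx N n\cdot\phi(b)+N\nabla(n^{\TT}\phi(b))\cdot(x-b)+O(N\abs{x-b}^2)$. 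After the linear substitution the $n$-th integral reduces, up to controllable error, to $\alpha(b)\widehat{s}(n)e^{2\pi iN n\cdot\phi(b)}\,\widehat{w_{a\theta b}}(N\nabla(n^{\TT}\phi(b)))$, which is a bump in $(a,\theta)$ centered at the unique point where $a\cdot e_\theta=N\nabla(n^{\TT}\phi(b))$. The support constraint on $\widehat w$ then forces significant contributions to lie in $Z_n$.

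Next I would establish the geometric claim (1). Disjointness of the $Z_n$ for admissible $n$ follows because $|\nabla(n^{\TT}\phi)/|n||\in[1/M,M]$, so two distinct integer vectors $n,n'$ with $\widehat{s}(n),\widehat{s}(n')\neq 0$ produce target frequencies $N\nabla(n^{\TT}\phi(b))$ and $N\nabla(n'^{\TT}\phi(b))$ that are separated by a distance $\gtrsim N$; meanwhile $Z_n$ has extent $O(a^{t})\times O(a^{s})$ in the rotated Fourier coordinates, and under the constraint $a\leq 2MNr$ these diameters are $O(N^{t})$ and $O(N^{s})$ with $t,s<1$, so for $N\geq N_0$ the $Z_n$'s are pairwise disjoint. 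The containment $R_\eps\subset\bigcup_{\widehat s(n)\neq 0}Z_n$ comes from the complementary observation: outside every $Z_n$, the Taylor/non-stationary phase estimates show each modal contribution is $O(N^{-\infty})$ smaller than $\gamma=\sqrt{\epsilon}$, so $|W_f|<\gamma$ there.

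For part (2), on $R_\eps\cap Z_n$ the single mode $n$ dominates $W_f$. Differentiating in $b$ brings down a factor $2\pi iN\nabla(n^{\TT}\phi(b))$ from the phase plus an error of order $\nabla\alpha$, $\nabla^2\phi$ and the spatial decay of $w_{a\theta b}$. Forming the quotient \eqref{eqn:lwv} and taking the real part cancels the dominant term exactly and leaves an error of relative size $O(\sqrt{\epsilon})$, where the scaling follows because the residual from other modes is $O(\epsilon)$, the denominator is at least $\gamma=\sqrt{\epsilon}$, and the Taylor remainder in the dominant mode contributes a comparable $\sqrt{\epsilon}$ after balancing $N^{-(1-t)}$ and $N^{-(1-s)}$ against $\epsilon$ by choosing $N_0$ large. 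I expect the main obstacle to be uniformly controlling the cross-mode interference near the boundaries of the $Z_n$ together with the Taylor remainder of the dominant mode, since these two error sources must simultaneously be pushed below $\gamma$ with quantitative dependence on $(M,r,s,t,d,\eps)$; everything else is either a direct application of the admissibility condition on $\widehat w$ or bookkeeping of the rescaling matrix $A_a$.
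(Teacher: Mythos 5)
Your proposal is correct in strategy and matches, essentially step for step, the argument in \cite{YangLuYing:2015} to which this paper defers for the proof (the paper itself contains no proof of Theorem~\ref{thm:main}): Fourier expansion of $s$, second-order Taylor expansion of the phase over the $a^{-t}\times a^{-s}$ spatial support of the wave packet (which is where the hypothesis $s>\tfrac12$ enters), compact frequency support of $\widehat{w}$ to localize each mode in $Z_n$, separation $\gtrsim N$ of the target frequencies versus $O(N^t)$, $O(N^s)$ extent of the $Z_n$ for disjointness, and the quotient $\nabla_b W_f/(2\pi i W_f)$ with denominator bounded below by $\gamma=\sqrt{\eps}$ for part (2). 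The only imprecision is your claim that off-support modal contributions are $O(N^{-\infty})$ --- the Taylor-remainder bound actually gives polynomial decay of order $N^{1-2s}$ summed against $\sum_n|\widehat{s}(n)|\leq M$ --- but this does not affect the conclusion since one only needs the total to fall below $\sqrt{\eps}$ for $N>N_0$.
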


For simplicity, the notations $O(\cdot)$, $\lesssim$ and $\gtrsim$ are used when the implicit constants may only depend on $M$, $s$, $t$, and $d$.

Motivated by the property of the local wave vector estimation $v_f$, the synchrosqueezed (SS) energy distribution of $f$ is constructed as
\begin{equation}
T_{\!f}(v,x) \!=\! \iint_{(a,\theta)\in D(x,\gamma)} \!\!|W_{\!f}(a,\theta,x)|^2 \delta(v_{\!f}(a,\theta,x)-v) \, a\ud a\ud\theta\,, \label{eq:SED}
\end{equation}
where
\[D(x,\gamma):=\left\{ (a,\theta) \in (0, \infty) \times (0,2\pi):    |W_f(a,\theta,x)|>\gamma\right\}\]
and $\delta$ in \eqref{eq:SED} denotes the Dirac measure. In the numerical implementation of $T_{\!f}(v,x) $ in \eqref{eq:SED}, a normalized Gaussian function with a sufficiently small support (of the same order as the discretization step size of its domain) is applied to approximate the Dirac delta function. The discretization of $a$ and $\theta$ is given by the partition of unity of the discrete wave packet transform. The variable $x$ is discretized uniformly in the Cartesian coordinate, while the variable $v$ is discretized in the polar coordinate.  The SST
squeezes the wave packet spectrum $|W_f(a, \theta, x)|^2$ according to
$v_f(a, \theta, x)$ to obtain a sharpened and concentrated
representation of the image in the phase space. Hence, in the interior
of a grain, the SS energy distribution $T_f$ has a support
concentrating around local wave vectors
$N\nabla (\xi\cdot \phi_k(x))$, $ \xi \in \mc{L}^{\ast}$, and is given
approximately by (see e.g., Figure~\ref{fig:patch}(c) in polar
coordinate)
\begin{equation}
  T_f(v, x) \approx \sum_{\xi \in \mc{L}^{\ast}} \alpha_k(x)^2|\widehat{s}(\xi)|^2 
  \delta\bigl( v - N \nabla ( \xi\cdot \phi_k(x))\bigr), \label{eq:Tfapprox}
\end{equation}
understood in the distributional sense. Therefore, by locating the
energy peaks of $T_f$, we can obtain estimates of local wave vectors
$N\nabla(\xi\cdot \phi_k(x))$ and also their associated spectral
energy. In practice, we choose high energy peaks corresponding to
$\xi$ close to the origin in the reciprocal lattice to estimate the
inverse deformation gradient $G_0$ and grain boundaries to guarantee
numerical stability.

\subsection{SST based crystal image analysis}  \label{sec:SSTCA}
In the case of a crystal image that consists of only one kind of lattice in the image, it has been shown in \cite{YangLuYing:2015,LuWirthYang:2016} that the SST can be applied to estimate the inverse deformation gradient $G_0$, grain boundaries, and defects. For simplicity, let us focus on the case of hexagonal lattices in this section. It is straightforward to generalize to  other Bravais lattices. 
 
In the case of hexagonal lattices, we have six dominant reciprocal lattice vectors $\xi\in \mathcal{L}^*$ such that $|\widehat{s}(\xi)|$ is significant, which can be further reduced to three due to the
symmetry $\xi \leftrightarrow -\xi$. We will henceforth denote these
as $\xi_n$, $n = 1, 2, 3$, and denote by $v^{\text{est}}_n(x)$ the
estimate of $N \nabla (\xi_n \cdot \phi_k(x)) = N (\nabla \phi_k(x))
\xi_n$.  The inverse deformation gradient $G_0(x) =\nabla \phi_k(x)$ is
determined by a least squares fitting to identify a linear transformation that maps the reference reciprocal lattice vectors $N \xi_n$
to the estimated local wave vectors $v^{\text{est}}_n$:
\[
G_0(x) = \underset{G}{\argmin} \sum_{n=1}^3 |v^{\text{est}}_n(x) - N G \xi_n |^2.
\]
In practice, for each physical point $x$ we represent $T_f(\cdot, x)$
in polar coordinates $(r,\vartheta)\in[0,\infty)\times[0,\pi)$ (the
information in $\vartheta\in[\pi,2\pi)$ is redundant due to
symmetry). To identify the peak locations $\{v^{\text{est}}_n\}$, we
choose the grid point with highest amplitude in each $\pi/3$-degree
sector of $\vartheta$.

Since the local wave vector estimation is no longer valid around the crystal defects, the SS energy distribution $T_f$ does not have dominant energy peaks around local wave vectors. Hence, we may characterize the defect region by quantifying how concentrated the energy distribution is. One possible way is to use an indicator function as follows. For each $n \in \{1, 2, 3\}$ (corresponding to one of the sectors), we define
\[
w_n(x)=\frac{\displaystyle \int_{B_\delta(v^{\text{est}}_n)} T_f(v,x) \ud v}{\displaystyle \int_{\arg v \in[(n-1)\pi/3,n\pi/3)} T_f(v,b) \ud v}\,,
\]
where $B_{\delta}(v^{\text{est}}_n)$ denotes a small ball around the
estimated local wave vector $v^{\text{est}}_n$, and $\arg v$ means the argument of $v$. Hence, $\mass(x) :=
\sum_n w_n(x)$ will be close to $3$ in the interior of a grain due to
\eqref{eq:Tfapprox}, while its value will be much smaller than $3$
near the defects. This is illustrated in Figure~\ref{fig:patch}(e),
where we show $\mass(x)$ for the crystal image in
Figure~\ref{fig:patch}(a). The estimate of defect regions can be
obtained by a thresholding $\mass(x)$ at some value $\eta\in(0,3)$ according to
\begin{equation*}
\Omega_d=\{x\in\Omega\ :\ \mass(x)<\eta\}\,,
\end{equation*}
an illustration of which is shown in Figure~\ref{fig:patch}(f).
Figure~\ref{fig:patchG} shows the estimate of $G_0$ by the SST. When
showing the result, we have used a more transparent way to represent
the inverse deformation gradient $G_0$ via the polar decomposition
$G_0(x)=U_0(x)P_0(x)$ for each point $x\in \Omega$, where $U_0(x)$ is
a rotation matrix and $P_0(x)$ is a positive-semidefinite symmetric
matrix. At each position $x$, the crystal orientation can be estimated
via the rotation angle of $U_0(x)$; the volume distortion of $G_0(x)$
can be visualized by $\det(G_0(x))-1$; the quantity
$|\lambda_1(x)-\lambda_2(x)|$ characterizes the difference in the
principal stretches of $G_0(x)$ as a measure of shear strength, where
$\lambda_1(x)$ and $\lambda_2(x)$ are the eigenvalues of $P_0(x)$. The
bottom panel of Figure~\ref{fig:patchG} shows these quantities
corresponding to the estimate of $G_0$ in the top panel. In later
numerical examples, we will always present the estimated inverse
deformation gradient in the same fashion.

\section{Phase space sketching}
\label{sec:sk}

To extend the crystal image analysis to more complicated scenario, in
this section, we propose the phase-space sketching, which is a sparse
invariant representation of the phase-space information obtained by
synchrosqueezed transforms. The phase-space sketching will enable
classification of different crystal types presented in the same image
or across images.

\subsection{Invariant representation}
\label{sub:rep}

As compared to the crystal image model with a unique Bravais lattice
in the previous section, 
the mathematical model of a crystal
image that consists of multiple types of lattices can be written as in \eqref{eqn:crystal} and \eqref{eqn:imagefunction}. 
%where $\mc{L}_k^{\ast}$ is the reciprocal lattice of the Bravais
%lattice $\mc{L}_k$ corresponding to the shape function $s_k(2\pi x)$, and
%$s_k$ is a shape function representing the reference crystal in domain
%$\Omega_k$. 

The image classification and segmentation problem is to
identify each domain $\Omega_k$ and classify its corresponding shape
function $s_k$. The main difficulty is due to the
considerable variability within object classes (e.g., grain
orientation, crystal deformation, defects, and difference of image
illumination). Our goal is to design a representation of the crystal
image that is invariant to most of these variability.

%From the mathematical point of view, a global operator cannot provide
%a representation that is stable to deformation, e.g., the Fourier
%transform is not stable to deformation; thus the desired invariant
%representation should from a local transformation, which has to be
%nonlinear to be translation, rotation, illumination, and scale
%invariant.  
It has been shown in \cite{deep,6522407} that, deep
convolution networks have the ability to build large-scale invariants
stable to deformations. Combined with advanced learning techniques,
the convolution network like the scattering transform can provide
deformation invariance \cite{Sifre:2013} and could be used for the
problem discussed in this paper. On the other hand, since our problem
is more specific than general image classification, we aim to design a
more efficient and specific method. Taking advantage of the periodic
structure of crystal images, we introduce the phase-space sketching
instead of applying the convolution network for the purpose of
computational efficiency. The phase-space sketching is a nonlinear
operator rescaling, shifting, and coarsening the SS energy
distribution.

 \begin{figure}
  \begin{center}
  \begin{tabular}{ccc}
        \includegraphics[height=0.9in]{./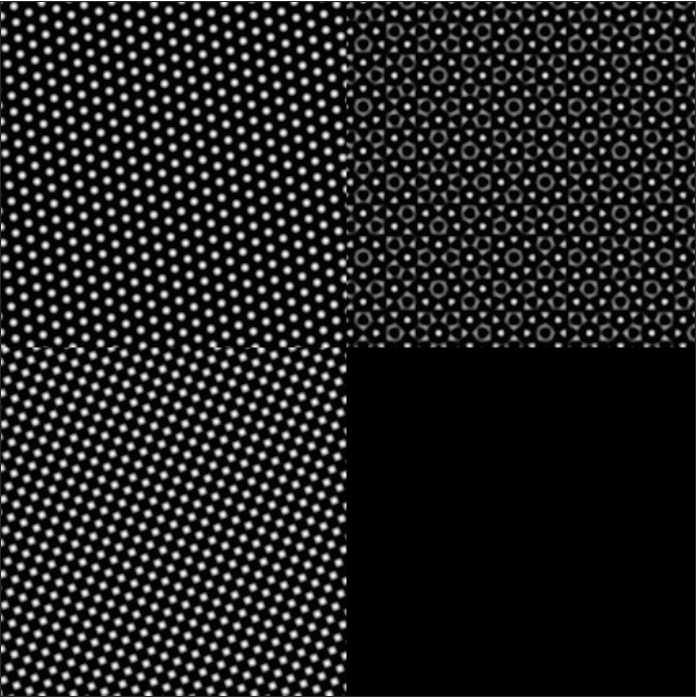}&  \raisebox{-0.05in} {\includegraphics[height=1in]{./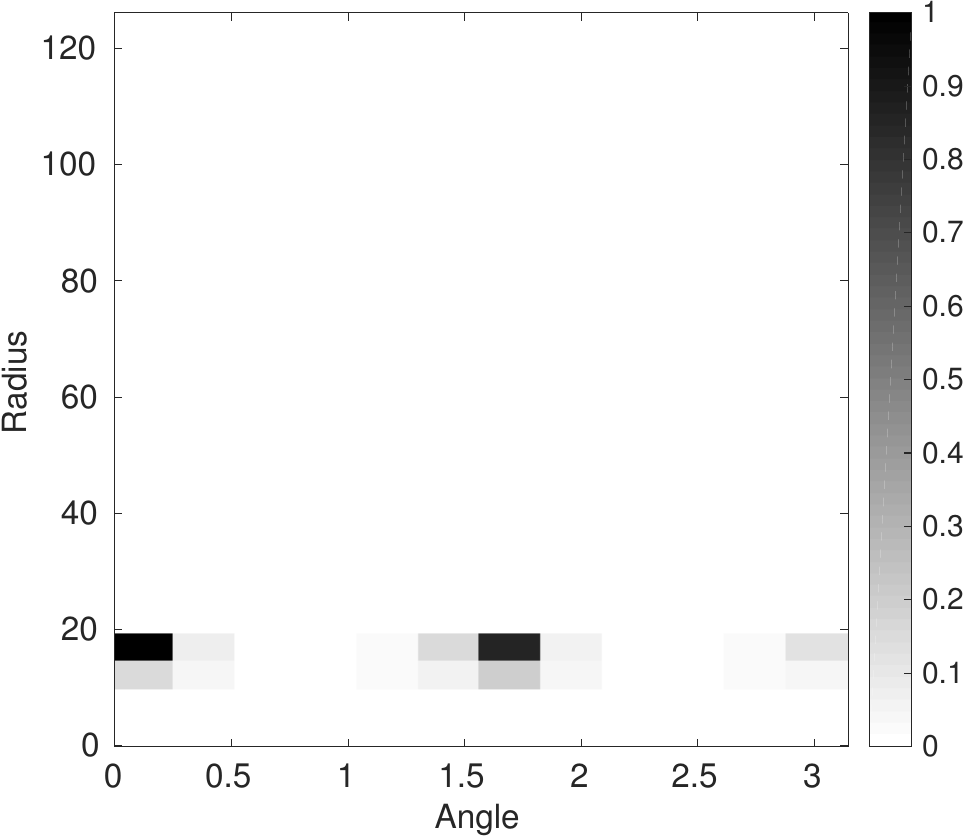}}& \raisebox{-0.05in} {\includegraphics[height=1in]{./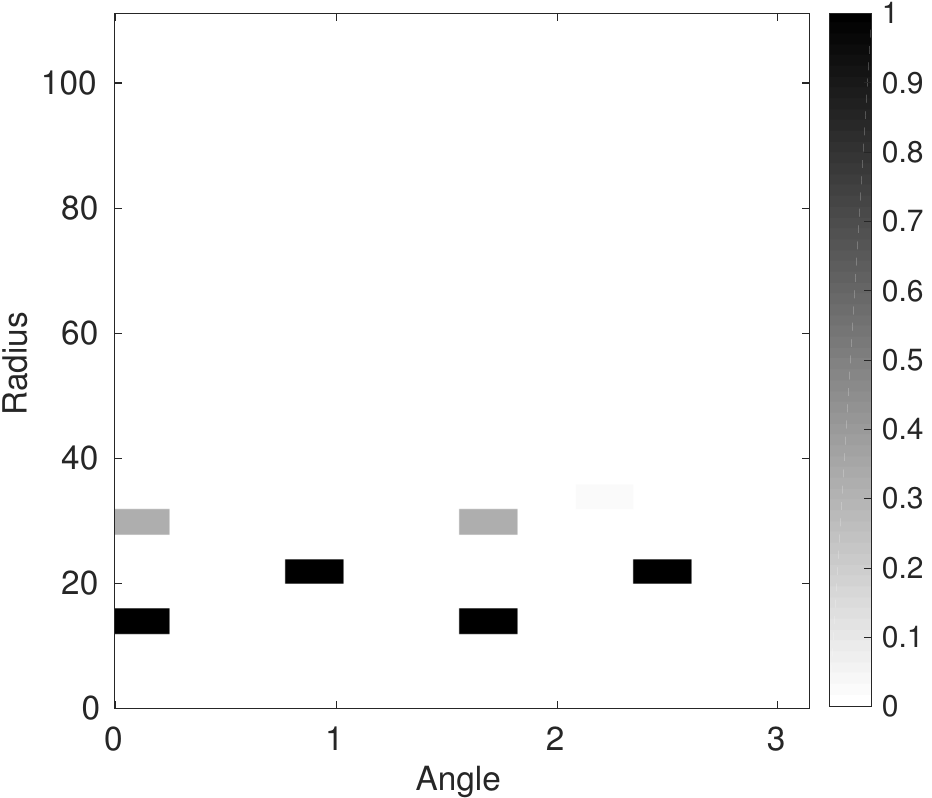} }
     \end{tabular}
  \end{center}
  \caption{Left: an example of atomic
  resolution crystal images $f(x) = \alpha(x)s(2\pi N\phi(x))$. Middle: the phase-space sketch $\tilde{S}(T^\cals_f,h,u)(r,\vartheta,x)$ by the polar windowed Fourier transform for a fixed $x$ (by a similar strategy of sketching applying on energy distribution of the polar windowed Fourier transform). Right: the phase-space sketch $\tilde{S}(T^\cals_f,h,u)(r,\vartheta,x)$ by the SST at the same point $x$. The sketch by the SST is cleaner and provides multi points in the reciprocal lattice $\mathcal{L}^\ast$ of the Bravais lattice $\mathcal{L}$ of the shape function $s(x)$.}
  \label{fig:skcomp}
\end{figure}

Let us recall the definition of the SS energy distribution in Equation \eqref{eq:SED}, written in polar coordinate as
\begin{multline}\label{eqn:psst}
T_{\!f}(r,\vartheta,x) \!=\!  \iint_{(a,\theta)\in D(x,\gamma)}  \!\!|W_{\!f}(a,\theta,x)|^2 
 \delta(v_{\!f}(a,\theta,x)\!-(r\cos \vartheta,r\sin \vartheta)) \, a\ud a\ud\theta\,,
\end{multline}
for $(r,\vartheta)\in[0,\infty)\times[0,\pi)$. Note that the SS energy distribution is translation invariant because the modulus of the wave packet coefficient $|W_f(a,\theta,x)|$ is invariant to the translation of the image $f(x)$. A simple idea to achieve the rotation invariance might be shifting $T_{\!f}(r,\vartheta,x)$ in the variable $\vartheta$ such that $T_{\!f}(r,\vartheta,x)$ always takes its maximum value at a specific location. However, this shifting procedure is usually sensitive to the crystal deformation. This motivates the following shifting and coarsening process to reduce the influence of rotation and deformation simultaneously. For a step size $h\in(0,\pi)$ in the angle coordinate $\vartheta$ and a step size $u>0$ in the radial coordinate $r$, we define the phase-space sketch via a coarsening procedure as follows.

Let \[(r_0(x),\theta_0(x))\in\arg\max_{r\in[0,\infty),\theta\in[0,\pi)} T_{\!f}(r,\vartheta,x),\]
we defined the shifted and rescaled SS energy distribution with a scaling parameter $\cals$ as 
\begin{equation}\label{eqn:pp}
T^\cals_{\!f}(r,\vartheta,x) = T_{\!f}\Bigl(\frac{r_0(x)r}{  \cals+u/2},\vartheta+\theta_0(x)-\frac{h}{2}\mod\ \pi,x\Bigr). 
\end{equation}
In the maximization above,  we approximate $[0,\infty)\times[0,\pi)$ with a finite set of points, numerically evaluate the objective at all these points, and identify the maximizer by simply traversing all the points. To reduce the computational cost for traversing, we will project the matrix in $\theta$ and $r$ into 1D vectors in $\theta$ and look for the peaks in $\theta$ first. Once the peaks in $\theta$ have been identified, it is only necessary to search for peaks in $r$ for $\theta$ around the peaks in $\theta$. This can reduce the computational complexity to a linear scaling one in terms of the number of grid points in $r$ and $\theta$. Typical choices are $30$ or $45$ grid points in both angle and radius domains for all numerical examples in this paper. Hence, the total computational cost for all $x$ is in the order of $10$ times the number of pixels. The step size $h$ is usually chosen to be $\frac{\pi}{9}$ or $\frac{\pi}{12}$, and the step size $u$ is usually chosen to be $\frac{1}{4}$ or $\frac{1}{2}$ of the lowest frequency of wave-like components in the crystal image, i.e., $N$ in our mathematical model in Equation \eqref{eqn:imagefunction}, which can be estimated by the SST by Theorem \ref{thm:main}. The same parameters will be used to discretize the angle and radius domain, and to do the sketching for other formulas in this paper since the proposed method is not sensitive to them. For a fixed $x$, there might be multiple maximizer but one can simply choose one of them. $T^\cals_{\!f}(r,\vartheta,x)$ is not sensitive to the choice of $(r_0(x),\theta_0(x))$ due to the concentration of the phase space representation and the largest spectrum energy of crystal images is assocated with the energy bump with the lowest wave number. 
Then the phase-space sketch of the new SS energy distribution
$T^\cals_{\!f}(r,\vartheta,x)$ is defined as a local average of the
shifted and rescaled SS energy distribution
\begin{equation}
\widetilde{S}(T^\cals_f,h,u)(r,\vartheta,x)\!=\! \int_{ \floor*{\frac{\vartheta}{h}}h}^{\floor*{\frac{\vartheta}{h}}h +h } \int_{ \floor*{\frac{r}{u}}u}^{\floor*{\frac{r}{u}}u+u  } T^\cals_{\!f}(\widetilde{r},\widetilde{\vartheta},x)  \,  \widetilde{r} \ud \widetilde{r}\ud\widetilde{\vartheta}\,
\end{equation}
for $(r,\vartheta)\in[0,\infty)\times[0,\pi)$, where $\floor*{\cdot}$ means the floor operator. 

This operation can be understood as a binning (or pooling) operator that collects
varying local wave vectors into fixed bins in the phase space, and
hence reduces the influence of the deformation.  Overall, the
phase-space sketching is a nonlinear transform (with respect to the
image $f$) that squeezes the phase-space energy distribution via
synchrosqueezing and coarsening into the sketch
$\widetilde{S}(T_f,h,u)$, resulting in a representation invariant to small deformation.

The extend of the deformation invariance of $\widetilde{S}(T^\cals_f,h,u)$ is
determined by the step sizes $h$ and $u$. Note that a fully
deformation invariant representation would not be able to distinguish
different Bravais lattices (see Figure \ref{fig:Bravais}), since these
lattices are the same in the quotient group of affine
transforms. Hence, we should choose appropriate step sizes $h$ and $u$
such that: 1) they are sufficiently large, making $\widetilde{S}(T^\cals_f,h,u)$
invariant to small elastic deformation due to external forces on the
material; 2) they are small enough such that $\widetilde{S}(T^\cals_f,h,u)$ is
capable of distinguishing different lattices.

%Since the wave packet transform before synchrosqueezing and sketching is a local transform, the wave packet transform locally only detects the linearized affine transform of the nonlinear deformation $\phi$ around the spatial location $x$. This linearization can also be understood in the sense that the wave
%packet transform can only provide sufficient information to estimate
%the inverse deformation gradient
%$G_0(x)= N \nabla \phi(x)\in \mathbb{R}^{2\times 2}$, which results in a
%first order approximation of the nonlinear deformation $N\phi(x)$.

The sketch $\widetilde{S}(T^\cals_f,h,u)$ is also invariant to
rotation and scaling. After shifting and rescaling, for a fixed $x$,
the new SS energy distribution $T^\cals_{\!f}(r,\vartheta,x)$ reaches
its maximum value at $(\cals+\frac{u}{2},\frac{h}{2})$. Hence, the
sketch $\widetilde{S}(T^\cals_f,h,u)(r,\vartheta,x)$ has the maximum
value in $[\cals,\cals+u)\times [0,h)$ for each fixed $x$.

%Hence, $\widetilde{S}(T^\cals_f,h,u)$ is a translation, rotation, scaling, and small deformation invariant representation of the crystal image $f(x) = \alpha(x)s(2\pi N\phi(x))$, as long as $\alpha(x)$ and $\phi(x)$ are smooth and slowly changing functions.

It is worth pointing out that the sketch
$\widetilde{S}(T^\cals_f,h,u)(r,\vartheta,x)$ is separably invariant
with respect to rotation and translation. Therefore, the sketch cannot
discriminate a class of similar texture patterns, e.g. see Figure
\ref{fig:sim} for an example of two images sharing similar absolute
values of the wave packet coefficients. Fortunately, this is not a
problem in atomic resolution crystal image analysis, since all crystal
patterns behave similarly to the pattern in Figure \ref{fig:sim}
(left).

Finally, to remove the influence of the amplitude function $\alpha(x)$,
%in $f(x) = \alpha(x)s(2\pi N\phi(x))$, 
i.e., obtaining the
illumination invariance, we normalize the magnitude of the phase-space
sketch $\widetilde{S}(T^\cals_f,h,u)(r,\vartheta,x)$ and define
\begin{equation}\label{eqn:sk}
S(T^\cals_f,h,u)(r,\vartheta,x) =\dfrac{ \widetilde{S}(T^\cals_f,h,u)(r,\vartheta,x)}{\displaystyle \max_{r\in[0,\infty),\theta\in[0,\pi)}   \widetilde{S}(T^\cals_f,h,u)(r,\vartheta,x)}.
\end{equation}
% Then $S(T^\cals_f,h,u)(r,\vartheta,x)$ is a translation, rotation, illumination, scaling, and small deformation invariant representation of the crystal image $f(x) = \alpha(x)s(2\pi N\phi(x))$, as long as $\alpha(x)$ and $\phi(x)$ are smooth and slowly changing functions. 

 \begin{figure}
  \begin{center}
  \begin{tabular}{c}
        \includegraphics[height=1in]{./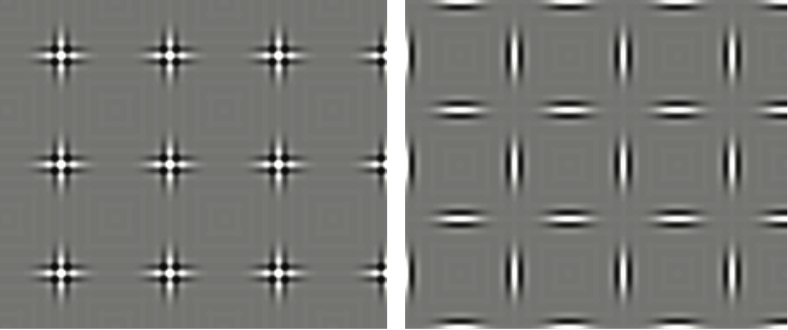}
     \end{tabular}
  \end{center}
  \caption{A separable invariant along rotations and translations cannot distinguish the left and right texture patterns, but a joint rotation-translation invariant can \cite{Sifre:2013}. Courtesy of Sifre et al..}
  \label{fig:sim}
\end{figure}

One may also consider sketching other more standard phase-space
representations, e.g., the polar windowed Fourier transform or wavelet
transform. However, it is more advantageous to combine sketching with the
synchrosqueezed transforms. As we have shown in Figure
\ref{fig:patch}, the SS energy distribution has concentrated support
around the local wave vectors of the crystal image, while the results
of windowed Fourier transform or wavelet transform are more
spread-out. Hence, the sketch of the windowed Fourier or wavelet
transform is not as clean as that of the SS energy distribution
(see the comparison in Figure \ref{fig:skcomp}).
%Since it is easier 
%to extract features from clean
%representation for image classification and segmentation
%afterwards. 
Moreover, the SS
energy distribution provides useful information for crystal image
analysis and computing its sketching only adds minimal computational overhead. 
%
%
%\begin{figure}
%  \begin{center}
%  \begin{tabular}{ccc}
%        \includegraphics[height=1in]{./}& \raisebox{-0.02in} {\includegraphics[height=0.85in]{./}}& \raisebox{-0.02in} {\includegraphics[height=0.85in]{./}}
%     \end{tabular}
%  \end{center}
%  \caption{Left: a toy example of atomic resolution crystal images with different crystal patterns. Middle: classification results by the sketch of the windowed polar Fourier transform. Right: classification results by the sketch of the synchrosqueezed transform. The crystal image is of size $512\times 512$. We sub-sample the crystal image to save computation (one sample every $4$ pixels) and generate local patches of size $65\times 65$ pixels centered at the sub-sampled points. Each patch is associated with a phase-space sketch. These patches are classified based on the compressed feature vector of the sketch using spectral clustering. The centers of outlier patches are indexed with zeros.}
%  \label{fig:clscomp}
%\end{figure}

Finally, we close the introduction of phase space sketching with the following theorem.

\begin{theorem}
  \label{thm:main2}
  Suppose $f_1(x)=s(2\pi N_1x)$ and $f_2(x)=\alpha(x)s(2\pi N_2\phi(x))$ for $x\in[0,1]^2$ are two $2D$ GIMT's of type $(M,N_1)$ and $(M,N_2)$, respectively, such that $s(x)$ is a shape function describing one crystal configuration (square or hexegonal lattice). $\forall\epsilon>0$ and $n_0\in \mathbb{N}^+$, let $\gamma>\sqrt{\epsilon}$, compute the local wave vector estimations as in \eqref{eqn:lwv} and the SS energy distribution $T_{\!f_j}(r,\vartheta,x)$ as in \eqref{eqn:psst} in the domain $[0,n_0N_j]\times[0,\pi)\times [0,1]^2$ for $j=1$ and $2$. $\exists N_0(M,r,s,t,d,\eps)$ given by Theorem \ref{thm:main} such that $\forall N_1>N_0$, $\forall N_2>N_0$, and $\forall\gamma>\gamma_0$, if positive numbers $\rho$, $h$, and $u$ satisfy the following conditions:
  \[
  u>4R_{M,\epsilon} n_\gamma (\rho+\frac{u}{2}),\quad h>2\arcsin\left(2R_{M,\epsilon}\right),\quad\frac{\cals}{u}\in\mathbb{N}^+,\quad\frac{\pi}{6h}\in\mathbb{N}^+,
  \]
  where $R_{M,\epsilon}:=\max\{M-1,1-\frac{1}{M}\}+M\sqrt{\epsilon}$, $n_\gamma:=\min\{n\in\mathbb{N}^+:|W_{f_j}(a,\theta,b)|\leq \gamma,\forall a>nN_j \text{ and }\forall j\}$, then the phase space sketch of $f_1(x)$ and $f_2(x)$ satisfy
  \[
  S(T^\cals_{f_1},h,u)(r,\vartheta,x) =S(T^\cals_{f_2},h,u)(r,\vartheta,x) +O(\sqrt{\epsilon})
  \]
  for all $(r,\vartheta,x)\in [0,n_0\cals]\times[0,\pi)\times[0,1]^2$.

\end{theorem}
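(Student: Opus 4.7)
The overall strategy is to apply Theorem~\ref{thm:main} to both $f_1$ and $f_2$, showing that each SS energy distribution concentrates around a deformed copy of the reciprocal lattice $\mc{L}^{\ast}$, and then to exploit the three layers of invariance built into the sketch (shift in $\vartheta$, radial rescaling, and binning) to absorb the distortion caused by $\phi$. Write $\phi_1(x) = x$ and $\phi_2(x) = \phi(x)$, so that in the interior of the image both images are GIMTs of the same shape function. Theorem~\ref{thm:main} applied separately to $f_1$ and $f_2$ gives that, for every $\xi \in \mc{L}^{\ast}$ with $\widehat{s}(\xi)\neq 0$, the distribution $T_{f_j}$ is supported (up to thresholding $\gamma>\sqrt{\epsilon}$) in small neighbourhoods of the wave vectors $N_j(\nabla\phi_j(x))^{\TT}\xi$, with relative error $\lesssim\sqrt{\epsilon}$. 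The threshold $\gamma_0$ is chosen so that the peak associated with the shortest nonzero $\xi\in\mc{L}^\ast$ stays above $\gamma$, ensuring that $(r_0^{(j)}(x),\theta_0^{(j)}(x))$ indeed picks out this fundamental peak.

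\textbf{Alignment after shift and rescale.} After applying \eqref{eqn:rsc}, both $T^{\cals}_{f_1}$ and $T^{\cals}_{f_2}$ have their dominant peak at the point $(\cals+u/2,h/2)$. The other lattice peaks are the images of the secondary reciprocal lattice vectors under an affine map that sends the fundamental peak to $(\cals+u/2,h/2)$. For $f_1$, this map is a pure rotation/scaling that sends $N_1\xi_0\mapsto(\cals+u/2)e_{h/2}$, so secondary peaks lie at rigid-body rotations of the lattice. For $f_2$, the analogous map is $N_2(\nabla\phi(x))^{\TT}\xi_0\mapsto(\cals+u/2)e_{h/2}$, and the secondary peaks are at $(\cals+u/2)\,(\nabla\phi(x))^{\TT}\xi/|(\nabla\phi(x))^{\TT}\xi_0|$, rotated by $h/2$. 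The first main calculation is to bound, for every $\xi\in\mc{L}^\ast$ with $\widehat{s}(\xi)\neq 0$, the distance between the corresponding peaks of $T^{\cals}_{f_1}$ and $T^{\cals}_{f_2}$. Using $|\nabla(n^{\TT}\phi)/|n^{\TT}||\in[1/M,M]$ together with the $\sqrt{\epsilon}$-accuracy from Theorem~\ref{thm:main}, one gets: the relative radial stretch is bounded by $R_{M,\epsilon}=\max\{M-1,1-1/M\}+M\sqrt{\epsilon}$, and the angular deviation is bounded by $\arcsin(2R_{M,\epsilon})$ (the factor $2$ comes from combining the angular distortion of the reference $\xi_0$ with that of a generic $\xi$). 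Since every peak retained by the threshold lies at radius at most $n_\gamma\,(\cals+u/2)$ (by the definition of $n_\gamma$), the radial displacement is at most $R_{M,\epsilon}\,n_\gamma(\cals+u/2)$.

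\textbf{Bin matching and normalization.} The condition $u>4R_{M,\epsilon}n_\gamma(\cals+u/2)$ ensures that, even when two matched peaks straddle a bin boundary on opposite sides, a shift of this magnitude cannot move either peak out of the union of two adjacent bins, and in combination with the lattice alignment condition $\cals/u\in\NN^+$, the binning grid is centred so that both peaks fall in the same radial bin. Analogously, $h>2\arcsin(2R_{M,\epsilon})$ combined with $\pi/(6h)\in\NN^+$ (which aligns the angular bins with the $6$-fold symmetry of the hexagonal lattice and a fortiori the $4$-fold one of the square lattice) guarantees that the angular bin of each matched peak coincides. Consequently, for every bin $B_{i,k}=[iu,(i+1)u)\times[kh,(k+1)h)$, the integrals $\int_{B_{i,k}} T^{\cals}_{f_j}(\widetilde{r},\widetilde{\vartheta},x)\widetilde{r}\,\mathrm{d}\widetilde{r}\,\mathrm{d}\widetilde{\vartheta}$ agree up to an $O(\sqrt{\epsilon})$ multiplicative error coming from the tails of the localisation in Theorem~\ref{thm:main} and from the uncontrolled amplitude $\alpha(x)^2|\widehat{s}(\xi)|^2$ in the approximation \eqref{eq:Tfapprox}. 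Finally, the normalization \eqref{eqn:sk} divides out both $\alpha(x)^2$ and the common factor $|\widehat{s}(\xi_0)|^2$, yielding $S(T^\cals_{f_1},h,u)=S(T^\cals_{f_2},h,u)+O(\sqrt{\epsilon})$ on $[0,n_0\cals]\times[0,\pi)\times[0,1]^2$.

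\textbf{Hardest step.} The main technical difficulty is the sharp tracking of peak displacements in Step~2, specifically identifying $R_{M,\epsilon}$ as the exact constant controlling radial drift and $\arcsin(2R_{M,\epsilon})$ as the correct angular-drift constant, and verifying that these constants interact correctly with the bin grid under the arithmetic conditions on $\cals/u$ and $\pi/(6h)$. A secondary subtlety is that the maximisers $(r_0^{(j)},\theta_0^{(j)})$ in \eqref{eqn:rsc} need not be unique; one must check that any admissible choice yields the same sketch after binning, which follows from the symmetry alignment conditions imposed on $h$ and $u$ together with the $\sqrt{\epsilon}$-localisation of Theorem~\ref{thm:main}.
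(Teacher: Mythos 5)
Your proposal follows essentially the same route as the paper's own proof: apply Theorem~\ref{thm:main} to localize each SS energy distribution around the (deformed) reciprocal lattice, align the fundamental peaks by the shift-and-rescale of \eqref{eqn:rsc}, bound the residual radial and angular drift by $R_{M,\epsilon}n_\gamma(\cals+u/2)$ and $\arcsin(2R_{M,\epsilon})$ so that the bin-size and alignment conditions force matched peaks into identical bins, and let the normalization \eqref{eqn:sk} absorb $\alpha(x)$. The paper delegates the quantitative drift estimates to Figure~\ref{fig:p} while you spell them out explicitly, but the underlying argument is the same.
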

\begin{proof}
The proof of Theorem \ref{thm:main2} is straightforwad and only need elementary algebraic calculation. The intuition is
that when $\gamma$ is sufficiently large, we have thresholded less important features in the shape function
$s(x)$ and only a few important local wave vectors of crystal images are left for pooling, which allows
the choices of $\rho$, $u$, and $h$ such that the phase space sketch only contains a few nonzero energy
bins indicating the type of crystal configuration.  We will focus on the case of square lattice and use Figure  \ref{fig:p} to sketch out the proof. Let's assume $N_1 = N_2 = N$ without the loss of generality since we can rescale the SS energy distribution according to \eqref{eqn:pp}. Orange areas in Figure \ref{fig:p} (a) and (b) denote
the supports of the SS energy distribution $T_{f_1} (r, \vartheta, x)$ for some vector $n$ and for a fixed $x$; the green
areas denote the supports of $T_{f_2} (r, \vartheta, x)$ for the same $n$ and $x$; the blue areas cover the possible
supports of $T_{f_2} (r, \vartheta, x)$ for all $x$ with the fixed $n$. By Theorem \ref{thm:main}, we can estimate the size of these
colored areas up to a constant prefactor independent of $N$ (see quantitative estimations in Figure \ref{fig:p}
(b)). Hence, after shifting and rescaling, we obtain $T^\cals_{f_1} (r, \vartheta, x)$ in Figure \ref{fig:p} (c) and $T^\cals_{f_2} (r, \vartheta, x)$ in Figure \ref{fig:p} (d). Finally, as long as $u$ and $h$ are large enough such that the blue areas in Figure \ref{fig:p} (c) and (d) are
covered in a box given by the grids, then we see that the phase space sketches of $f_1$ and $f_2$ are the same up to $\sqrt{\epsilon}$ in the range $[0,n_0\rho]\times[0,\pi)\times[0,1]^2$. The conditions $\frac{\rho}{u}\in N^+$, and $\frac{\pi}{6h}\in N^+$ make sure that a single bin in the phase space sketches can cover one blue area in Figure \ref{fig:p} (c) and (d). 
\end{proof}

Theorem \ref{thm:main2} shows that for crystal images with the same type of configuration (either square or hexagonal lattice), their phase space sketches are invariant to image translation, rotation, illumina- tion, and scale difference. This new representation is also stable to deformation and invariant to a class of elastic deformation characterized by the phase function $\phi(x)$ in 2D GIMTs. When the parameter $M$ in a GIMT is $O(1)$, it is easy to find parameters $\cals>0$, $h>0$, and $u>0$ to construct the phase space sketch. However, if $M$ is large, there is no good parameter to obtain invariant representations via phase space sketching. 

 \begin{figure}
  \begin{center}
  \begin{tabular}{c}
        \includegraphics[height=4.2in]{./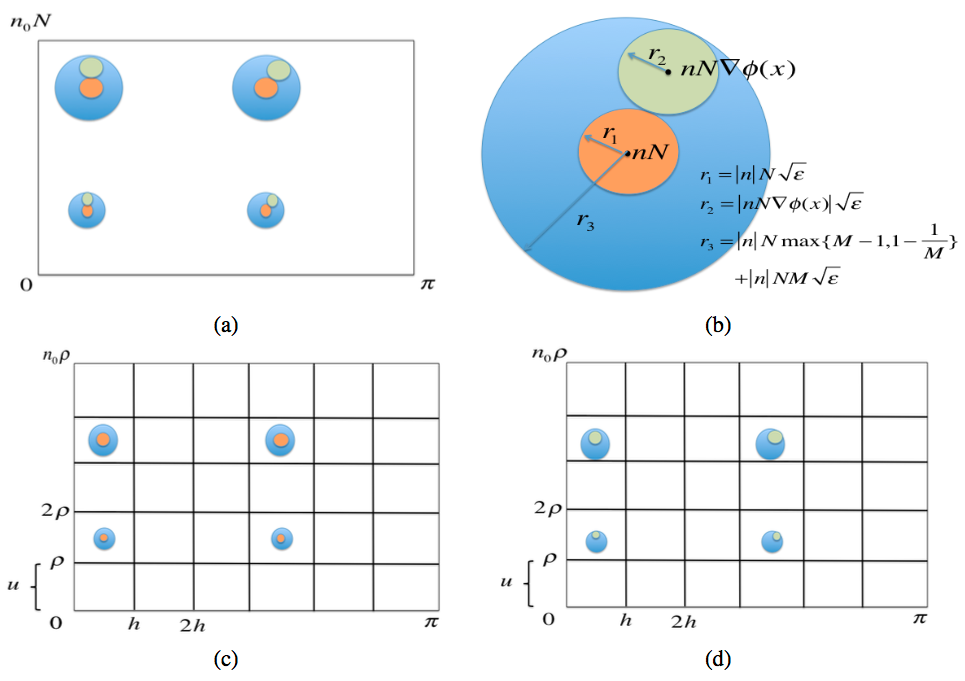}
     \end{tabular}
  \end{center}
  \caption{Illustration of the proof of Theorem \ref{thm:main2} in the case of square lattice in the asymptotic sense, i.e., the equalities and inequalitys hold up to a constant independent of $N$. (a) Supports of the SS energy of $f_1$ (in orange) and $f_2$ (in green) in
polar coordinate. Blue areas indicate the possible locations of the supports of the SS energy of $f_2$ when $\phi$ varies. (b) A zoomed-in example of the supports in (a) with
detailed locations in Cartesian coordinate for an $n$ such that $|n| \leq n_0$. (c) $T^\cals_{f_1} (r, \vartheta, x)$.
(d) $T^\cals_{f_2} (r, \vartheta, x)$.}
  \label{fig:p}
\end{figure}

\subsection{Classification}
\label{sub:cls}

We may apply the phase-space sketch to classify crystal textures in a
complicated crystal image. Although advanced classifiers such as SVM are useful tools
for classification, here we present some simple and efficient
classification methods.

As we have seen in Figure \ref{fig:skcomp}, the phase-space sketch
$S(T^\cals_f,h,u)$ at a point $x$ in the interior of a grain has
well-separated supports indicating the reciprocal lattice
$\mathcal{L}^\ast$ of the Bravais lattice $\mathcal{L}$ of the crystal
pattern. Hence, the locations and the magnitudes of the supports are
important features for crystal pattern classification. In the case of
a complicated crystal image, if two grains share the same crystal
pattern, they should have the same phase-space sketch
$S(T^\cals_f,h,u)$ at the locations $x$ sufficiently far away from
grain boundaries. Therefore, we only need to identify major groups of
the sketch $S(T^\cals_f,h,u)$ at different $x$'s and choose a
representative sketch from each major group. Other sketch outliers are
due to the influence of grain boundaries on the phase-space
representation; each sketch outlier contains the information of at
least two crystal patterns and looks like a superposition of more than
one sketches.

% \begin{figure}
%  \begin{center}
%  \begin{tabular}{cccc}
%        \includegraphics[height=0.65in]{./}& \includegraphics[height=0.65in]{./} & \includegraphics[height=0.65in]{./}& \includegraphics[height=0.65in]{./fig/clsType1_4.pdf} \\
%        Type $1$ & Type $2$ & Type $3$ & Type $4$
%     \end{tabular}
%  \end{center}
%  \caption{From left to right, the phase-space sketch $S(T^\cals_f,h,u)(r,\vartheta,x)$ of the SST corresponding to the type of reference configuration identified in Figure \ref{fig:clscomp}. Red dots in Figure \ref{fig:clscomp} (right) indicate the position $x$'s of the sketches in this figure.%53 seconds 
%  }
%  \label{fig:refSST}
%\end{figure}

 \begin{figure}
  \begin{center}
  \begin{tabular}{cc}
        \includegraphics[height=1in]{./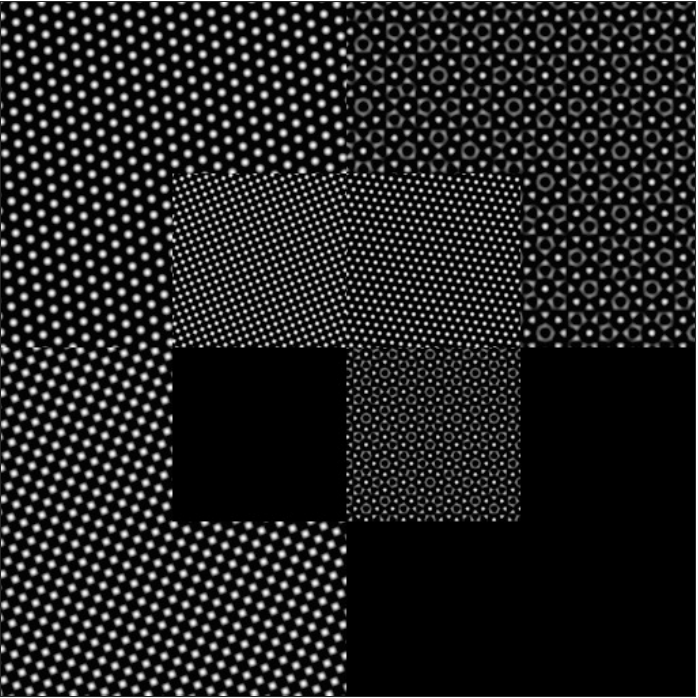}&  \raisebox{-0.02in} {\includegraphics[height=1.025in]{./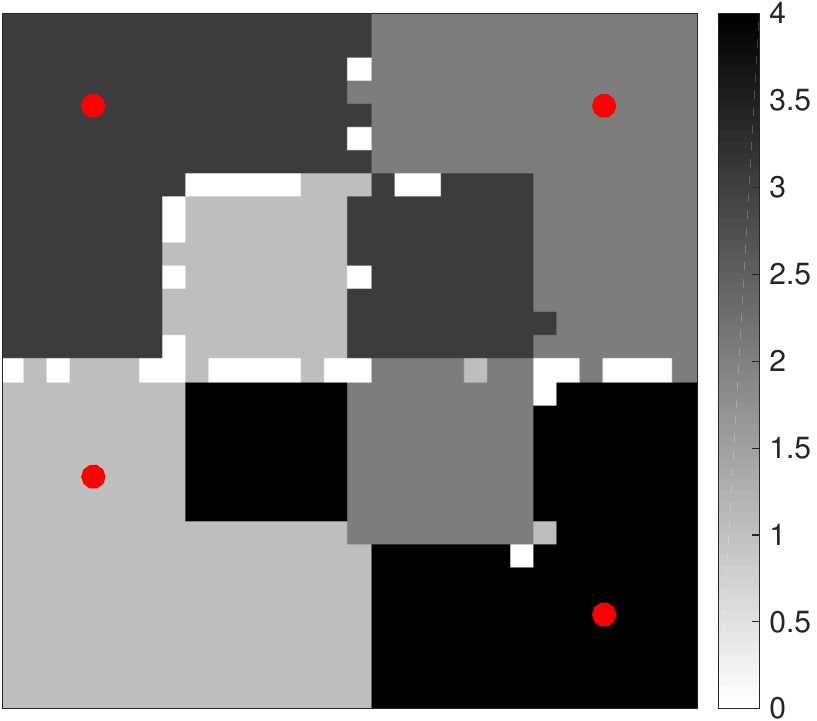}}
     \end{tabular}
  \end{center}
  \caption{Left: a toy example of atomic resolution crystal images with different crystal patterns. Right: classification results of the sketch by the SST. The crystal image is of size $512\times 512$. We sub-sample the crystal image to save computation (one sample every $4$ pixels) and generate local patches of size $65\times 65$ pixels centered at the sub-sampled points. Each patch is associated with a phase-space sketch. These patches are classified based on the compressed feature vector of the sketch using spectral clustering. The centers of outlier patches are indexed with zeros.}
  \label{fig:cls2}
\end{figure}

A simple idea for classification is to define an appropriate distance
to measure the similarity of different phase-space sketches and apply
the spectral clustering \cite{Ng:2001} to identify the major groups of the sketches. 
%
%A natural choice is the Earth mover's distance (EMD)
%\cite{Pele2008,EMD}, since the sketch
%$S(T^\cals_f,h,u)$ behaves like the distribution of the reciprocal
%lattice $\mathcal{L}^\ast$ of the Bravais lattice $\mathcal{L}$. The
%EMD of two sketches is large if they are visually different. The EMD of
%two sketches from the same crystal pattern is almost zero or rather
%small, as the sketching has removed much the influence of crystal
%variability. 
%% In the case when the sketching cannot eliminate all the
%% influence of the crystal variability, two sketches have almost the same
%% supports and magnitudes, and the EMD of these sketches is still small.
%
%Although the EMD admits fast evaluation or approximation \cite{Pele2008,EMD}, computing the EMD between all sketch pairs is still computationally expensive. 
%This motivates the design of a computationally feasible algorithm based on feature selection. 
Since the types of crystal textures are limited, it is not necessary to use the whole sketch for discrimination. To ensure that the method is as efficient as possible, features contained in the sketch  invariant to uninformative variability in crystal images are more important. Observe that the phase-space sketch is able to sketch out the multi-scale reciprocal lattice $\mathcal{L}^\ast$ (see Figure \ref{fig:skcomp} (right)), and the numbers of supports at different scales largely determine the crystal pattern. It is sufficient to use these numbers as a compressed feature vector to represent the sketch. For example, the feature vector of the sketch in Figure \ref{fig:skcomp} (right) is $(2,2,2)$, where each $2$ means that there are $2$ leading supports above a certain threshold at each scale (each row in the phase space sketch matrix). For some other examples for the feature vectors, please see Figure \ref{fig:cls2rs}. Note that the number of supports are invariant to translation, rotation, illumination, and small deformation. Hence, the feature vector by the number of supports in sketches is a compressed invariant representation of crystal patterns. The standard Euclidean distance of vectors is a natural choice to measure the similarity of these compressed invariant representations. Using the number of supports only might be too ambitious in some cases. A better way is to take advantage of the magnitudes of the peaks in these supports, once two crystal patterns have sketches sharing the same numbers of supports. For example, a simple idea is to set up a threshold parameter and only count the number of supports above this threshold. In practice, this simple idea is sufficient to discriminate most crystal patterns in real applications.
% In what follows, the Euclidean distance of the compressive feature vectors will be applied in all numerical examples for crystal pattern classification.

 \begin{figure}
  \begin{center}
  \begin{tabular}{cccc}
        \includegraphics[height=1in]{./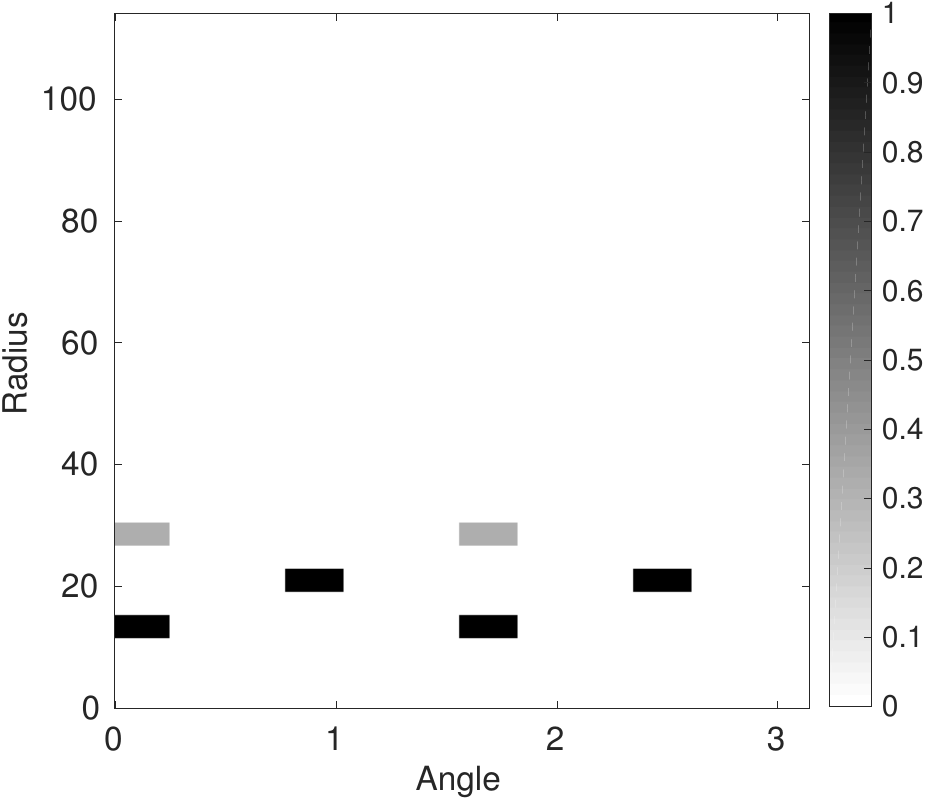} &\includegraphics[height=1in]{./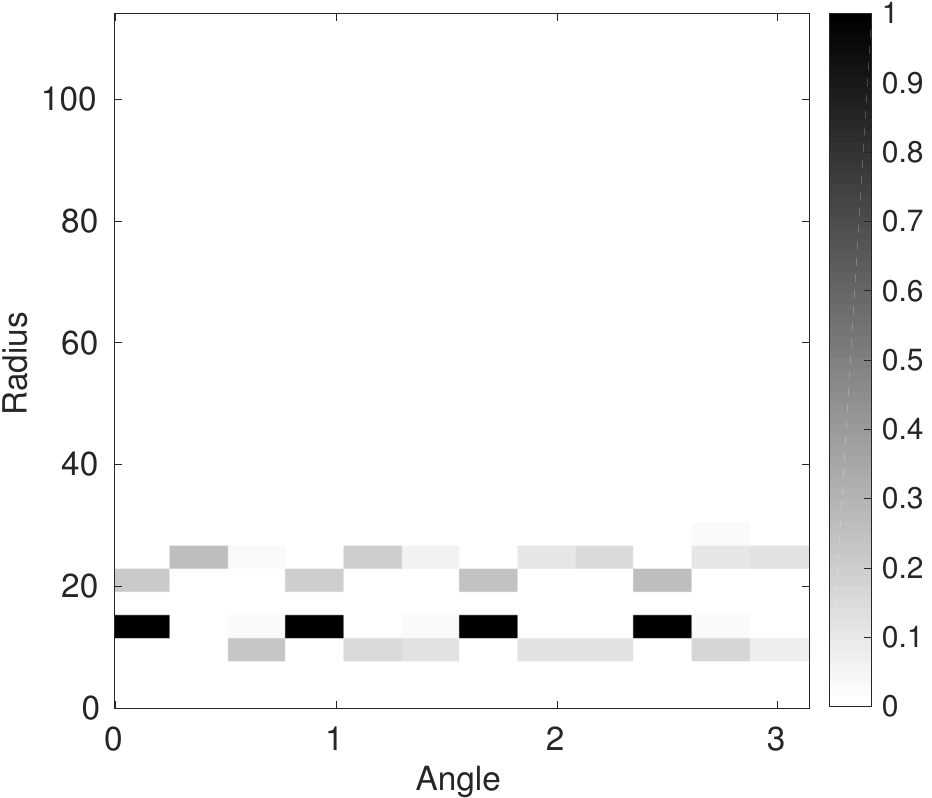} & \includegraphics[height=1in]{./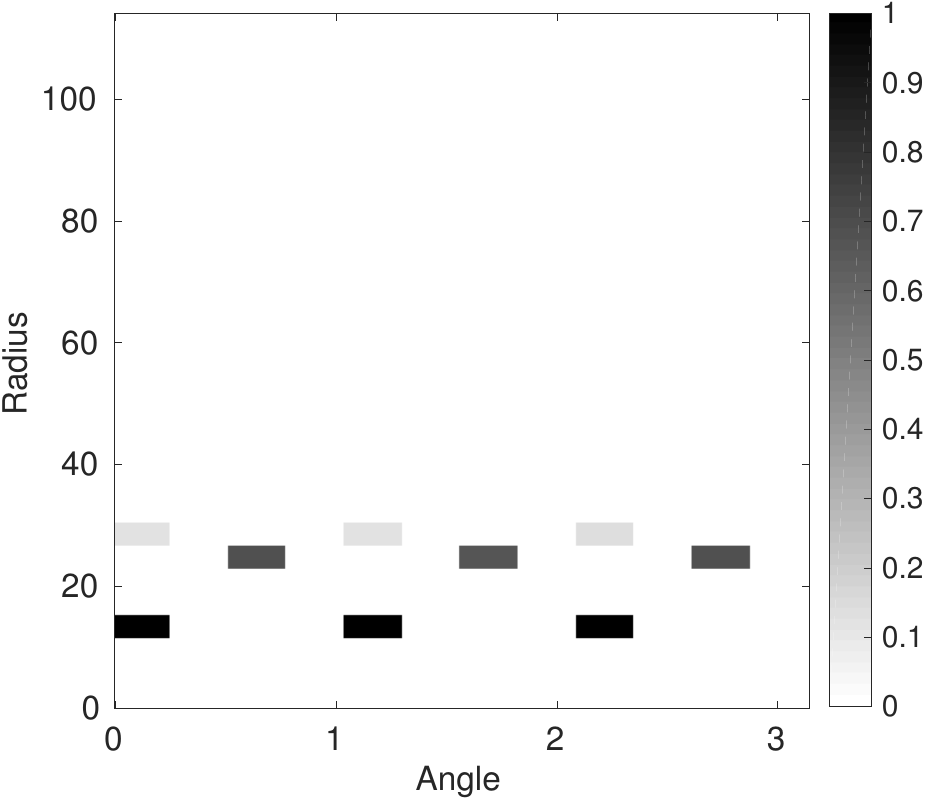}& \includegraphics[height=1in]{./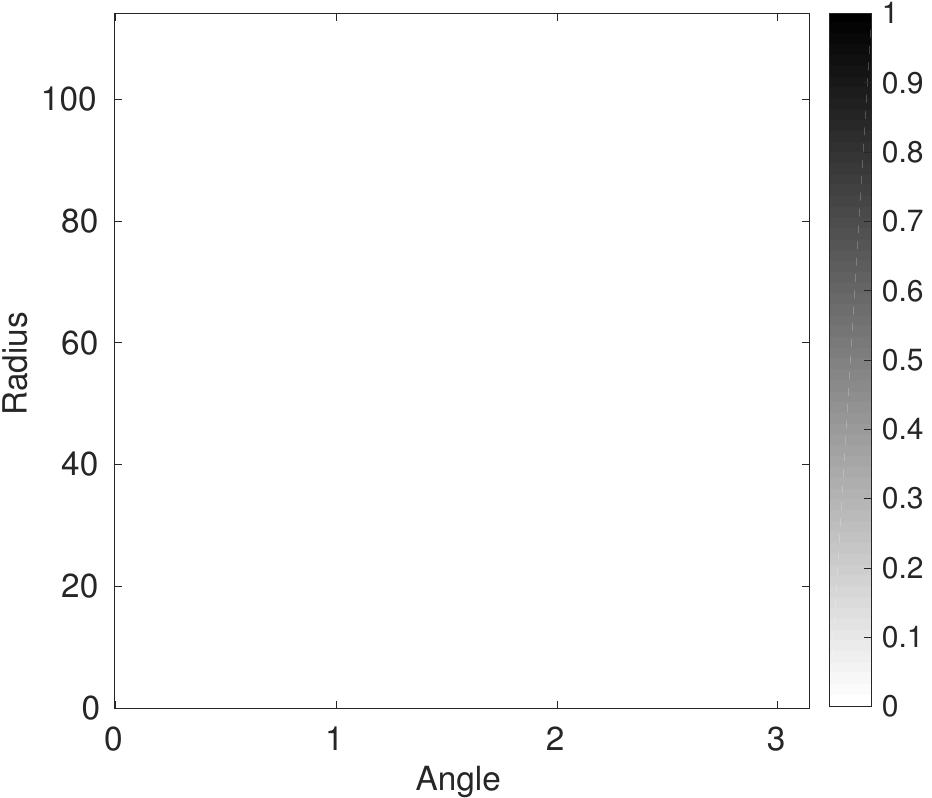} \\
        Type $1$ & Type $2$ & Type $3$ & Type $4$
     \end{tabular}
  \end{center}
  \caption{From left to right, the phase-space sketch $S(T^\cals_f,h,u)(r,\vartheta,x)$ of the SST corresponding to the type of reference configuration identified in Figure \ref{fig:cls2}. Red dots in Figure \ref{fig:cls2} (right) indicate the position $x$'s of the sketches in this figure. After thresholding these phase space sketches with a threshold $0.05$, the feature vectors, i.e., vectors describing the number of leading isolated supports, of these sketches are $(2,2,2)$, $(4,4,4,4)$, $(3,3,3)$, and an empty vector (from left to right).}
  \label{fig:cls2rs}
\end{figure}

Finally, we provide several synthetic examples to demonstrate the
efficiency of the proposed compressed feature vector for
classification. The first example in Figure \ref{fig:cls2} (left)
contains three different crystal patterns and one vacancy area, and 
each pattern has two grains with different orientations and scales. The
SST is applied to generate the
phase-space sketch and the corresponding compressed feature vectors at each
pixel of the crystal image. The spectral clustering algorithm is able
to identify four major groups of compressed feature vectors. According
to the clustering results of the feature vectors, we group the
corresponding pixels together and visualize the results in Figure
\ref{fig:cls2rs} (right). 
%Once the feature vectors have
%been classified, we can choose representative feature vectors, and
%accordingly their sketches or reference crystal lattices, as shown in
%Figure \ref{fig:refSST}. \jl{I don't understand the final sentence}

%The spectral clustering algorithm is able
%to identify four major groups of compressed feature vectors. According
%to the clustering results of the feature vectors, we group the
%corresponding pixels together and visualize the results in Figure
%\ref{fig:clscomp} (middle and right). The first example shows that the
%sketch of the SST is better than that of the polar windowed Fourier
%transform in terms of classification. Once the feature vectors have
%been classified, we can choose representative feature vectors, and
%accordingly their sketches or reference crystal lattices, as shown in
%Figure \ref{fig:refSST}. \jl{I don't understand the final sentence}
%
%In the second example in Figure \ref{fig:cls2} (left), the crystal image also contains three types of crystal patterns and vacancy areas, but each pattern has two grains with different orientations and scales. We compute the sketches via the SST, classify their compressed feature vectors, and visualize the classification results in Figure \ref{fig:cls2} (right). The proposed algorithm is able to identify all crystal patterns and their representative phase-space sketches are shown in Figure \ref{fig:cls2rs}.

 \begin{figure}
  \begin{center}
  \begin{tabular}{cc}
        \includegraphics[height=1in]{./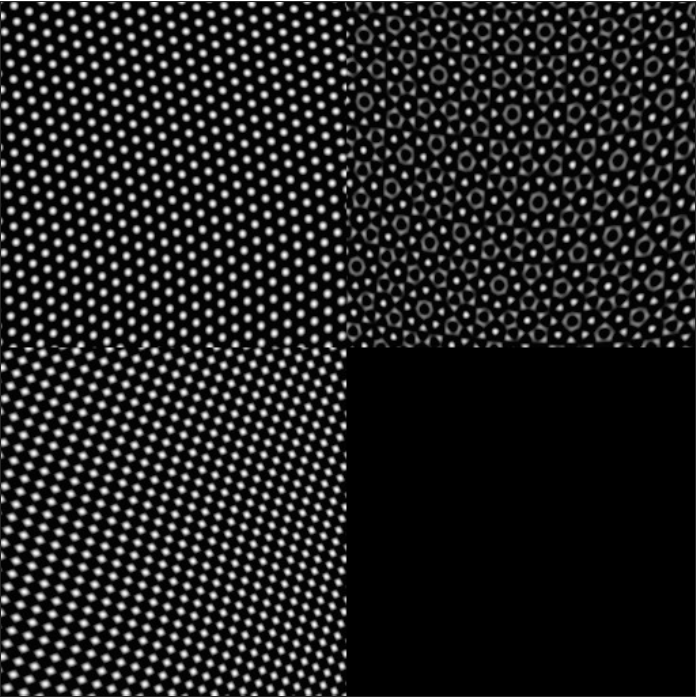}& \raisebox{-0.015in} {\includegraphics[height=1.025in]{./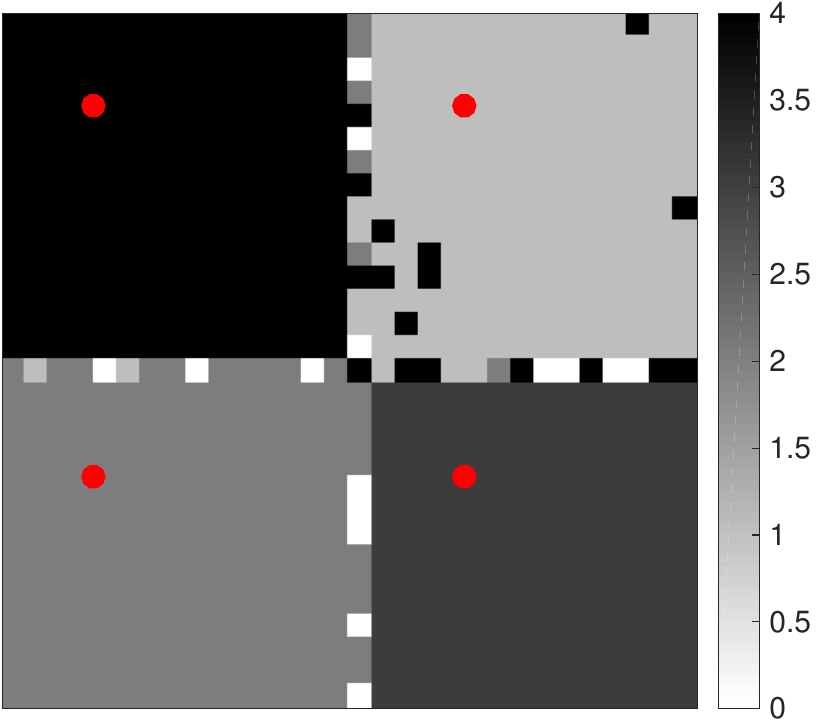}}
     \end{tabular}
  \end{center}
  \caption{Left: a toy example of atomic resolution crystal images
    with different crystal patterns. Right: classification results by
    the sketch of the SST. Numerical results were obtain using the
    same setting as in Figure~\ref{fig:cls2}.}
  \label{fig:cls3}
\end{figure}

 \begin{figure}
  \begin{center}
  \begin{tabular}{cccc}
        \includegraphics[height=1in]{./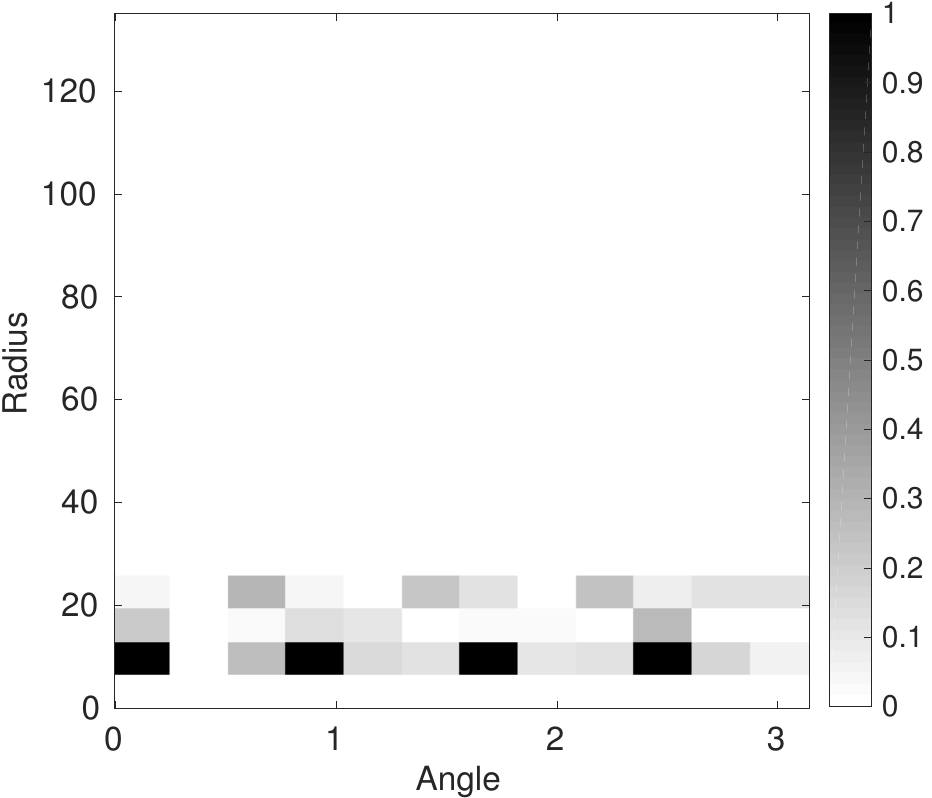}& \includegraphics[height=1in]{./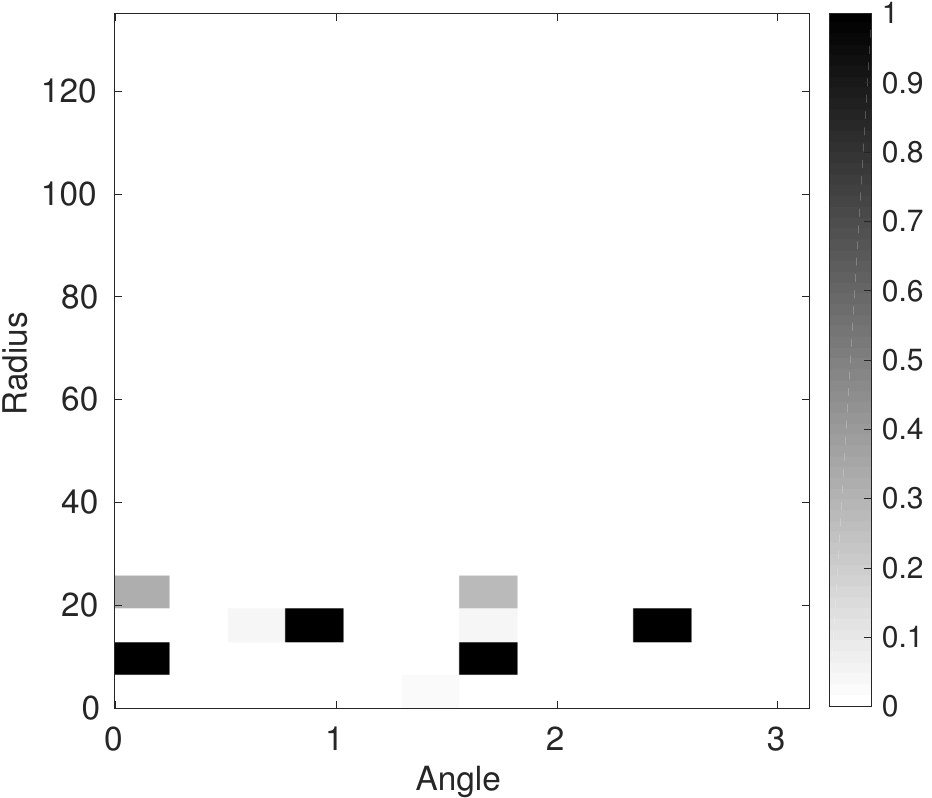} & \includegraphics[height=1in]{./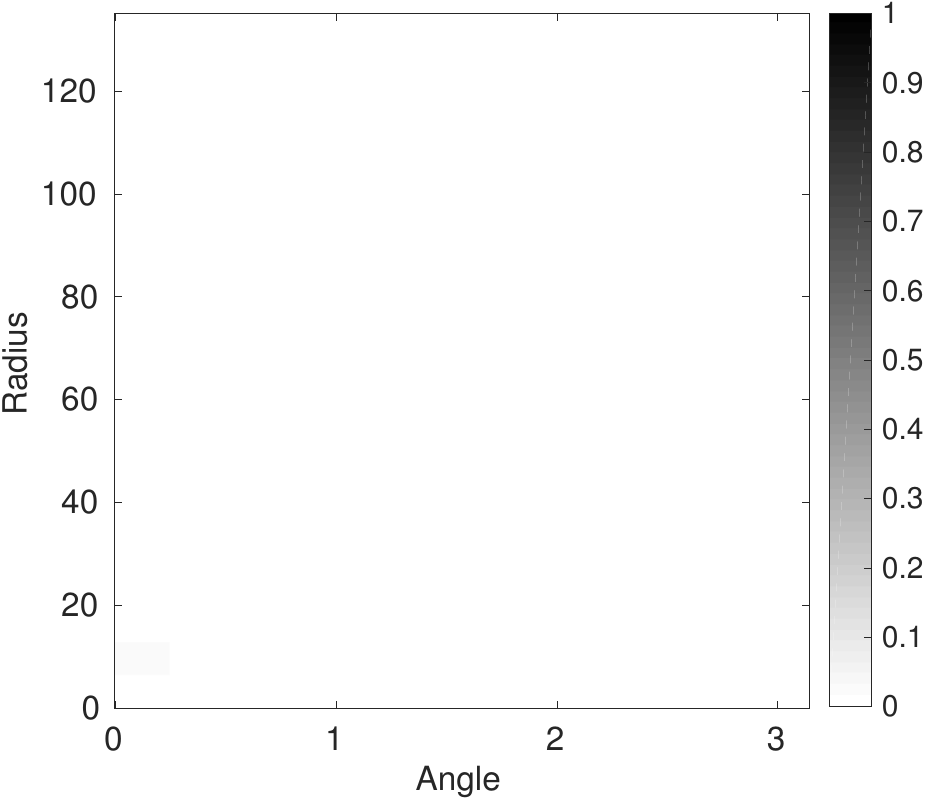} &\includegraphics[height=1in]{./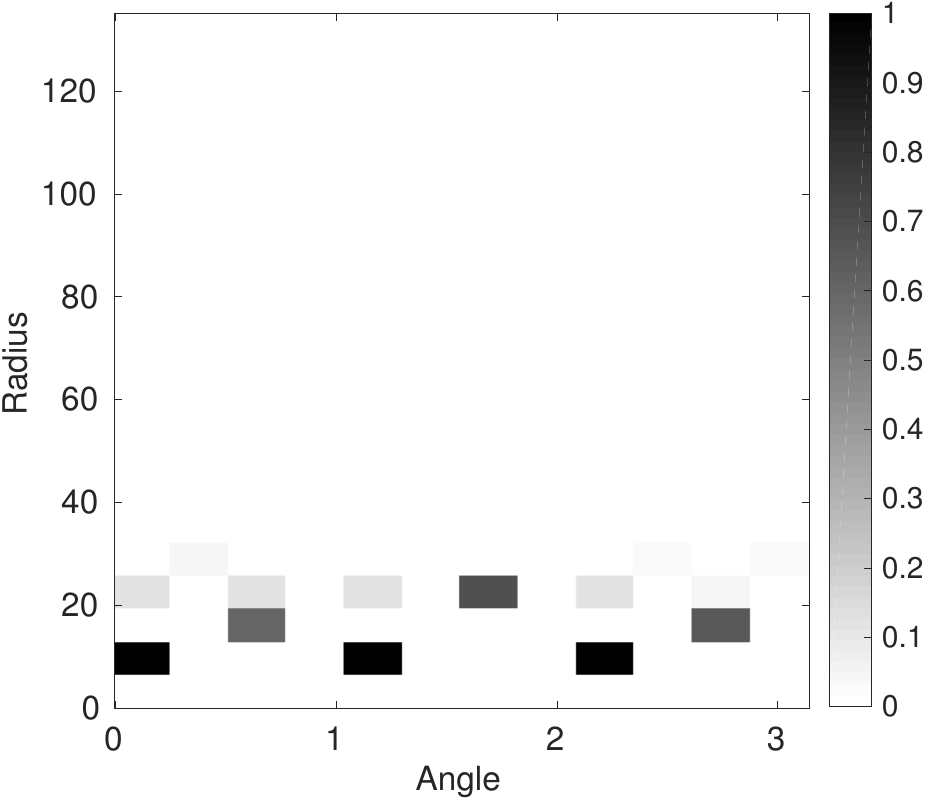} \\
        Type $1$ & Type $2$ & Type $3$ & Type $4$
     \end{tabular}
  \end{center}
  \caption{From left to right, the phase-space sketch $S(T^\cals_f,h,u)(r,\vartheta,x)$ of the SST corresponding to the type of reference configuration identified in Figure \ref{fig:cls3}. Red dots in Figure \ref{fig:cls3} (right) indicate the position $x$'s of the sketches in this figure.}
  \label{fig:cls3rs}
\end{figure}

The third example in Figure \ref{fig:cls3} (left) is similar to the first example but the crystal image has been smoothly deformed. As shown in Figure \ref{fig:cls3} (right) and Figure \ref{fig:cls3rs}, the proposed method succeeds to detect all crystal patterns and their sketches.

In the last example in Figure \ref{fig:cls4} (left), there are four different crystal patterns (two kinds of square lattices and two kinds of hexagonal lattices) with different levels of illumination. As shown in Figure \ref{fig:cls4} (right) and Figure \ref{fig:cls4rs}, the proposed method is able to detect all crystal patterns and their sketches. This is a very challenging example. As we can see in the sketches in Figure \ref{fig:cls4rs}, Type $1$ and $2$ sketches almost share the same support (and so do Type $3$ and $4$). Hence, as discussed previously, only the number of supports with peaks over an appropriate threshold is used in constructing the compressed feature vector.

 \begin{figure}
  \begin{center}
  \begin{tabular}{cc}
        \includegraphics[height=1in]{./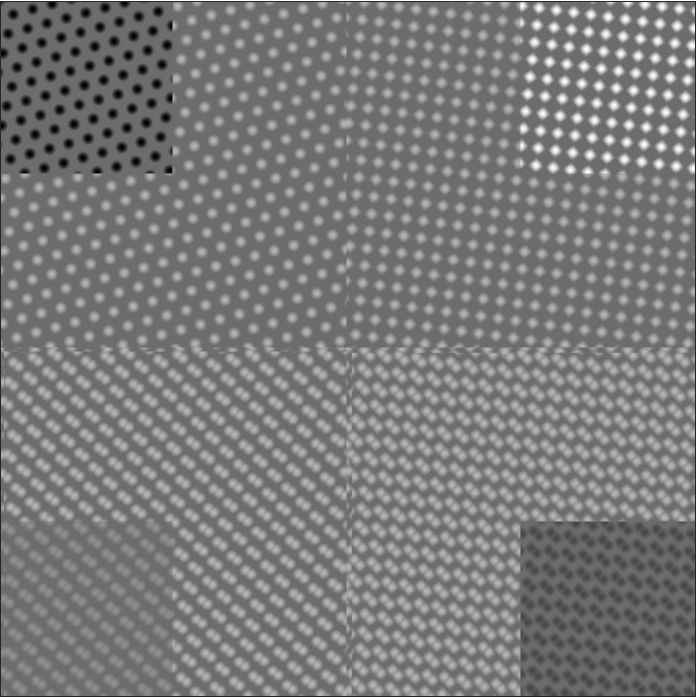}& \raisebox{-0.015in} {\includegraphics[height=1.025in]{./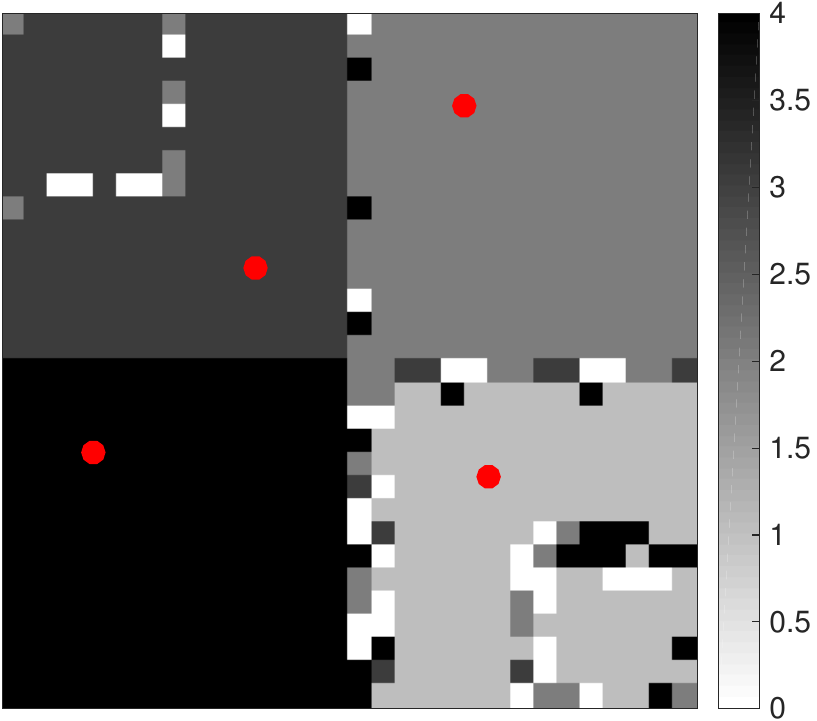}}
     \end{tabular}
  \end{center}
  \caption{Left: a toy example of atomic
  resolution crystal images with different crystal patterns. Right: classification results by the sketch of the SST. Numerical results were obtain using the same setting as in Figure~\ref{fig:cls2}.}
  \label{fig:cls4}
\end{figure}

 \begin{figure}
  \begin{center}
  \begin{tabular}{cccc}
        \includegraphics[height=1in]{./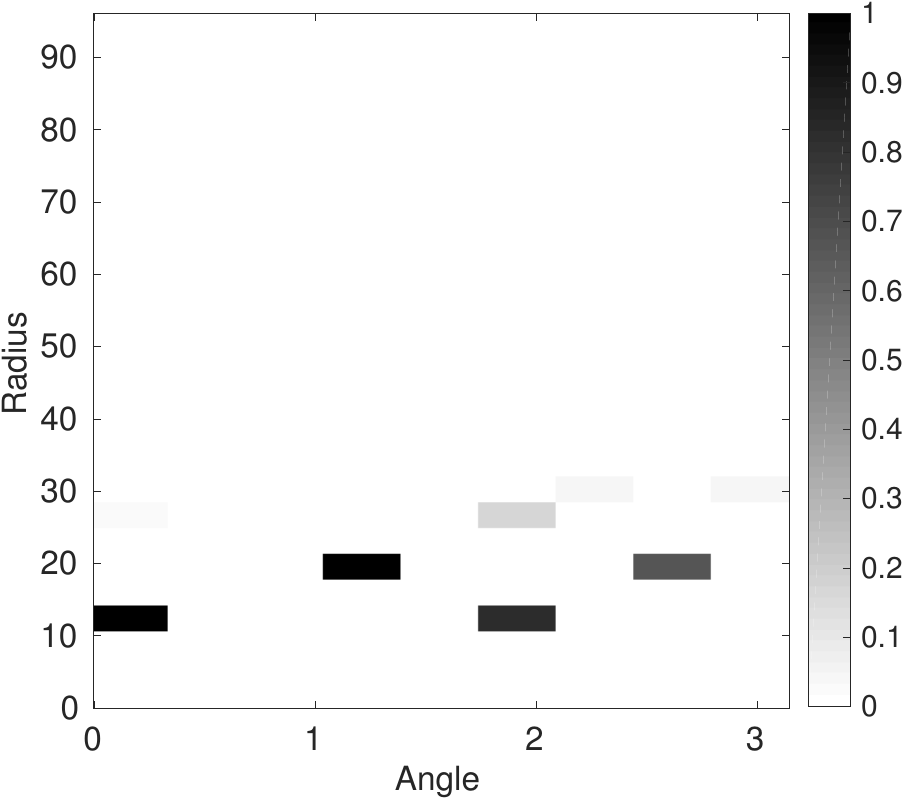}& \includegraphics[height=1in]{./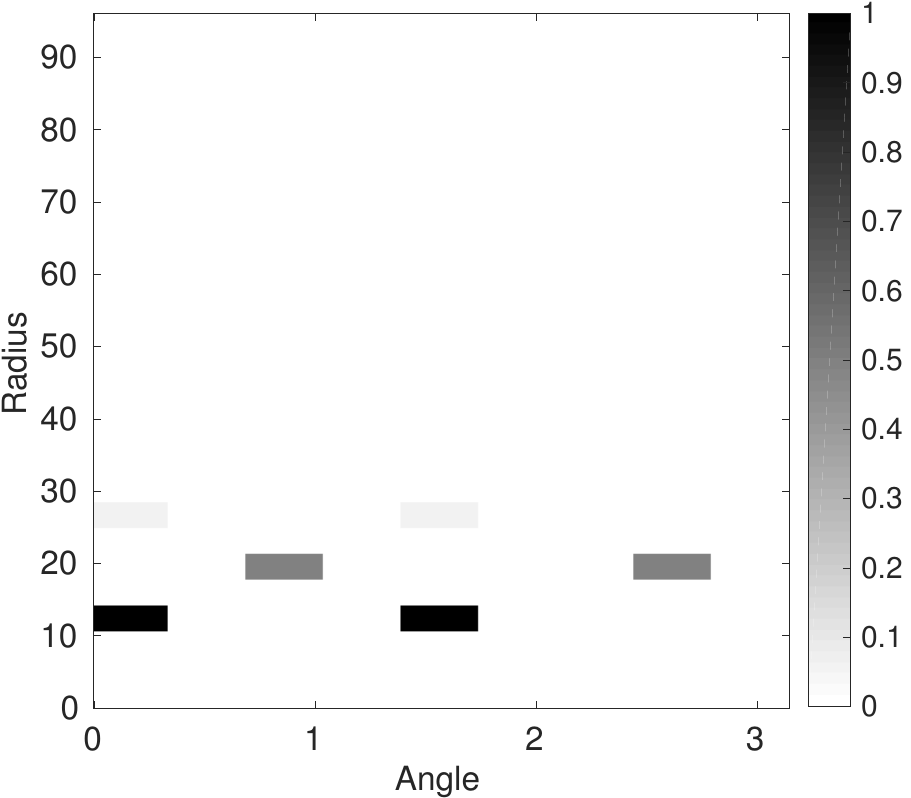}  &\includegraphics[height=1in]{./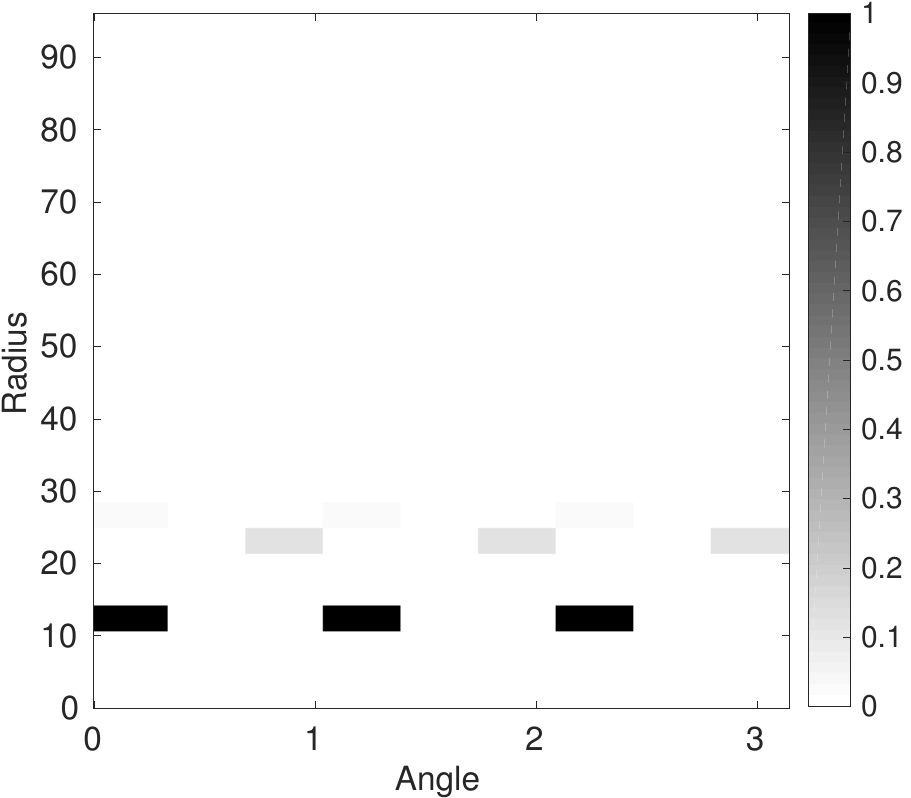} &\includegraphics[height=1in]{./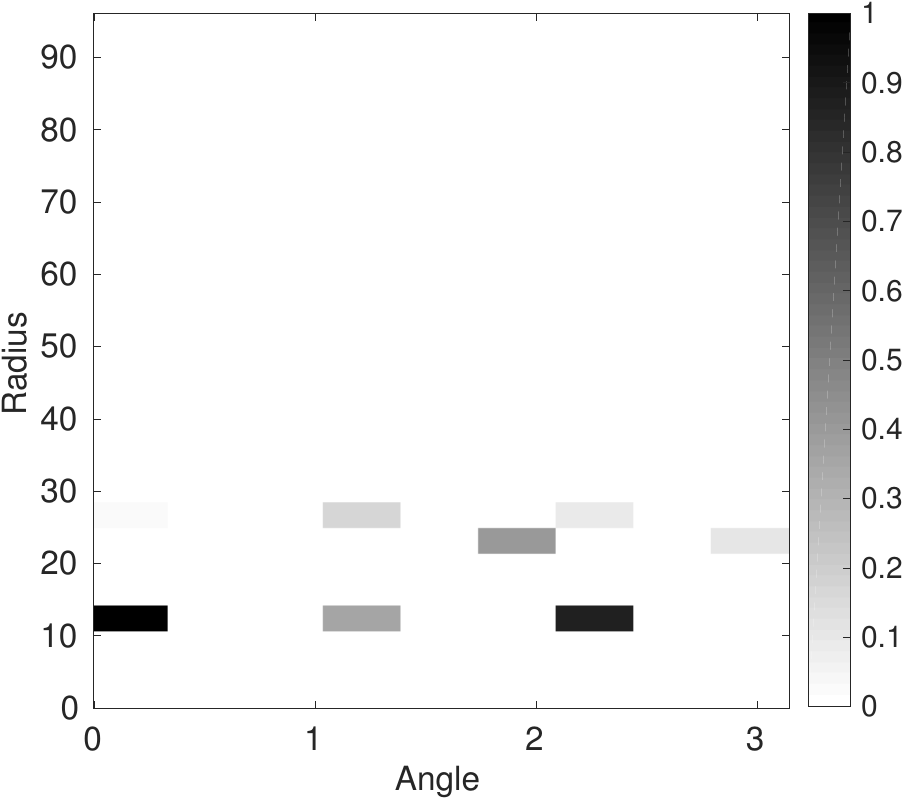} \\
        Type $1$ & Type $2$ & Type $3$ & Type $4$
     \end{tabular}
  \end{center}
  \caption{From left to right, the phase-space sketch $S(T^\cals_f,h,u)(r,\vartheta,x)$ of the SST corresponding to the type of reference configuration identified in Figure \ref{fig:cls4}. Red dots in Figure \ref{fig:cls4} (right) indicate the position $x$'s of the sketches in this figure.}
  \label{fig:cls4rs}
\end{figure}

To demonstrate the efficiency of the phace-space sketching method, we quantitatively compare it with the state-of-the-art algorithms\footnote{In our numerical tests, CNN \cite{CPA:CPA21413} doesn't perform well because crystal image data are limited and it is difficult to regularize the CNN with sufficiently good data augmentation. Hence, we only compare our algorithm with the scattering transform. Default parameters were used in the scatterign transform.}, e.g. the scattering transform in \cite{Sifre:2013}, using the examples in Figure \ref{fig:cls3} and \ref{fig:cls4}. Each example contains four different kinds of crystal texture; correspondingly, $36$ training images of size $64$ by $64$ with different orientation, translation, image intensity, deformation were generated for each texture class. Crystal images on the left of Figure \ref{fig:cls3} and \ref{fig:cls4} were partitioned into $32^2$ overlapping patches of size $64\times 64$ uniformly distributed in the image domain and these patches are used as validation data. The scattering transform was used to train and classify these data.  Table \ref{tab:comp1} and \ref{tab:comp2} summarize the performance of the propose method and the scattering transform. Numerical results show that the phase space method is highly efficient in running time and the success rate is higher than the scattering transform. The code and the data set for this comparison are available in the SynCrystal package. The numerical tests were performed on a MacBook Pro (processor 3.1 GHz Intel Core i5, memory 8 GB 2133 MHz LPDDR3).

\begin{table}[h]
\centering
\begin{tabular}{rcccc}
\toprule
   Example & Algorithm & $T_{sct}(sec)$ & $T_{cls}(sec)$ & Success Rate \\
\toprule 
   Figure \ref{fig:cls3} & non-log & 8.65e+02 & 2.60e-02 & 84.6  \\
   Figure \ref{fig:cls3} & log & 8.80e+02 & 3.02e-02 & 88.3  \\
\toprule 
   Figure \ref{fig:cls4} & non-log & 8.78e+02 & 2.60e-02 & 82.0  \\
   Figure \ref{fig:cls4} & log & 8.95e+02 & 2.60e-02 & 82.8  \\
\bottomrule
\end{tabular}
\caption{Performance of the scattering transform for crystal image classification. ``non-log" means the scattering transform without log scattering, while ``log" means with log scattering. $T_{sct}$ is the time for computing the scattering representation of all the training and validation data. $T_{cls}$ is the time for classifying the validation data. }
\label{tab:comp1}
\end{table}

\begin{table}[h]
\centering
\begin{tabular}{rcccc}
\toprule
   Example & $T_{sst}(sec)$ & $T_{skt}(sec)$ & $T_{cls}(sec)$ & Success Rate \\
\toprule 
   Figure \ref{fig:cls3} & 2.46e+01& 2.54e+01 & 1.80e+00 & 94.2  \\
\toprule 
   Figure \ref{fig:cls4} & 1.78e+01 & 2.16e+01& 1.91e+00 & 90.7  \\
\bottomrule
\end{tabular}
\caption{Performance  of the phace-space sketching algorithm for crystal image classification. $T_{sst}$ is the time for the synchrosqueezed transform; $T_{skt}$ is the time for phase-space sketching; $T_{cls}$ is the time for classification. The success rate has been visualized on the right of Figure \ref{fig:cls3} and Figure \ref{fig:cls4}.}
\label{tab:comp2}
\end{table}

\subsection{Segmentation}
\label{sub:seg}

In the previous section, we have applied the phase-space sketching to identify reference sketches for different crystal patterns in a complicated crystal image. As shown in Figure \ref{fig:cls2} to \ref{fig:cls4}, a crystal image can be roughly partitioned into several pieces and each piece contains grains sharing the same crystal pattern. However, we are not able to determine an accurate partition due to sketch outliers at grain boundaries. Since we already have the reference sketches by the algorithm in the previous section, we can simply match the sketch outliers with the reference sketches and identify the most possible type of crystal pattern. However, we do not expect the outlier treatment to give meaningful results at grain boundaries, because the results will hesitate between two types and in real data it is even difficult to distinguish types by manual inspection. Once we have completed the image segmentation for different crystal patterns, the atomic
  resolution crystal image analysis in Section \ref{sec:model} is applied within each type of crystal patterns to estimate defects and crystal deformations. This completes the complicated crystal image analysis. 

%Note that image segmentation based on phase-space representation for partitioning the texture part and the vacancy part might not be accurate enough especially near the boundaries of the vacancy part. The energy of the phase-space representation from the texture part will overwhelm the vacancy part and hence we will underestimate the area of the vacancy part. This has been observed in numerical results in Figure \ref{fig:cls2} to \ref{fig:cls4}. 
%Fortunately, image segmentation for this problem has been extensively studied and there are many methods available \cite{4569850,7127013}. Hence, we apply these standard algorithms to identify vacancy parts in later numerical examples without further discussion.
%
%It is also worth emphasizing that the problem discussed in this paper is different to standard texture segmentation problems, where the estimation of detailed information of texture is not necessary. In the problem discussed in this paper, besides crystal texture segmentation, we need to group similar crystal patterns together and estimate the reference reciprocal lattices $\mathcal{L}^\ast$ for the crystal analysis in Section \ref{sec:model}. It is expected that the overall performance of our algorithm can be improved with the help of advanced image segmentation algorithms \cite{7127013,4569850}, especially the results near grain boundaries. This is left as future work.

\section{Quantitative analysis in real applications}
\label{sec:analysis}

This section presents several real examples in materials science to demonstrate the efficiency of the proposed method in this paper. These atomic crystal examples include crack propagation, phase transition, and self-assembly.

 \begin{figure}
  \begin{center}
  \begin{tabular}{cc}
        \includegraphics[height=1.2in]{./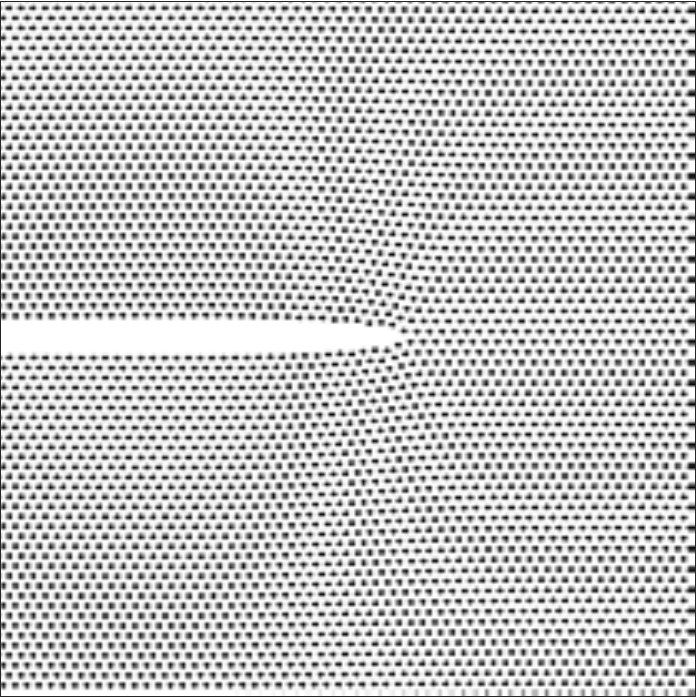} &     \includegraphics[height=1.2in]{./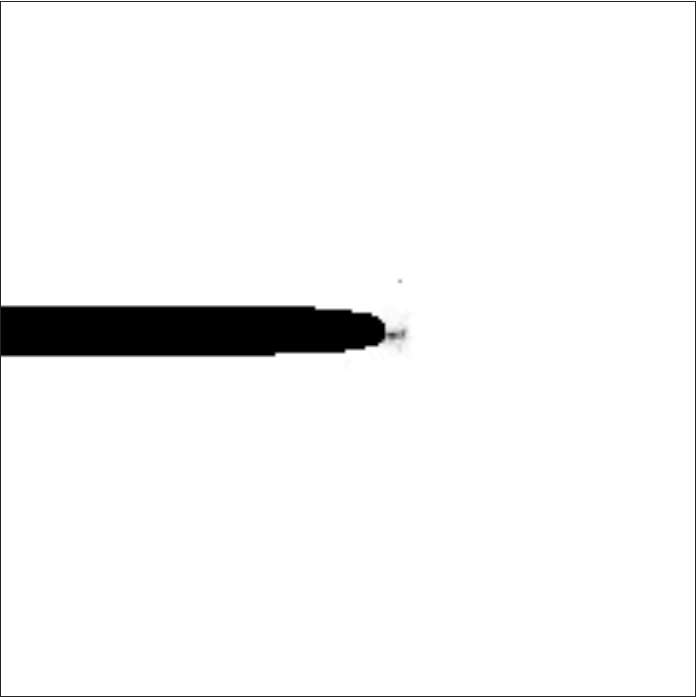} 
     \end{tabular}
       \end{center}
  \caption{Left: The atomic simulation at zero temperature to examine dislocation emission from a crack tip. Right: grain boundary by thresholding the defect indicator function of the left figure with a black mask for the vacancy area given by the classification using phase space sketching.}
  \label{fig:ex1}
\end{figure}

 \begin{figure}
  \begin{center}
       \begin{tabular}{ccc}
     &\includegraphics[height=1.2in]{./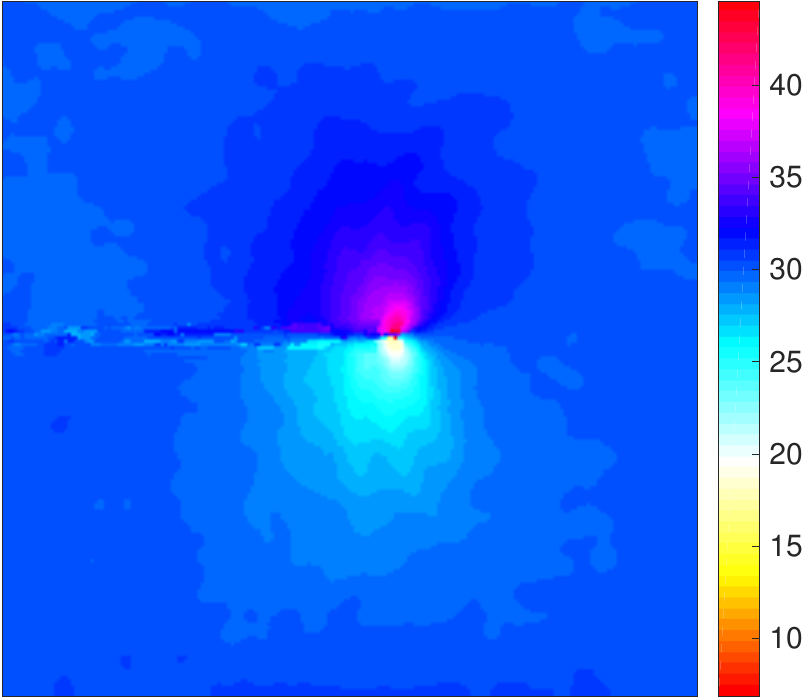}    \includegraphics[height=1.2in]{./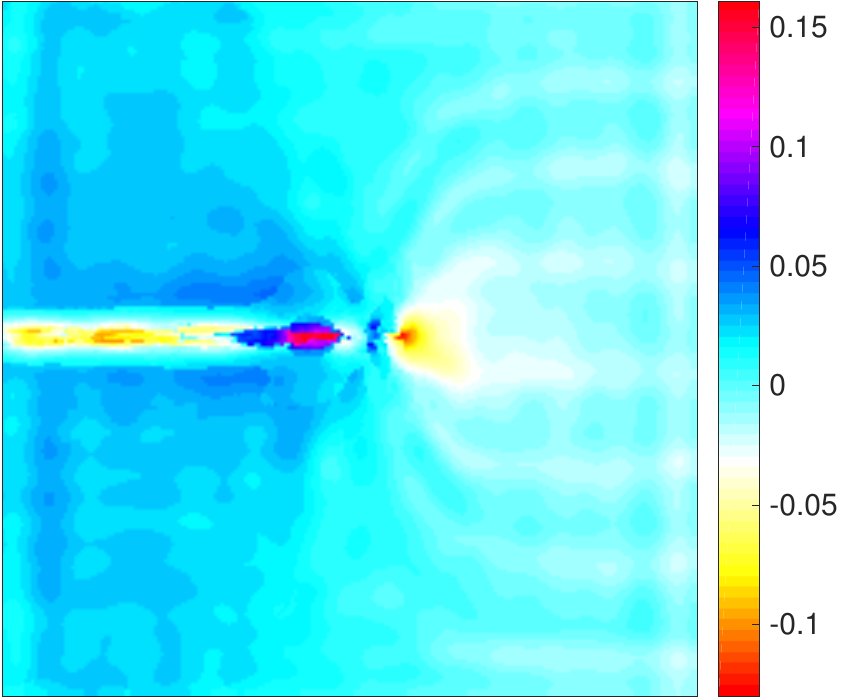} &\includegraphics[height=1.2in]{./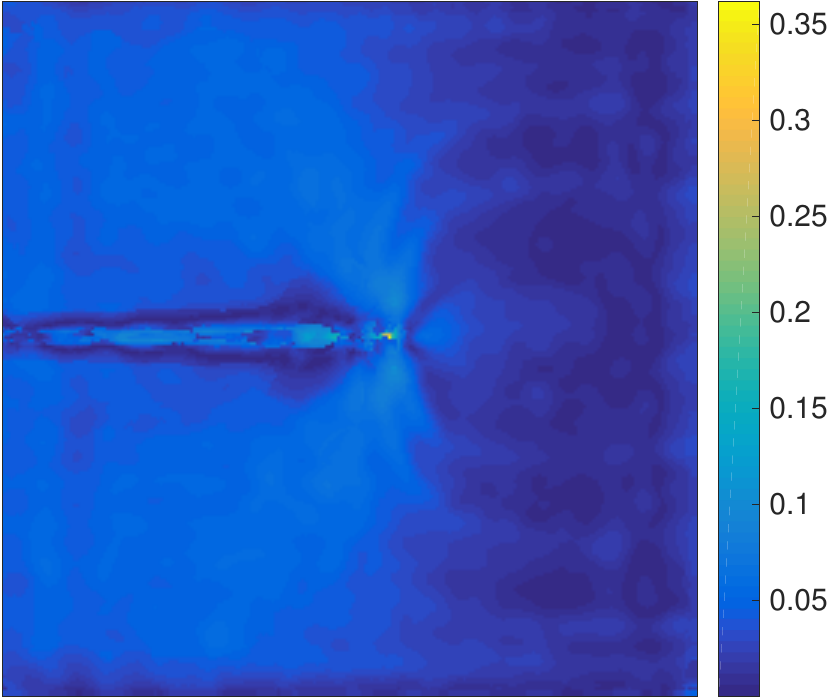}
     \end{tabular}
  \end{center}
  \caption{Numerical results of the example in Figure \ref{fig:ex1}. Left: crystal orientation. Middle: the volume distortion of the elastic deformation. Right: the difference in principle stretches of the elastic deformation.}
  \label{fig:rs1}
\end{figure}

\subsection{Crack  propagation}

% Materials may deform permanently under pressure, but how their atomic structure changes in this deformation process are still unclear \cite{Zimmerman}. 

% It is important to develop mathematical models and analysis tools for firm predictions of the evolution of atomic structures. Especially, it is fundamental to many disciplines that seek to quantify and predict the initiation and early stages of plasticity. Most useful metrics developed for the analysis of defects in materials are related to the deformation of the material (e.g. the J-integral \cite{Rice1975}, a path-independent integral that estimates the energetic driving force acting to propagate an existing defect in a continuous medium). 
Accurate and quantitative estimation of the elastic deformation in the
presence of plastic deformation and defects (cracks and dislocations)
is crucial for predicting a critical loading state that results in
crack growth and dislocation motion, understanding the dislocation
nucleation/emission from an originally pristine crack tip, quantifying
the effect that pre-existing dislocations surrounding a crack tip can
have on its driving force, and exploring how the crack-tip shape
impacts the dislocation emission process \cite{Zimmerman2}, just to
name a few.

Figure \ref{fig:ex1} (left) is an example of the atomic simulation at zero temperature to examine dislocation emission from a crack tip. We apply the proposed algorithms in this paper to identify the crystal pattern, the crack region, and estimate the elastic deformation of this example. As shown in Figure \ref{fig:ex1} (right), the crack estimation is accurate up to an error less than one atom. Hence, the proposed method is able to distinguish crystal configuration and vacancy area as in the toy examples in Section \ref{sec:sk}.  Figure \ref{fig:rs1} shows the estimation of the crystal orientation, the volume distortion and the difference in principle stretches of the elastic deformation. These results quantitatively show the interaction between the crack and the perfect crystal region through elastic deformation, especially around the crack-tip and in the direction of the crack propagation.

\subsection{Phase transition}

With the development of digital video microscopy \cite{melting2,melting5,Peng:2014}, researchers are able to observe the phase transition (between solid, liquid and gaseous states of matter) with single particle resolution. The behavior of important quantities at the atomic scale (e.g., defects, deformation, phase interfaces) presents interesting new questions and challenges for both theory and experiment in materials science. Research in this direction is limited by the difficulty of imaging and analyzing atomic crystals. In the aspect of data analysis, the challenge comes from the fact that it is difficult to track the atoms in the evolution process of phase transition, especially in the case of irregular patterns (e.g. liquid or gas states). The proposed algorithm in this paper is free of tracking atoms, offering a new means to examine fundamental questions in phase transition.

\subsubsection{Solid-liquid phase transition}

 \begin{figure}
  \begin{center}
  \begin{tabular}{cc}
        \includegraphics[height=1.2in]{./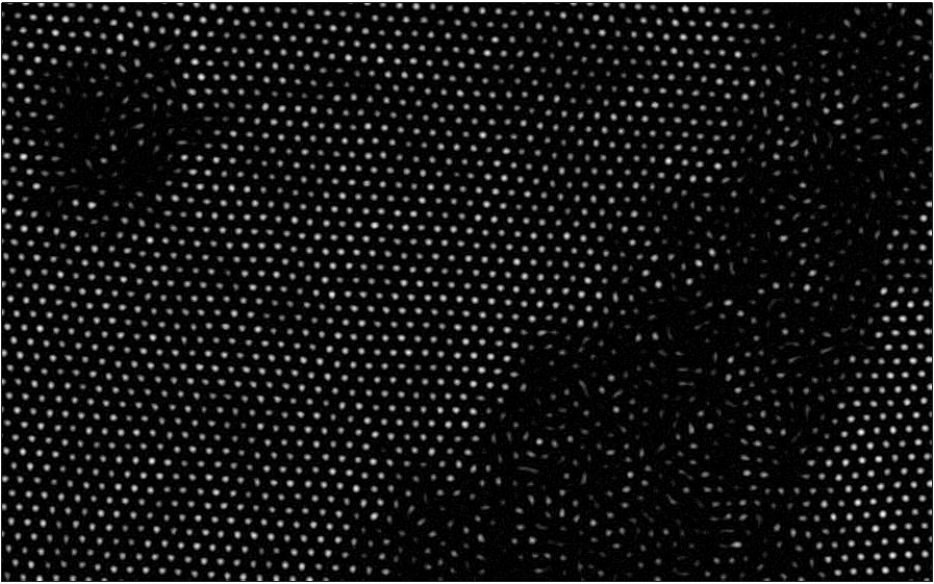}& \includegraphics[height=1.2in]{./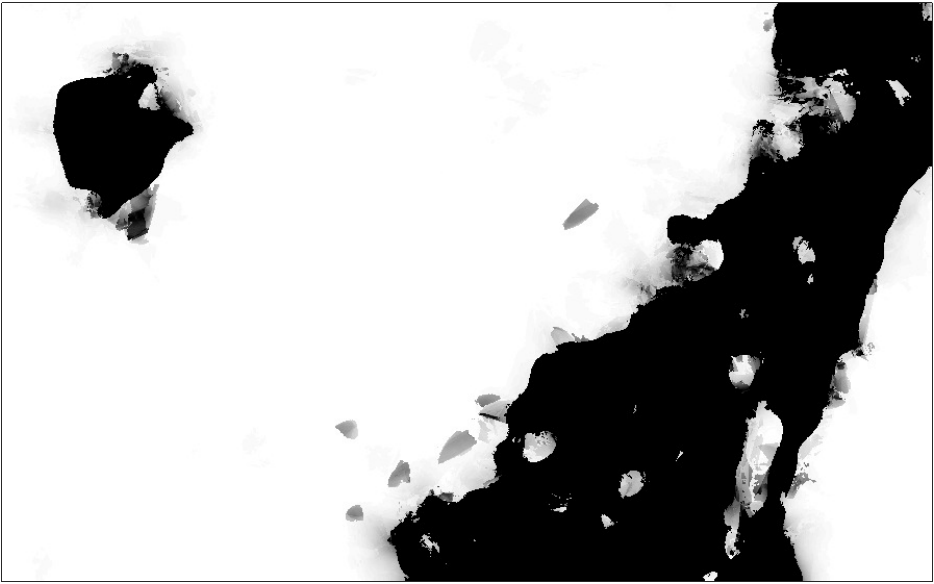} 
             \end{tabular}
  \end{center}
  \caption{Left: melting behavior of thin crystalline films \cite{melting5}. Courtesy of Yilong Han of Hong Kong University of Science and Technology, and Arjun Yodh of Pennsylvania State University  \cite{melting2}. Right: defect and liquid regions are indicated in black, while solid crystal regions and grain islands are in white.}
  \label{fig:ex6}
\end{figure}

 \begin{figure}
  \begin{center}
  \begin{tabular}{cc}
        \includegraphics[height=1.2in]{./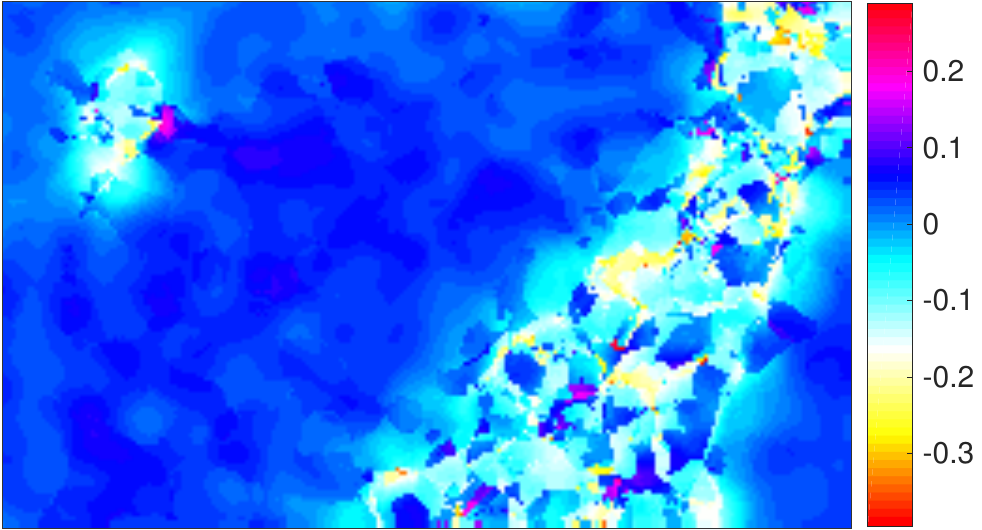} & \includegraphics[height=1.2in]{./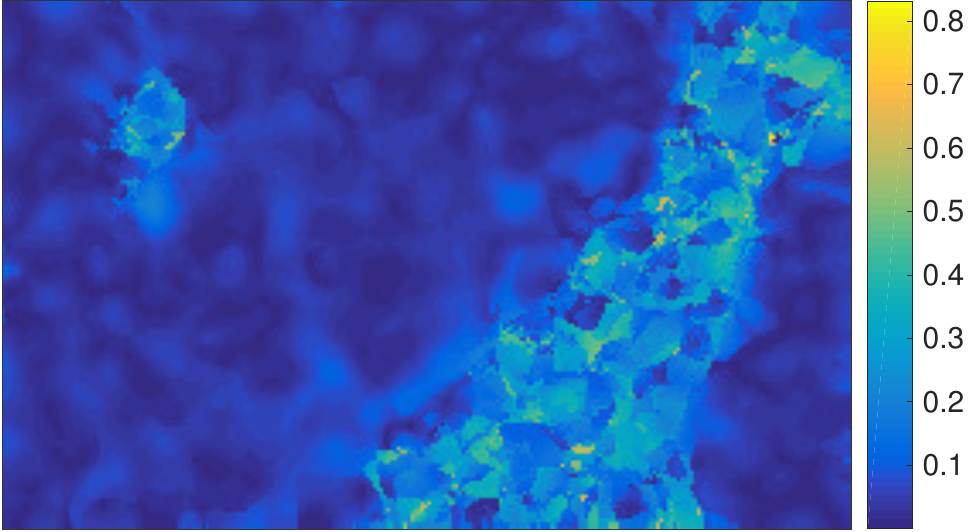} 
     \end{tabular}
  \end{center}
  \caption{Deformation analysis of the example in Figure \ref{fig:ex6} (left). Left: the volume distortion of the elastic deformation. Right: the difference in principle stretches of the elastic deformation. }
  \label{fig:rs6}
\end{figure}

Crystal melting (solid-liquid phase transition) is of considerable importance, but our understanding of the melting process in the atomic scale is far from complete. In particular, the kinetics of this phase transition have proved difficult to predict \cite{Ari}. Scientists have been trying to verify old conjectures and establish new theories in describing the crystal melting process at the atomic scale \cite{melting2,melting5}. 

Figure \ref{fig:ex6} shows an example of thin crystalline films during the melting process \cite{melting5}. We apply the proposed algorithms in this paper to identify the solid and liquid regions, and estimate the elastic deformation of this example. As shown in Figure \ref{fig:ex6} (right), the estimation of the interfaces between solid and liquid states are in line with the image in Figure \ref{fig:ex6} (left) by visual inspection. This result also matches the fact that capillary waves roughen the solid-liquid surface, but locally the intrinsic interface is sharply defined \cite{hernandez2009equilibrium}. Figure \ref{fig:rs6} shows the estimation of the crystal orientation, the volume distortion and the difference in principle stretches of the elastic deformation (only the results in the solid part are informative). These results quantitatively show the interaction between the solid and liquid parts. The elastic deformation of the solid crystal structure near the solid-liquid interfaces has a larger strain.

\subsubsection{Solid-solid phase transition}

It is well known that different geometric arrangements of the same atom or crystal phase can produce materials with different properties. Solid-solid phase transitions can significantly change the physical properties of crystalline solids. A spectacular example of this effect is coal and diamond. Scientists have been trying to identify the right circumstances under which the phase transition occurs, to understand the mechanisms that facilitate phase transitions, and  to control the transition process \cite{Peng:2014}.

 \begin{figure}
  \begin{center}
  \begin{tabular}{ccc}
        \includegraphics[height=1.2in]{./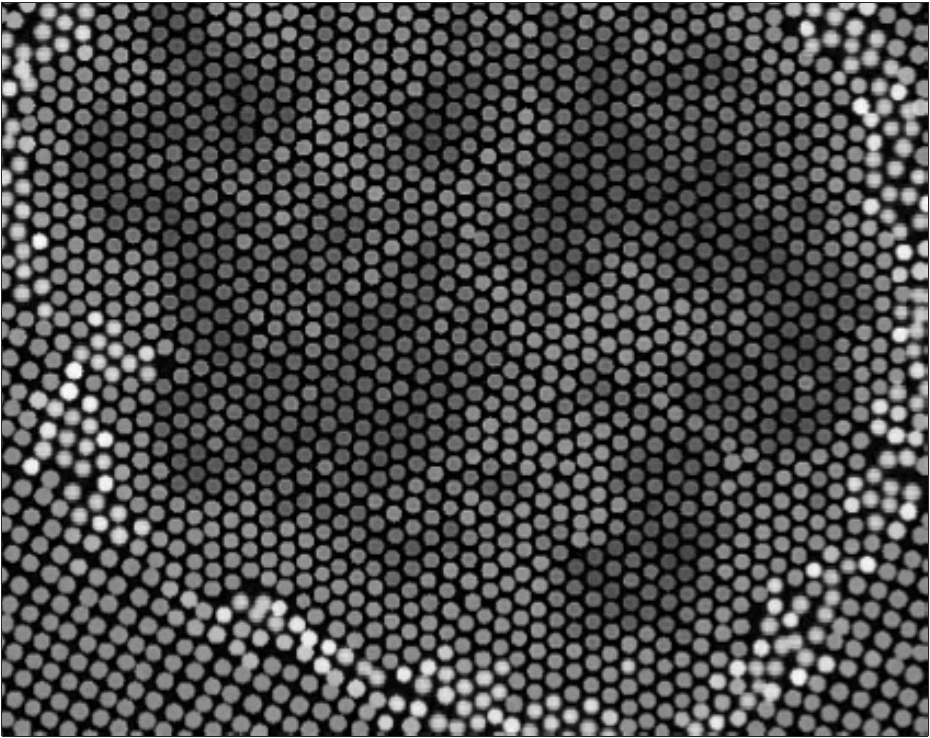}& \includegraphics[height=1.2in]{./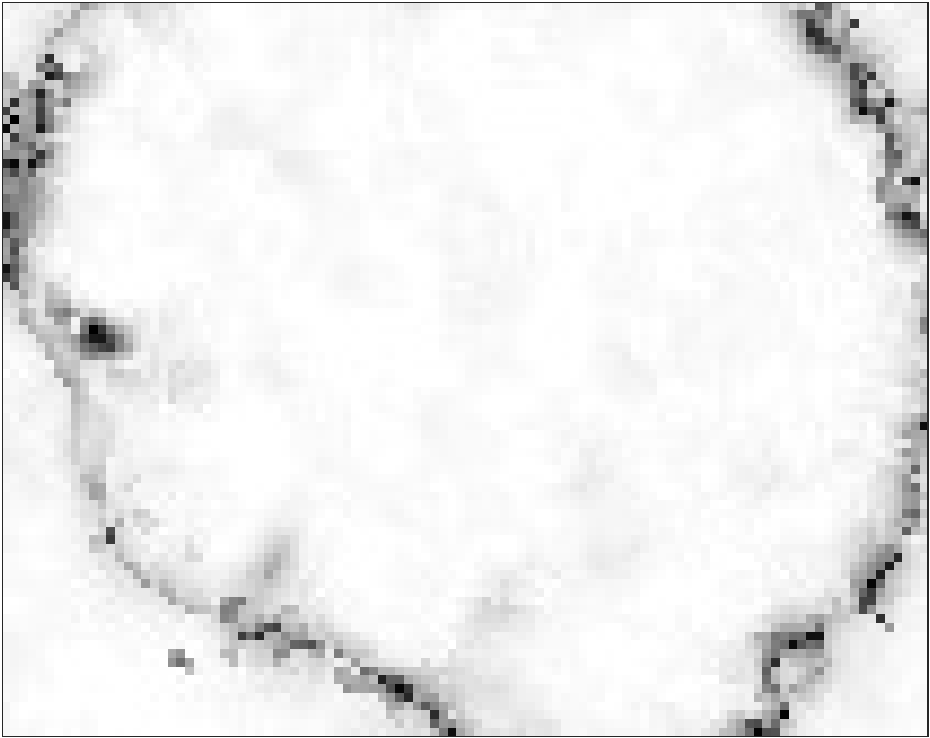} & \includegraphics[height=1.2in]{./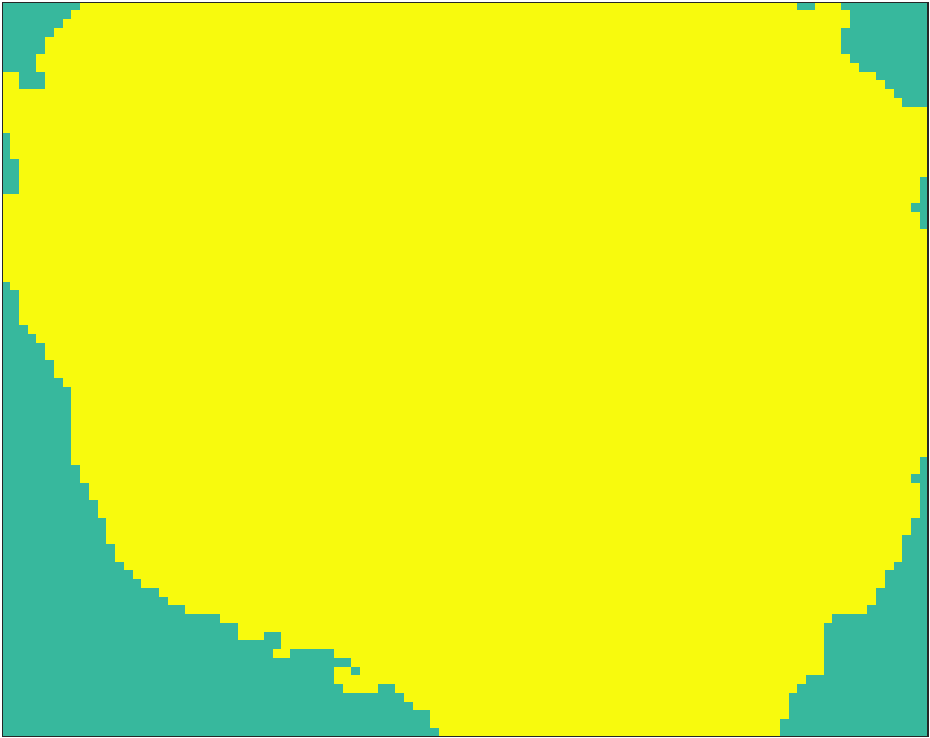} 
     \end{tabular}
  \end{center}
  \caption{Left: a microscopic image showing the simulation of nucleation mechanism in solid-solid phase transitions \cite{Peng:2014}. Courtesy of Yilong Han of Hong Kong University of Science and Technology, and Arjun Yodh of Pennsylvania State University. Middle: defects and interfaces are visualized in black. Right: identified hexagonal lattice in yellow and square lattice in green.}
  \label{fig:ex9}
\end{figure}

Figure \ref{fig:ex9} (top-left) shows the simulation of nucleation mechanism in solid-solid phase transitions \cite{Peng:2014}. In this example, we focus on identifying different crystal patterns for the sake of shortening the manuscript. We apply the proposed algorithms in this paper to identify two solid crystal patterns, and defects. As shown in Figure \ref{fig:ex9} (right two figures), the estimation of the interface between two different solid crystal patterns agrees with the image in Figure \ref{fig:ex9} (top-left) by visual inspection.

\subsection{Self-assembly}
A disordered system of pre-existing components can form an organized structure or pattern through local interactions among the components themselves without external direction. This evolution process is called self-assembly. Self-assembly is an attractive approach for fabricating complex synthetic structures with specific functions \cite{assembly1}. The most common approach to design a self-assembly strategy is by trial and error, where various synthesis methods are studied. 
%The nature has also provided numerous examples for bottom-up synthesis of inorganic crystalline and amorphous materials. 
To discover new self-assembly strategies to make artificial materials with desired mechanical and biological properties, it is crucial to understand the detailed dynamics of formation in self-assembly.

% \begin{figure}
%  \begin{center}
%  \begin{tabular}{ccc}
%        \includegraphics[height=1in]{./}& \includegraphics[height=1in]{./}& \includegraphics[height=1in]{./} 
%     \end{tabular}
%  \end{center}
%  \caption{Left: an optical microscopy image of electric field mediated colloidal crystallization in fluid \cite{Juarez:2012}. Courtesy of Edwards et al.. Middle: fluid area and defects are visualized in black, while the perfect crystal region is in white. Right: the identified fluid region is in blue and the crystal region is in yellow.}
%  \label{fig:ex3}
%\end{figure}

 \begin{figure}
  \begin{center}
  \begin{tabular}{cc}
        \includegraphics[height=1.7in]{./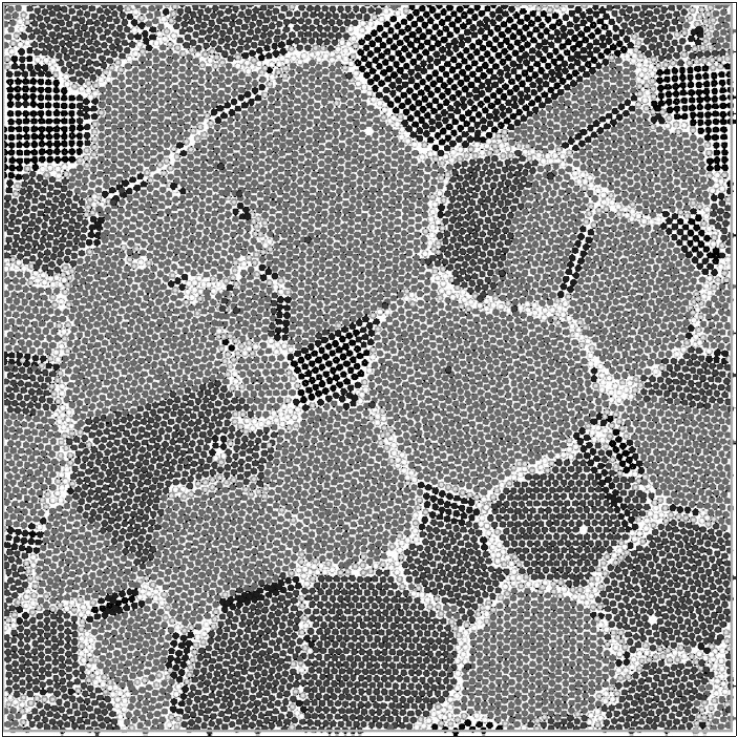}& \includegraphics[height=1.7in]{./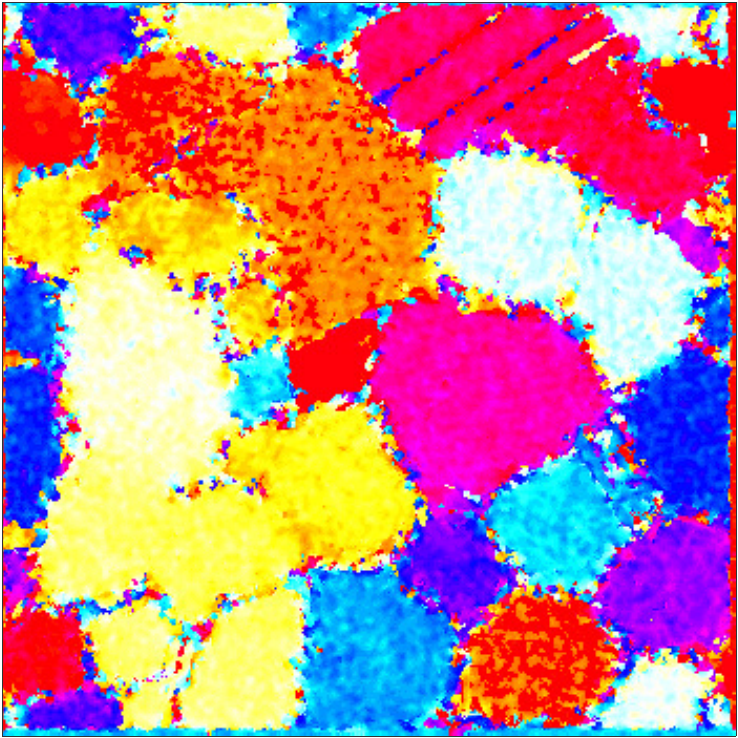} \\
        \includegraphics[height=1.7in]{./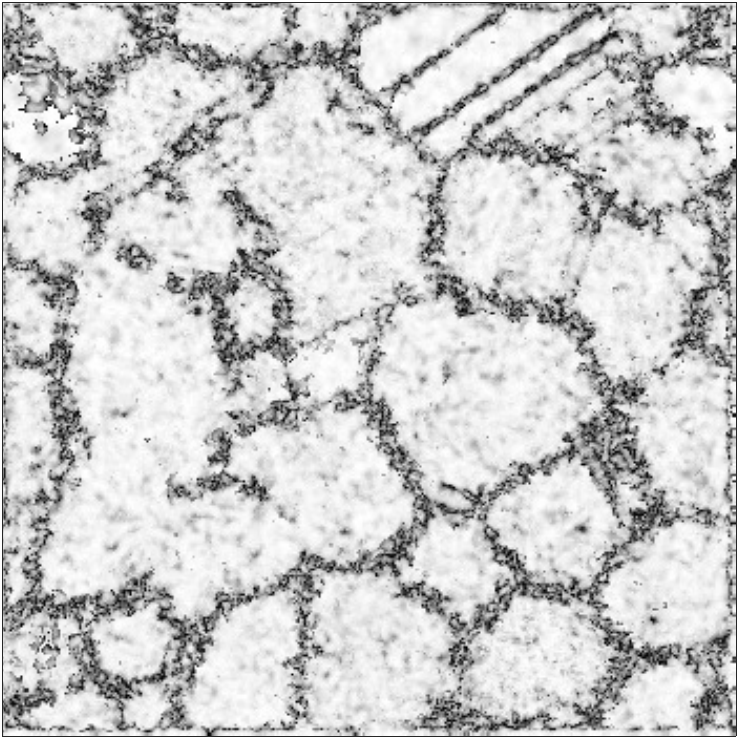}& \includegraphics[height=1.7in]{./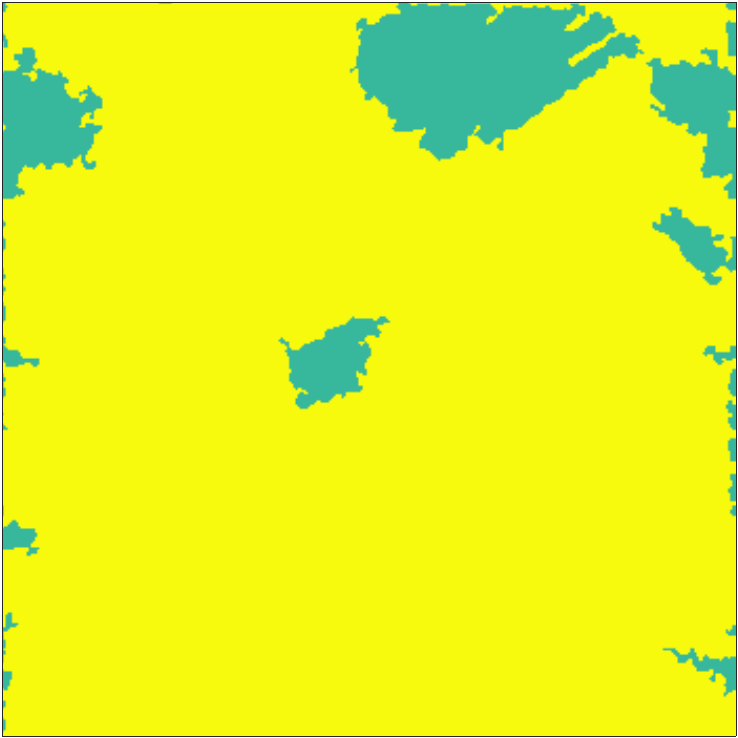} 
     \end{tabular}
  \end{center}
  \caption{Top-left: an example of the crystallization dynamics of sedimenting hard spheres in large systems \cite{Marechal:2011}. Courtesy of Marjolein Dijkstra at Utrecht University. Top-right: different colors encode different crystal orientations. Bottom-left: identified defects in black. Bottom-right: identified square crystal lattice in green and hexagonal crystal lattice in yellow.}
  \label{fig:ex10}
\end{figure}

%Figure \ref{fig:ex3} shows an example of Brownian dynamic simulations to study grain boundary motion in quasi-2D colloidal bicrystals. Quantifying the dynamics of stochastic grain boundary formation and motion helps to reveal the mechanism of assembling nano- and micro- sized colloidal components. We apply the algorithms in this paper to identify the interface between crystal components and the fluid region, and the defects inside the crystal area. These results match the original image in Figure \ref{fig:ex3} (left).

Figure \ref{fig:ex10} shows the final configuration of the crystallization dynamics of sedimenting hard spheres \cite{Marechal:2011}. The crystallization process is a purely entropy-driven phase transition from a disordered fluid phase to face-centered-cubic crystal structures and hexagonal-close-packed structures. We apply the proposed method in this paper to analyze the crystal image in Figure \ref{fig:ex10} (top-left). Numerical results of crystal orientations, defects, and types of crystal patterns are visualized in Figure \ref{fig:ex10} top-right, bottom-left, and bottom right, respectively. Although the proposed method would misclassify thin grains with diameter less than four atoms, due to the limitation by Heisenberg uncertainty principle since the proposed method is based on phase space analysis, crystal types of larger grains can still be correctly identified. %Once all types of crystal configuration have been identify, they can serve as reference configuration and one can examine the crystal image again using patch-wise comparison. This method merely 

\subsection{Out-of-focus issue}

In the last numerical example, we test the proposed crystal analysis method on an image with the out-of-focus problem, where it is difficult to determine the positions of atoms and even gain boundaries by visual inspection (see Figure \ref{fig:exout} (left) for examples). Some structured noise in form of dark disks also increases the difficulty of image analysis. We apply the proposed method in this paper to analyze the example in Figure \ref{fig:exout} (left), show the defect estimation in the middle of Figure \ref{fig:exout}, and visualize the crystal orientation estimation in right panel of Figure \ref{fig:exout}. Numerical results show that the proposed method is stable to the out-of-focus problem and the structured noise. For example, there is a dark disk in the middle of the crystal image but the crystal configuration in terms of atom positionsl smoothly changes; correspondingly, the proposed identifies no grain boundary in this area and shows that the orientation changes smoothly. The crystal configuration starting from the dark disk and going towards the top-left corner changes from white atoms to black atoms, and finally white atoms again. The proposed method identifies no grain boundary and visualizes the smooth change of crystal orientation. In the middle-left part, though the crystal configuration is not clear, the proposed method is still able to identify a small piece of grain and shows its boundary.

 \begin{figure}
  \begin{center}
  \begin{tabular}{ccc}
                \includegraphics[height=1.3in]{./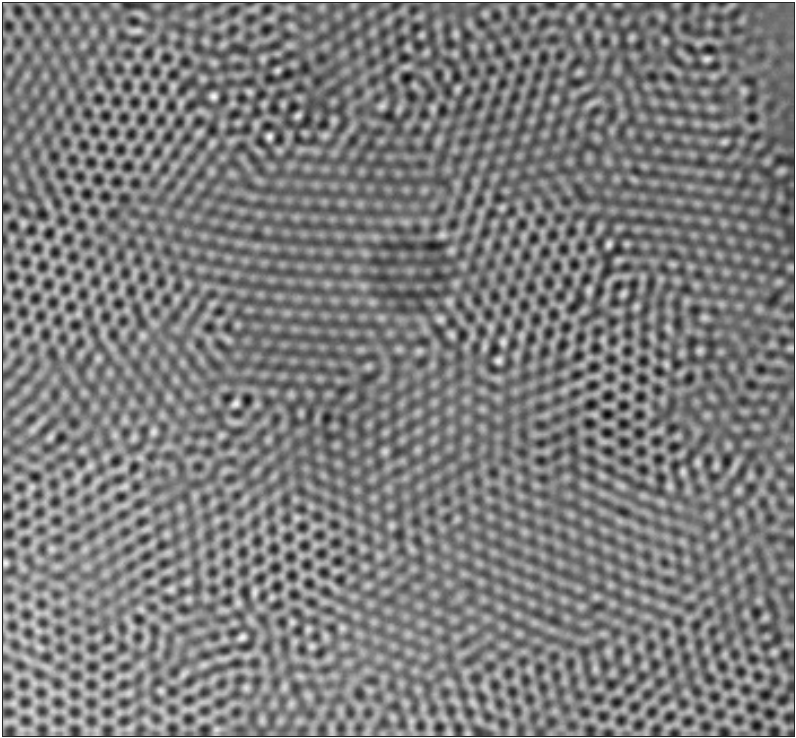}&\includegraphics[height=1.3in]{./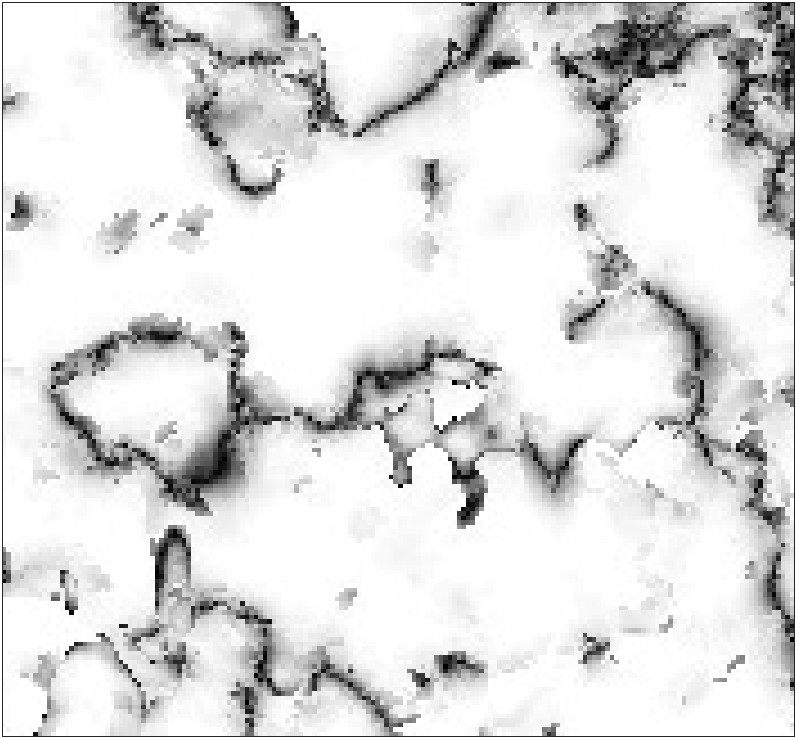}& \includegraphics[height=1.3in]{./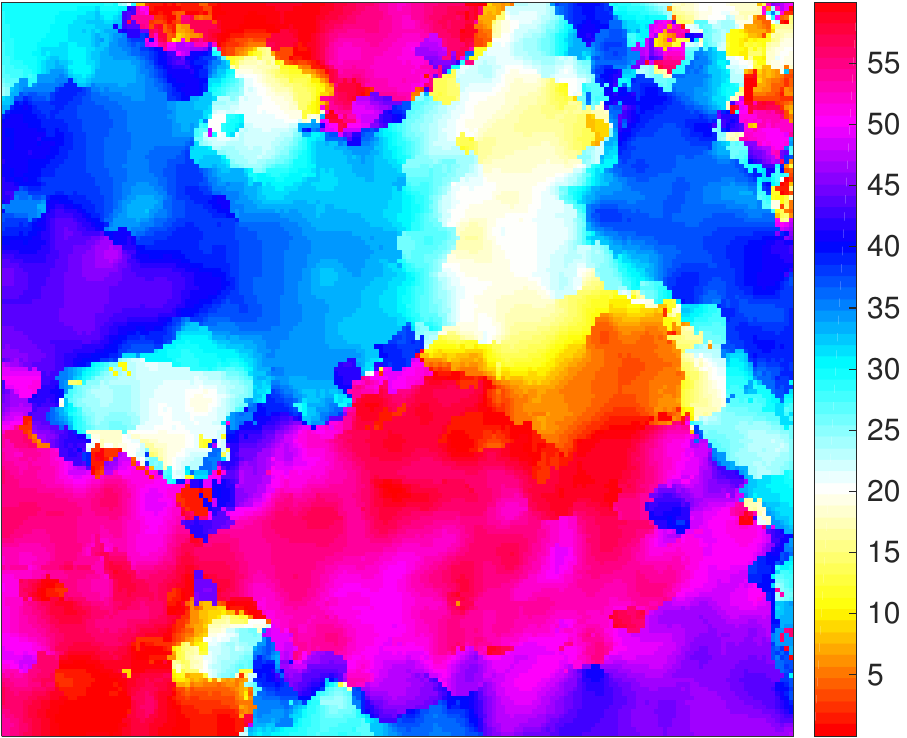}  
     \end{tabular}
  \end{center}
  \caption{Left: an example of atomic
  resolution crystal images with the out-of-focus problem.
  %; the whole image is on top, and its middle part is zoomed-in on the bottom. 
  Courtesy of Yilong Han of Hong Kong University of Science and Technology. Middle: identified defects. Right: crystal orientation estimation.}
  \label{fig:exout}
\end{figure}

\section{Conclusion}
\label{sec:conclusion}

We propose a tool set for automatic and quantitative characterization
of complex microscopic crystal images based on the phase-space
sketching and synchrosqueezed transforms. This method encodes
microscopic crystal images into a translation, rotation, illumination,
scale, and small deformation invariant representation. We have applied
this method to analyze various atomic resolution crystal images in
materials science, e.g., crack propagation images, phase transition
images, self-assembly images. Let us mention two possible future
directions: 1) it is worth exploring other advanced image segmentation
algorithms to improve the performance of the proposed method at grain
boundaries; 2) it is useful to extend the current framework for
three-dimensional atomic resolution crystal image analysis.

\bibliographystyle{abbrv}
\bibliography{ref}

\begin{thebibliography}{10}

\bibitem{Berkels:08}
B.~Berkels, A.~R{\"a}tz, M.~Rumpf, and A.~Voigt.
\newblock Extracting grain boundaries and macroscopic deformations from images
  on atomic scale.
\newblock {\em J. Sci. Comput.}, 35:1--23, 2008.

\bibitem{6522407}
J.~Bruna and S.~Mallat.
\newblock Invariant scattering convolution networks.
\newblock {\em IEEE Transactions on Pattern Analysis and Machine Intelligence},
  35(8):1872--1886, Aug 2013.

\bibitem{ElseyWirth:MMS}
M.~Elsey and B.~Wirth.
\newblock Fast automated detection of crystal distortion and crystal defects in
  polycrystal images.
\newblock {\em Multi. Model. Simul.}, 12:1--24, 2014.

\bibitem{Guo2010706}
Z.~Guo, L.~Zhang, and D.~Zhang.
\newblock Rotation invariant texture classification using {LBP} variance
  ({LBPV}) with global matching.
\newblock {\em Pattern Recognition}, 43(3):706 -- 719, 2010.

\bibitem{743859}
G.~M. Haley and B.~S. Manjunath.
\newblock Rotation-invariant texture classification using a complete
  space-frequency model.
\newblock {\em IEEE Transactions on Image Processing}, 8(2):255--269, Feb 1999.

\bibitem{assembly1}
B.~Hatton, L.~Mishchenko, S.~Davis, K.~H. Sandhage, and J.~Aizenberg.
\newblock Assembly of large-area, highly ordered, crack-free inverse opal
  films.
\newblock {\em Proceedings of the National Academy of Sciences},
  107(23):10354--10359, 2010.

\bibitem{hernandez2009equilibrium}
J.~Hern{\'a}ndez-Guzm{\'a}n and E.~R. Weeks.
\newblock The equilibrium intrinsic crystal--liquid interface of colloids.
\newblock {\em Proceedings of the National Academy of Sciences},
  106(36):15198--15202, 2009.

\bibitem{Juarez:2012}
J.~J. Juarez, P.~P. Mathai, J.~A. Liddle, and M.~A. Bevan.
\newblock Multiple electrokinetic actuators for feedback control of colloidal
  crystal size.
\newblock {\em Lab Chip}, 12:4063--4070, 2012.

\bibitem{4767811}
R.~L. Kashyap and A.~Khotanzad.
\newblock A model-based method for rotation invariant texture classification.
\newblock {\em IEEE Transactions on Pattern Analysis and Machine Intelligence},
  PAMI-8(4):472--481, July 1986.

\bibitem{Ari}
A.~Laaksonen, V.~Talanquer, and D.~W. Oxtoby.
\newblock Nucleation: Measurements, theory, and atmospheric applications.
\newblock {\em Annual Review of Physical Chemistry}, 46(1):489--524, 1995.
\newblock PMID: 24341941.

\bibitem{Pooling}
D.~Laptev, N.~Savinov, J.~M. Buhmann, and M.~Pollefeys.
\newblock {TI-POOLING:} transformation-invariant pooling for feature learning
  in convolutional neural networks.
\newblock In {\em {CVPR}}, pages 289--297. {IEEE} Computer Society, 2016.

\bibitem{Lazebnik:2005:STR:1070616.1070813}
S.~Lazebnik, C.~Schmid, and J.~Ponce.
\newblock A sparse texture representation using local affine regions.
\newblock {\em IEEE Trans. Pattern Anal. Mach. Intell.}, 27(8):1265--1278, Aug.
  2005.

\bibitem{deep}
Y.~LeCun, K.~Kavukcuoglu, and C.~Farabet.
\newblock {\em Convolutional networks and applications in vision}, pages
  253--256.
\newblock IEEE, 2010.

\bibitem{LuWirthYang:2016}
J.~Lu, B.~Wirth, and H.~Yang.
\newblock Combining 2{D} synchrosqueezed wave packet transform with
  optimization for crystal image analysis.
\newblock {\em Journal of the Mechanics and Physics of Solids}, pages~--, 2016.

\bibitem{CPA:CPA21413}
S.~Mallat.
\newblock Group invariant scattering.
\newblock {\em Communications on Pure and Applied Mathematics},
  65(10):1331--1398, 2012.

\bibitem{Marechal:2011}
M.~Marechal, M.~Hermes, and M.~Dijkstra.
\newblock Stacking in sediments of colloidal hard spheres.
\newblock {\em The Journal of Chemical Physics}, 135(3), 2011.

\bibitem{7477729}
N.~Mevenkamp and B.~Berkels.
\newblock Variational multi-phase segmentation using high-dimensional local
  features.
\newblock In {\em 2016 IEEE Winter Conference on Applications of Computer
  Vision (WACV)}, pages 1--9, March 2016.

\bibitem{Ng:2001}
A.~Y. Ng, M.~I. Jordan, and Y.~Weiss.
\newblock On spectral clustering: Analysis and an algorithm.
\newblock In {\em Proceedings of the 14th International Conference on Neural
  Information Processing Systems: Natural and Synthetic}, NIPS'01, pages
  849--856, Cambridge, MA, USA, 2001. MIT Press.

\bibitem{melting2}
P.~P{\`a}mies.
\newblock Defect-less melting.
\newblock {\em Nature Materials}, 11(910), 2012.

\bibitem{Peng:2014}
Y.~Peng, F.~Wang, Z.~Wang, A.~M. Alsayed, Z.~Zhang, A.~G. Yodh, and Y.~Han.
\newblock {Two-step nucleation mechanism in solid-solid phase transitions}.
\newblock {\em Nature Materials}, advance online publication, Sept. 2014.

\bibitem{melting5}
Y.~Peng, Z.~Wang, A.~M. Alsayed, A.~G. Yodh, and Y.~Han.
\newblock Melting of colloidal crystal films.
\newblock {\em Phys. Rev. Lett.}, 104:205703, May 2010.

\bibitem{crystalization1}
Y.~Shibuta, K.~Oguchi, T.~Takaki, and M.~Ohno.
\newblock Homogeneous nucleation and microstructure evolution in million-atom
  molecular dynamics simulation.
\newblock {\em Scientific Reports}, 5(13534), 2015/08/27/online.

\bibitem{Sifre:2013}
L.~Sifre and S.~Mallat.
\newblock Rotation, scaling and deformation invariant scattering for texture
  discrimination.
\newblock In {\em 2013 IEEE Conference on Computer Vision and Pattern
  Recognition}, pages 1233--1240, June 2013.

\bibitem{SingerSinger:06}
H.~M. Singer and I.~Singer.
\newblock Analysis and visualization of multiply oriented lattice structures by
  a two-dimensional continuous wavelet transform.
\newblock {\em Phys. Rev. E}, 74:031103, 2006.

\bibitem{subatom}
G.~Stone, C.~Ophus, T.~Birol, J.~Ciston, C.-H. Lee, K.~Wang, C.~J. Fennie,
  D.~G. Schlom, N.~Alem, and V.~Gopalan.
\newblock Atomic scale imaging of competing polar states in a ruddlesden-popper
  layered oxide.
\newblock {\em Nature Communications}, 7, 2016.

\bibitem{Strekalovskiy:11}
E.~Strekalovskiy and D.~Cremers.
\newblock Total variation for cyclic structures: Convex relaxation and
  efficient minimization.
\newblock In {\em {Computer Vision and Pattern Recognition (CVPR), 2011 IEEE
  Conference on}}, pages 1905--1911, 2011.

\bibitem{StukowskiAlbe2:10}
A.~Stukowski and K.~Albe.
\newblock Extracting dislocations and non-dislocation crystal defects from
  atomistic simulation data.
\newblock {\em Model. Simul. Mater. Sci. Eng.}, 18:085001, 2010.

\bibitem{PhysRevB.75.064205}
B.~Unal, V.~Fourn\'ee, K.~J. Schnitzenbaumer, C.~Ghosh, C.~J. Jenks, A.~R.
  Ross, T.~A. Lograsso, J.~W. Evans, and P.~A. Thiel.
\newblock Nucleation and growth of ag islands on fivefold al-pd-mn quasicrystal
  surfaces: Dependence of island density on temperature and flux.
\newblock {\em Phys. Rev. B}, 75:064205, Feb 2007.

\bibitem{Varma:2002:CIM:645317.649321}
M.~Varma and A.~Zisserman.
\newblock Classifying images of materials: Achieving viewpoint and illumination
  independence.
\newblock In {\em Proceedings of the 7th European Conference on Computer
  Vision-Part III}, ECCV '02, pages 255--271, London, UK, UK, 2002.
  Springer-Verlag.

\bibitem{536891}
W.-R. Wu and S.-C. Wei.
\newblock Rotation and gray-scale transform-invariant texture classification
  using spiral resampling, subband decomposition, and hidden markov model.
\newblock {\em IEEE Transactions on Image Processing}, 5(10):1423--1434, Oct
  1996.

\bibitem{RDBR}
J.~Xu, H.~Yang, and I.~Daubechies.
\newblock {Recursive Diffeomorphism-Based Regression for Shape Functions}.
\newblock {\em SIAM Journal on Mathematical Analysis}, to appear.

\bibitem{Xu:2006:PIT:1153171.1153679}
Y.~Xu, H.~Ji, and C.~Fermuller.
\newblock A projective invariant for textures.
\newblock In {\em Proceedings of the 2006 IEEE Computer Society Conference on
  Computer Vision and Pattern Recognition - Volume 2}, CVPR '06, pages
  1932--1939, Washington, DC, USA, 2006. IEEE Computer Society.

\bibitem{MMD}
H.~Yang.
\newblock Multiresolution mode decomposition for adaptive time series analysis.
\newblock {\em arXiv:1709.06880}, 2017.

\bibitem{Robustness}
H.~Yang.
\newblock Statistical analysis of synchrosqueezed transforms.
\newblock {\em Applied and Computational Harmonic Analysis}, 2017.

\bibitem{Canvas}
H.~Yang, J.~Lu, W.~Brown, I.~Daubechies, and L.~Ying.
\newblock Quantitative canvas weave analysis using 2-{D} synchrosqueezed
  transforms: Application of time-frequency analysis to art investigation.
\newblock {\em Signal Processing Magazine, IEEE}, 32(4):55--63, July 2015.

\bibitem{YangLuYing:2015}
H.~Yang, J.~Lu, and L.~Ying.
\newblock Crystal image analysis using {2D} synchrosqueezed transforms.
\newblock {\em Multiscale Modeling \& Simulation}, 13(4):1542--1572, 2015.

\bibitem{YangYing:2013}
H.~Yang and L.~Ying.
\newblock Synchrosqueezed wave packet transform for 2{D} mode decomposition.
\newblock {\em SIAM J. Imaging Sci.}, 6:1979--2009, 2013.

\bibitem{YangYing:2014}
H.~Yang and L.~Ying.
\newblock Synchrosqueezed curvelet transform for two-dimensional mode
  decomposition.
\newblock {\em SIAM J. Math. Anal.}, 46(3):2052--2083, 2014.

\bibitem{Zhang2003657}
J.~Zhang and T.~Tan.
\newblock Affine invariant classification and retrieval of texture images.
\newblock {\em Pattern Recognition}, 36(3):657 -- 664, 2003.

\bibitem{Zimmerman2}
J.~A. Zimmerman and R.~E. Jones.
\newblock The application of an atomistic $j$-integral to a ductile crack.
\newblock {\em Journal of Physics: Condensed Matter}, 25(15):155402, 2013.

\end{thebibliography}

\end{document}